\documentclass[12pt]{article}
\usepackage{graphicx}
\usepackage{enumerate}
\usepackage{natbib}
\usepackage{url} 
\usepackage{bm}
\usepackage[caption=false]{subfig}
\usepackage{rotating}
\usepackage{xcolor}
\usepackage{placeins}
\usepackage[
    colorlinks=true,      
    linkcolor=red,       
    citecolor=blue,       
    urlcolor=blue,        
    breaklinks=true,      
]{hyperref}

\usepackage{mathtools}
\usepackage{xr}

\usepackage{verbatim}

\usepackage[export]{adjustbox}
\usepackage{array}

\usepackage{enumitem}
\usepackage{setspace}
\usepackage{multirow, booktabs}
\usepackage{amsmath, amsthm, amssymb, amsfonts}

\usepackage{color}

\newcommand{\rev}[1]{#1}
\usepackage[T1]{fontenc}
\usepackage{mathtools, nccmath}
\usepackage{tabularx}

\usepackage[ruled, vlined]{algorithm2e}

\usepackage{algpseudocode}

\usepackage{float}
\usepackage{caption}

\usepackage{comment}
\usepackage{pdflscape}
\usepackage{pifont}

\SetKwInOut{Input}{Input}
\SetKwInOut{Output}{Output}

\allowdisplaybreaks

\setlist[enumerate]{itemsep=0mm}

\DeclareMathAlphabet\mathbfcal{OMS}{cmsy}{b}{n}

\DeclarePairedDelimiter{\nint}\lfloor\rceil

\newcommand\numberthis{\addtocounter{equation}{1}\tag{\theequation}}
\def\code#1{\texttt{#1}}
\allowdisplaybreaks

\makeatletter
\newcommand{\vast}{\bBigg@{4}}
\newcommand{\Vast}{\bBigg@{10}}
\makeatother

\newcommand{\bmx}{\begin{bmatrix}}
\newcommand{\emx}{\end{bmatrix}}

\def\ve{\varepsilon}
\def\bve{\boldsymbol{\varepsilon}}

\def\b{\boldsymbol}

\def\CS#1{\texttt{\textbackslash#1}}

\DeclarePairedDelimiter{\sfloor}{\big\lfloor}{\big\rfloor}
\DeclarePairedDelimiter{\sceil}{\big\lceil}{\big\rceil}

\DeclarePairedDelimiter{\ceil}{\Big\lceil}{\Big\rceil}
\DeclarePairedDelimiter{\floor}{\Big\lfloor}{\Big\rfloor}

\DeclareMathOperator*{\argmin}{arg\,min}

\newcommand{\dbtilde}[1]{\tilde{\tilde{#1}}}

\newcommand{\lvertiii}[1]{{\left\vert\kern-0.25ex\left\vert\kern-0.25ex\left\vert #1 
    \right\vert\kern-0.25ex\right\vert\kern-0.25ex\right\vert}}
\newcommand{\vertiii}[1]{{\vert\kern-0.25ex\vert\kern-0.25ex\vert #1 
    \vert\kern-0.25ex\vert\kern-0.25ex\vert}}

\theoremstyle{plain}
\newtheorem{thm}{Theorem}

\newtheorem{lem}{Lemma}

\theoremstyle{definition}

\theoremstyle{remark}

\newtheorem{Thm}{\bf Theorem}[section]

\newif\ifJASA
\JASAfalse

\newif\ifnotblind
\notblindtrue 

\ifJASA
\else

\fi

\def\spacingset#1{\renewcommand{\baselinestretch}%
{#1}\small\normalsize} \spacingset{1}

{
  \title{\bf High-dimensional sparsity-adaptive multiple change-point detection}
\ifnotblind
  \author{  Hyeyoung Maeng\\
  Department of Mathematical Sciences, Durham University.\\
  Department of Statistics, Ewha Womans University.\footnote{Second affiliation effective 1 September 2026.}
  \\~\\ 
  Tengyao Wang\\
  Department of Statistics, London School of Economics.
  \\~\\
  Piotr Fryzlewicz\\
  Department of Statistics, London School of Economics.
  }
\fi
}

\begin{document}

\maketitle


\begin{abstract}
We introduce a method for detecting multiple change-points in the mean of a high-dimensional data sequence. Unlike existing top-down (i.e.\ divisive) algorithms, we adopt a bottom-up (i.e. agglomerative) approach, whereby we iteratively merge \rev{neighboring} segments of data starting from the finest level. This is particularly useful for signals with frequent change-points, since local evidence is assessed before segments are combined into coarser summaries. We compute $L_2$- and $L_\infty$-aggregated test statistics of \rev{neighboring} segments and combine the information from their respective ranks, which makes the method adaptive in handling different degrees of change-point sparsity. We show the consistency of the estimated number and locations of change-points under both iid Gaussian and possibly dependent and/or non-Gaussian noise. The practicality of our approach is demonstrated through simulations and a real data example involving the UK House Price Index data.
\ifnotblind
Our methodology is implemented in the R package \code{BUHDA}, available at \url{https://github.com/hmaeng/BUHDA}.
\fi
\end{abstract}

\noindent%
{\it Keywords: change-point, high-dimensional setting, bottom-up approach, data-adaptivity}  

\ifJASA
\spacingset{1.45} 
\fi


\section{Introduction}\label{sec:intro}

High-dimensional data arise in many different fields including finance, environmental science, biology, economics and astronomy. When vast quantities of data are collected over time, the data-generating mechanism may experience change; this paper focuses on the situation in which such changes are of an abrupt nature. 
Recent examples of such applications include detection of exoplanets from light curve data \citep{fisch2022linear}, detecting the most recent change-point in a telecommunications network  \citep{bardwell2019most}, detecting forest changes using satellite images \citep{morresi2024high}, detecting changes in functional magnetic resonance imaging (fMRI) data for a set of subjects \citep{cribben2017estimating, li2019change}, detecting price inflation from UK retail price indices \citep{groen2013multivariate} and detecting changes in the incidence of terrorism \citep{tickle2021computationally}.

In this paper, we consider a panel of $p$ univariate data sequences recorded over $n$ time steps, where the dimension $p$ and the sequence length $n$ may both be large, and $p$ may be comparable with, or even larger than, $n$. We study the following data generation model for high-dimensional panel data,
\begin{equation} \label{model}
X_{i, t} = f_{i, t} + \ve_{i, t},\quad  i=1, \ldots, p, \quad t=1, \ldots, n,
\end{equation}
where $\b f_{i,\cdot} = (f_{i, 1}, \ldots, f_{i, n})^\top$ is the underlying signal vector of the $i$th component time series $\b X_{i,\cdot} = (X_{i, 1}, \ldots, X_{i, n})^\top$. 
Initially, we consider the case that the innovations $\varepsilon_{i,t} \sim N(0,\sigma^2)$ are independent across $i$ and $t$. The assumption of temporal independence is relaxed in Appendix \ref{appendixB}, in which we consider possibly dependent and/or non-Gaussian noise. 
We keep the assumption of cross-sectional independence throughout the paper and show in Section \ref{sec4.5} how PCA can be used to deal with possible cross-sectional dependence in practical situations.
We assume that change-points in the signal vectors $\{\b f_{i,\cdot}\}_{i=1}^p$, if present, have locations within the set $\{\eta_1, \eta_2, \ldots, \eta_N\}$, where
\begin{equation} \label{e5.1.2}
0 < \eta_1 < \eta_2 < \ldots < \eta_N < n.
\end{equation}
We set $\eta_0=0$ and $\eta_{N+1}=n$ by convention. The value of $N$ is unknown and can grow with $n$. 
At each change-point $\eta_\ell$, we assume that there exists at least one coordinate (e.g. $k$th data sequence) at which $f_{k, \eta_\ell}$ and $f_{k, \eta_{\ell}+1}$ differ, and  
 $\{\b f_{i,\cdot}\}_{i=1}^p$ is otherwise constant between any adjacent change-points, in the following way:
\begin{align*} \numberthis \label{e5.1.3}
& \Omega_\ell=\big\{i\in\{1,\ldots,p\}: \big|f_{i, \eta_{\ell}+1}-f_{i, \eta_{\ell}} \big| \neq 0 \big\} \neq \emptyset \;\text{ for } \; \ell=1, \ldots, N \\
& \text{ where $f_{i, t}= \theta_{i, \ell}$ for $t\in[\eta_{\ell-1}+1, \eta_{\ell}],  \; \ell=1, \ldots, N+1, \; i=1,\ldots, p$}.
\end{align*}
For each change-point $\eta_\ell$, change can occur in a dense subset of the signal components (e.g. all or most components of $\{f_{i,\eta_\ell}\}_{i=1, \ldots, p}$) or only in a sparse subset of the components, where the level of sparsity is described by $\mathcal{S}_\ell \coloneqq |\Omega_\ell|$.

Numerous methods for high-dimensional change-point analysis have been proposed, see e.g.\ \citet{bai2010common}, \citet{zhang2010detecting}, \citet{horvath2012change}, \citet{wang2022inference}, \citet{jirak2015uniform}, \citet{yu2021finite}, \citet{cho2015multiple}, \citet{cho2016change}, \citet{wang2018high}, \citet{chen2022inference}. Explicitly considering adaptivity to varying sparsity in high-dimensional change-point analysis is still relatively underexplored in the literature and only a small number of methods have been proposed (e.g. \citet{enikeeva2019high}, \citet{liu2020unified}, \citet{zhang2022adaptive}, \citet{wang2023computationally}).

In this paper, we introduce a sparsity-adaptive bottom-up algorithm for detecting multiple change-points in the mean of a high-dimensional data sequence, which we refer to as `BUHDA' (Bottom-Up High-Dimensional Adaptive change-point detection). In contrast to the divisive top-down approach, our bottom-up approach agglomeratively merges adjacent data segments that are least likely to contain change-points. In the univariate setting, \citet{fryzlewicz2017tail} and \citet{maeng2023detecting} demonstrate the attraction of the bottom-up framework for detecting multiple change-points. 
They provide empirical evidence that the bottom-up approach works well in detecting frequent change-points including abrupt local features where many existing top-down change-point detection methods fail.
The current work extends these ideas to the high-dimensional setting. The bottom-up methodology reduces the construction of the change-point solution path to repeated pairwise comparison of two high-dimensional vectors. This is useful when changes are frequent and their sparsity levels vary: short segments are tested before they are merged, and combining the $L_2$ and $L_\infty$ ranks lets the method detect both dense and sparse changes without fixing the sparsity level in advance.
More precisely, our main contributions are as follows.

\begin{enumerate}
    \item The bottom-up tree construction starts with the finest level of data, where each data point $\b X_{\cdot, t}$ is treated as its own segment with corresponding node $(t-1, t]$. In this context, merge refers to the process of combining adjacent nodes into higher-level parent nodes. Our first contribution is adaptivity achieved in the bottom-up tree construction. To deal with possibly varying sparsity over change-points, we use both the $L_2$ and $L_\infty$ norms of CUSUM-type statistics in deciding which \rev{neighboring} regions should be merged next. In aggregating both the $L_2$ and $L_\infty$ norms, we first obtain the rank vector of each norm ($R^{L_2}$ and $R^{L_\infty}$) by sorting each norm for all possible merges, then find the combined rank by taking the entrywise maximum of $R^{L_2}$ and $R^{L_\infty}$. We merge the pair of \rev{neighboring} segments whose combined rank is the smallest so that we postpone the merge of a segment if it has either a dense or a sparse change (thus the $L_2$ or the $L_\infty$ norm is large enough, respectively). We remark that such rank-based combination of different across-panel aggregators of contrast statistics is only possible in bottom-up/agglomerative approaches. 
    \item Bottom-up methods target local features at an early stage, before focusing on more global features corresponding to longer data segments. 
    Therefore, they tend to perform better than top-down approaches in estimating the number of change-points in frequent change-point scenarios, although they tend to underperform in \rev{localization} i.e. estimating the locations of change-points \citep{maeng2023detecting}. 
    This is because the initial merges are based on  short segments of the data.
    To improve \rev{localization}, we add pre-merging and adjusting steps to the bottom-up merge algorithm. 
    Pre-merging adds stability to the algorithm by ensuring that even the finest-level merging tests are performed on large enough sample sizes. Adjusting adds flexibility to the algorithm by making it less greedy. We expand on these aspects in Section~\ref{improving_le}.
\end{enumerate}
As demonstrated in Section \ref{sec4}, the above ingredients lead to good performance of BUHDA in scenarios in which the sparsity of change varies over change-points and when relatively frequent change-points exist, especially in high-dimensional settings.
\ifJASA
We provide anonymized code and scripts for the simulations and data analysis as supplementary material for review, and will make them public with the final version.
\fi

This paper is organized as follows. 
In Section \ref{sec2}, we give a full description of the BUHDA procedure and Section \ref{sec3} presents the relevant theoretical results.
The supporting numerical studies including a real house-price data example are given in Section \ref{sec4}. 
The proofs of our main theoretical results and theoretical extensions to non-Gaussian and/or dependent noise are in the Appendix.

\section{Methodology}\label{sec2}

\subsection{Bottom-up tree construction} \label{btutree}

In contrast to the top-down methods, bottom-up procedures start from the finest level of the data and iteratively merge the most similar neighboring pairs of segments until all data points are in the same segment. The tree construction algorithm consists of several merge passes through the data, alongside auxiliary pre-merge and adjust passes. The pre-merge and adjust passes will be discussed in Section~\ref{improving_le}. We first focus on the merge passes, which form the backbone of the entire procedure. Initially, each data point $\b X_{\cdot, t}$ is its own segment, corresponding to a node $(t-1, t]$ of the bottom-up tree. In our algorithm, a \emph{node} is an interval $(a, b]:=\{a+1, \ldots, b\}$ of the domain $\{1, \ldots, n\}$ of the time series, together with the history of its construction, where the history is achieved through storing descendant nodes using the function $\mathrm{children}(\cdot)$ defined in Algorithm \ref{algo_mergepass}. In each pass, the algorithm considers each pair of neighboring segments to decide which pairs to merge first. The merging priority is determined via a dissimilarity statistic, so that the most similar segments are merged first. For two neighboring segments $(u,v]$ and $(v,w]$, the statistic is computed by aggregating over the panel the following componentwise CUSUM statistic: 
\begin{align} \label{cusum}
C_{i; u, v, w} = \sqrt{\frac{(w-v)(v-u)}{w-u}} \big( \bar{X}_{i, (u,v]} - \bar{X}_{i, (v,w]} \big), \; i=1, \ldots, p,
\end{align}
where $u < v < w$ and $\bar{X}_{i, (a,b]} = 1/(b-a) \sum_{t=a+1}^b X_{i, t}$. In each single pass, a proportion $\rho$ of neighboring segment pairs get merged and the resulting segments can be thought of as parents of the constituent children segments, leading to a tree-like construction. Consequently, the algorithm needs at most a logarithmic number of passes through the data to construct the root node of the tree, in which all data points belong to the same segment.

\subsection{Adaptivity to unknown sparsity level}\label{L2Linfty}
We now illustrate how BUHDA achieves adaptivity in the bottom-up tree construction to deal with possibly varying sparsity $\mathcal{S}_\ell$. We first formulate the coordinate-wise $L_2$ and $L_\infty$ aggregations of the CUSUM statistics as follows:
\begin{eqnarray} 
\label{l_2}  C^{L_2}_{u, v, w} &=& \bigg\{\sum_{i=1}^p (C_{i; u, v, w})^2 \bigg\}^{1/2},  \\ 
\label{l_infty}  C^{L_\infty}_{u, v, w} &=& \max_{i\in\{1,\ldots,p\}} \big|C_{i; u, v, w} \big|. 
\end{eqnarray}
We now provide a simple example to illustrate how the data transformation is carried out using both the $L_2$ and $L_\infty$ norms.
\paragraph{Example.}
Suppose that $\rho=0.3$, and the input data matrix of the dimension $3 \times 6$ is,
\begin{align*}
\b X = \begin{pmatrix} 
X_{1, 1} & X_{1, 2}  & X_{1, 3}  & X_{1, 4}  & X_{1, 5}  & X_{1, 6} \\
X_{2, 1} & X_{2, 2}  & X_{2, 3}  & X_{2, 4}  & X_{2, 5}  & X_{2, 6} \\
X_{3, 1} & X_{3, 2}  & X_{3, 3}  & X_{3, 4}  & X_{3, 5}  & X_{3, 6} \\
\end{pmatrix}.
\end{align*}
As shown in the right diagram of Figure \ref{fig:btutree}, from the initial input of the data, there exist six nodes, $\{(0, 1], (1, 2], (2, 3], (3, 4], (4, 5], (5, 6]\}$, in the current layer (i.e. $L=6$) with the corresponding segments, $\{\bar{\b X}_{\cdot, (0, 1]}, \bar{\b X}_{\cdot, (1, 2]}, \bar{\b X}_{\cdot, (2, 3]}, \bar{\b X}_{\cdot, (3, 4]}, \bar{\b X}_{\cdot, (4, 5]}, \bar{\b X}_{\cdot, (5, 6]}\}$, where $L$ is the number of current layer nodes, $\bar{\b X}_{\cdot, (a, b]} = (\bar{X}_{1, (a, b]}, \ldots, \bar{X}_{p, (a, b]})^\top$ and $\bar{X}_{i, (a, b]} = 1/(b-a) \sum_{t=a+1}^b X_{i, t}$. A layer represents a depth in a tree (i.e. distance from the bottom); therefore a horizontal row of nodes sharing the same depth belongs to the same layer. \\
Pass $j=1$: We initially look at all pairwise differences by computing the vectors of corresponding CUSUM statistics in \eqref{cusum},
\begin{align*}
\begin{pmatrix} 
C_{1;0,1,2}\\
C_{2;0,1,2}\\
C_{3;0,1,2} \\
\end{pmatrix},
\begin{pmatrix} 
C_{1;1,2,3}\\
C_{2;1,2,3}\\
C_{3;1,2,3} \\
\end{pmatrix},
\begin{pmatrix} 
C_{1;2,3,4}\\
C_{2;2,3,4}\\
C_{3;2,3,4} \\
\end{pmatrix},
\begin{pmatrix} 
C_{1;3,4,5}\\
C_{2;3,4,5}\\
C_{3;3,4,5} \\
\end{pmatrix},
\begin{pmatrix} 
C_{1;4,5,6}\\
C_{2;4,5,6}\\
C_{3;4,5,6} \\
\end{pmatrix},
\end{align*}
and this gives rise to five possible candidate merges. Then we compute the $L_2$ and $L_\infty$ aggregations of the CUSUM statistics as in \eqref{l_2} and \eqref{l_infty} for each of the five candidate merges and their corresponding ranks as follows
\begin{align*}
   L_2: \Big(C^{L_2}_{0,1,2}, C^{L_2}_{1,2,3}, C^{L_2}_{2,3,4}, C^{L_2}_{3,4,5}, C^{L_2}_{4,5,6}\Big) \rightarrow \Big(R^{L_2}_{0,1,2}, R^{L_2}_{1,2,3}, R^{L_2}_{2,3,4}, R^{L_2}_{3,4,5}, R^{L_2}_{4,5,6}\Big), \\ \numberthis \label{Ranks}
   L_\infty: \Big(C^{L_\infty}_{0,1,2}, C^{L_\infty}_{1,2,3}, C^{L_\infty}_{2,3,4}, C^{L_\infty}_{3,4,5}, C^{L_\infty}_{4,5,6}\Big) \rightarrow \Big(R^{L_\infty}_{0,1,2}, R^{L_\infty}_{1,2,3}, R^{L_\infty}_{2,3,4}, R^{L_\infty}_{3,4,5}, R^{L_\infty}_{4,5,6}\Big),
\end{align*}
where $R^{L_2}_{\cdot, \cdot, \cdot}$ and $R^{L_\infty}_{\cdot, \cdot, \cdot}$ are ranks of $C^{L_2}_{\cdot, \cdot, \cdot}$ and $C^{L_\infty}_{\cdot, \cdot, \cdot}$, respectively. We then combine the above two rank vectors to form
\begin{align*}
\Big(R^*_{0,1,2}, R^*_{1,2,3}, R^*_{2,3,4}, R^*_{3,4,5}, R^*_{4,5,6}\Big),
\end{align*}
where 
\begin{equation}\label{Rstar}
    R^*_{\cdot, \cdot, \cdot} = \max \big(R^{L_2}_{\cdot, \cdot, \cdot}, R^{L_\infty}_{\cdot, \cdot, \cdot}\big).
\end{equation}
We then merge the pairs whose corresponding $R^*$'s are the smallest. Intuitively, if $C^{L_2}$ or $C^{L_\infty}$ is large enough, then merging the corresponding pair is postponed and highly likely to survive until a later stage of merging. This allows us to identify the change-points by looking at nodes nearest to the root in the tree construction. By way of illustration, suppose that $\big(R^*_{0,1,2}, R^*_{1,2,3}, R^*_{2,3,4}, R^*_{3,4,5}, R^*_{4,5,6}\big) = (2, 4, 5, 5, 2)$. As $\sceil{\rho L} = 2$, we pick the first two smallest ranks, $R^*_{0,1,2}$ and $R^*_{4,5,6}$, and merge the corresponding pairs. Once the merges are done, the current layer node set becomes $\{(0, 2], (2, 3], (3, 4], (4, 6]\}$ with the corresponding segments, $\{\bar{\b X}_{\cdot, (0, 2]}, \bar{\b X}_{\cdot, (2, 3]}, \bar{\b X}_{\cdot, (3, 4]}, \bar{\b X}_{\cdot, (4, 6]}\}$, as illustrated in Figure \ref{fig:btutree}. 
\begin{figure}[ht] 
\centering
\includegraphics[width=13cm, height=5.5cm]{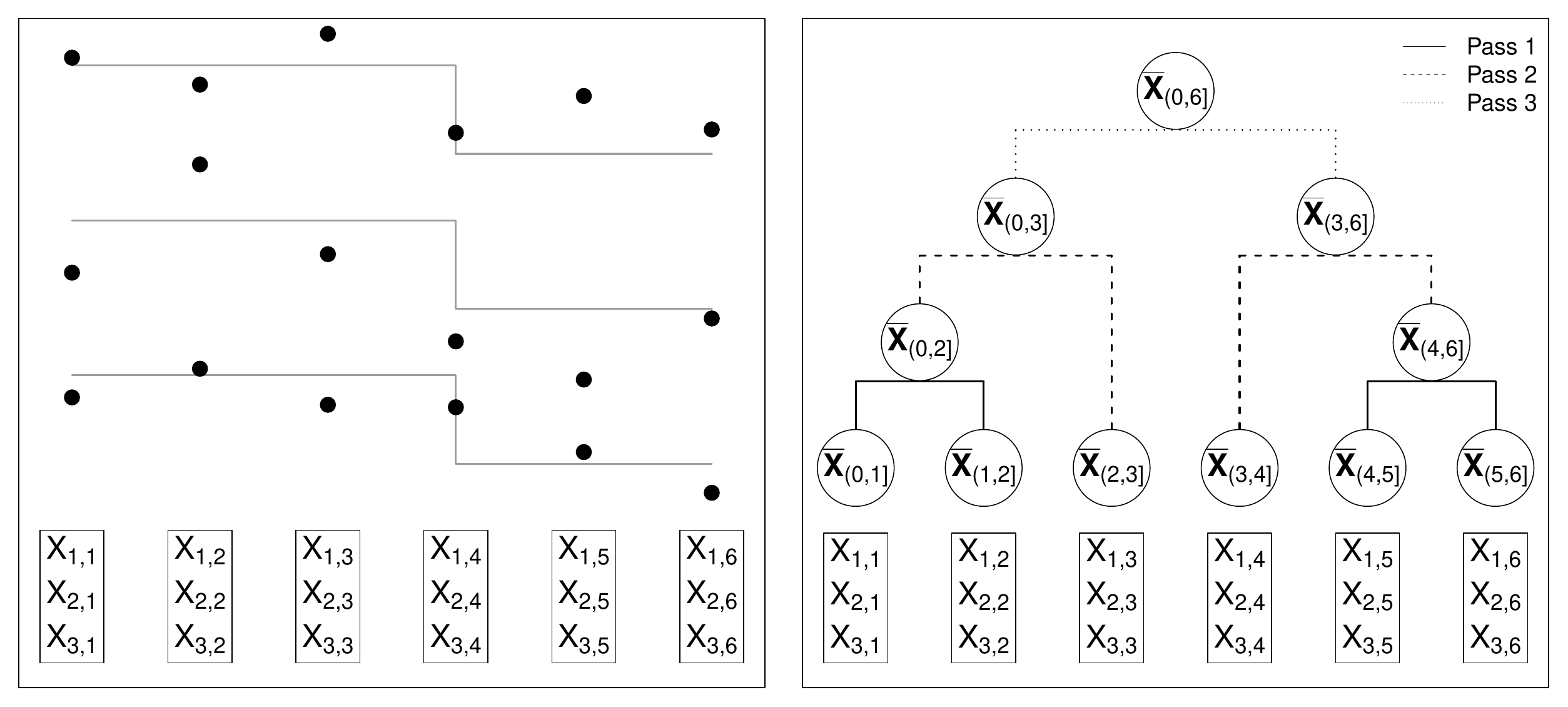}
 \caption{Example of bottom-up tree construction presented in Section \ref{btutree}. Left diagram: observations $X$ (dots) and underlying signals $f$ (solid line). Right diagram: bottom-up binary tree with segments (circled). Notation simplified in this figure: $\bar{\b X}_{(a, b]} \coloneqq \bar{\b X}_{\cdot, (a, b]}$.}
\label{fig:btutree}
\end{figure} 
\\
Pass $j=2$: We now have four nodes in the current layer ($L=4$) so that $\sceil{\rho L} = 2$. Using the current layer nodes, $\{(0, 2], (2, 3], (3, 4], (4, 6]\}$, compute the $L_2$ and $L_\infty$ of the CUSUM statistics and the corresponding combined ranks,
\begin{align*}
   R^*_{0,2,3}, R^*_{2,3,4}, R^*_{3,4,6},
\end{align*}
and this gives rise to three possible candidate merges. Suppose $R^*_{0,2,3}$ and $R^*_{3,4,6}$ are the two smallest: merge the corresponding pairs and update the current layer node set as $\{(0, 3], (3, 6]\}$. \\
Pass $j=3$: We now have only one pair to merge and the algorithm ends with one current layer node, $\{(0, 6]\}$.

We refer to all three passes in this example as ``merge passes'' (see Algorithm \ref{algo_mergepass}). Merge passes are the key ingredient of the bottom-up tree construction and interact with other operations such as pre-merge passes and adjust passes; the details of those will be given in Section \ref{improving_le}.

\begin{algorithm}[h!]
\SetAlgoLined
\DontPrintSemicolon
\KwIn{$\mathbf{X}_{p \times n}$, $\rho$ (merge function parameter), $c_\text{pm}$ (pre-merge pass parameter)}
\KwOut{$H$ (merging history sequence)}
$O \leftarrow \{(0, 1],(1, 2], (2, 3], \ldots, (n-1, n]\}$ \\
$H \leftarrow \langle O \rangle$ \hspace{0.5cm} \tcp*{Initialise history sequence}

\For{$j = 1$ \KwTo $\left\lfloor \log_2(n) / c_\text{pm} \right\rfloor$}{
    $O \leftarrow \texttt{pre-merge}(O)$ 
    \hspace{2.3cm} \tcp*[r]{See Algorithm~\ref{algo_premergepass}}
    
    $O \leftarrow \texttt{adjust}(O)$ 
    \hspace{3.15cm} \tcp*[r]{See Algorithm~\ref{algo_adjustpass}}

    $H \leftarrow H \Vert \langle O \rangle$
    \hspace{1cm} \tcp*{Concatenate current state to history}
}

\While{$|O| \geq 2$}{
    $O \leftarrow \texttt{merge}(O, \rho)$ 
    \hspace{2.15cm} \tcp*[r]{See Algorithm~\ref{algo_mergepass}}

    $O \leftarrow \texttt{adjust}(O)$

     $H \leftarrow H \Vert \langle O \rangle$
}

\Return{$H$}

\caption{Pseudocode for bottom-up tree construction}
\label{algo_bu}
\end{algorithm}

\begin{algorithm}[ht!]
\SetAlgoLined
\DontPrintSemicolon
\Input{
parameter $\rho$,\\
a list of nodes $O= \{(o_0, o_1], (o_1, o_2], \ldots, (o_{L-1}, o_L]\}$
}
\Output{
updated list of nodes $O$
}
\SetKwFunction{FMain}{merge}
  \SetKwProg{Fn}{Function}{:}{\KwRet $O$}
  \Fn{\FMain{$O, \rho$}}{ 
   Define $\b C^{L_2} = (C^{L_2}_{o_k, o_{k+1}, o_{k+2}})_{0\leq k\leq L-2}$ and $\b C^{L_\infty} = (C^{L_\infty}_{o_k, o_{k+1}, o_{k+2}})_{0\leq k\leq L-2}$ following \eqref{l_2} and \eqref{l_infty}.\;
  Compute the corresponding rank vectors $\b R^{L_2}$ and $\b R^{L_\infty}$ as in \eqref{Ranks}. \;
  Combine $\b R^{L_2}$ and $\b R^{L_\infty}$ to form a vector $\b{R}^{*}$ as in \eqref{Rstar} and let $r_1,\ldots,r_{\lceil \rho L\rceil}$ be the indices of the $\lceil \rho L\rceil$ smallest elements of $\b{R}^*$ (in increasing order)\;
  \For{$j$ in $1,\ldots,\lceil \rho L\rceil$}{
      If $\{(o_{r_j}, o_{r_j+1}], (o_{r_j+1}, o_{r_j+2}]\} \subseteq O$, then merge to form a new node $(o_{r_j}, o_{r_j+2}]$ such that $\text{children}((o_{r_j}, o_{r_j+2}]) := \{(o_{r_j}, o_{r_j+1}], (o_{r_j+1}, o_{r_j+2}]\}$ and update 
      \[    
      O \leftarrow \{(a,b] \in O: \text{$b\leq o_{r_j}$ or $a \geq o_{r_j+2}$}\} \cup (o_{r_j}, o_{r_j+2}].
      \]
      }
}
\caption{Pseudocode for merge pass}
\label{algo_mergepass}
\end{algorithm}

As shown in the above example, the merging process is repeated until all data points are in the same segment. This leads to the construction of a solution path which arranges the change-point candidates in the order of importance. This hierarchy justifies the use of thresholding for change-point estimation, described in Section \ref{thresholding}.

\subsection{Improving localization error} \label{improving_le}
If merge passes are only used in constructing a bottom-up tree, the algorithm tends to underperform in localization as the bottom part of the merge tree is built by focusing on local features identified with relatively short segments. The novelty of our BUHDA procedure includes adding other types of operations: pre-merge passes and adjust passes to improve localization. The algorithm for the bottom-up tree construction including those three types of passes is formulated in Algorithm \ref{algo_bu} and their details are given in the following (for completeness, we also include merge passes below).

\paragraph{Merge passes.} The merge passes are formally defined in Algorithm~\ref{algo_mergepass} and illustrated with an example in Section~\ref{L2Linfty}.

\paragraph{Pre-merge passes.}\label{sec:premerge_pass}

\begin{figure}[ht] 
\centering
\includegraphics[width=13cm, height=5.5cm]{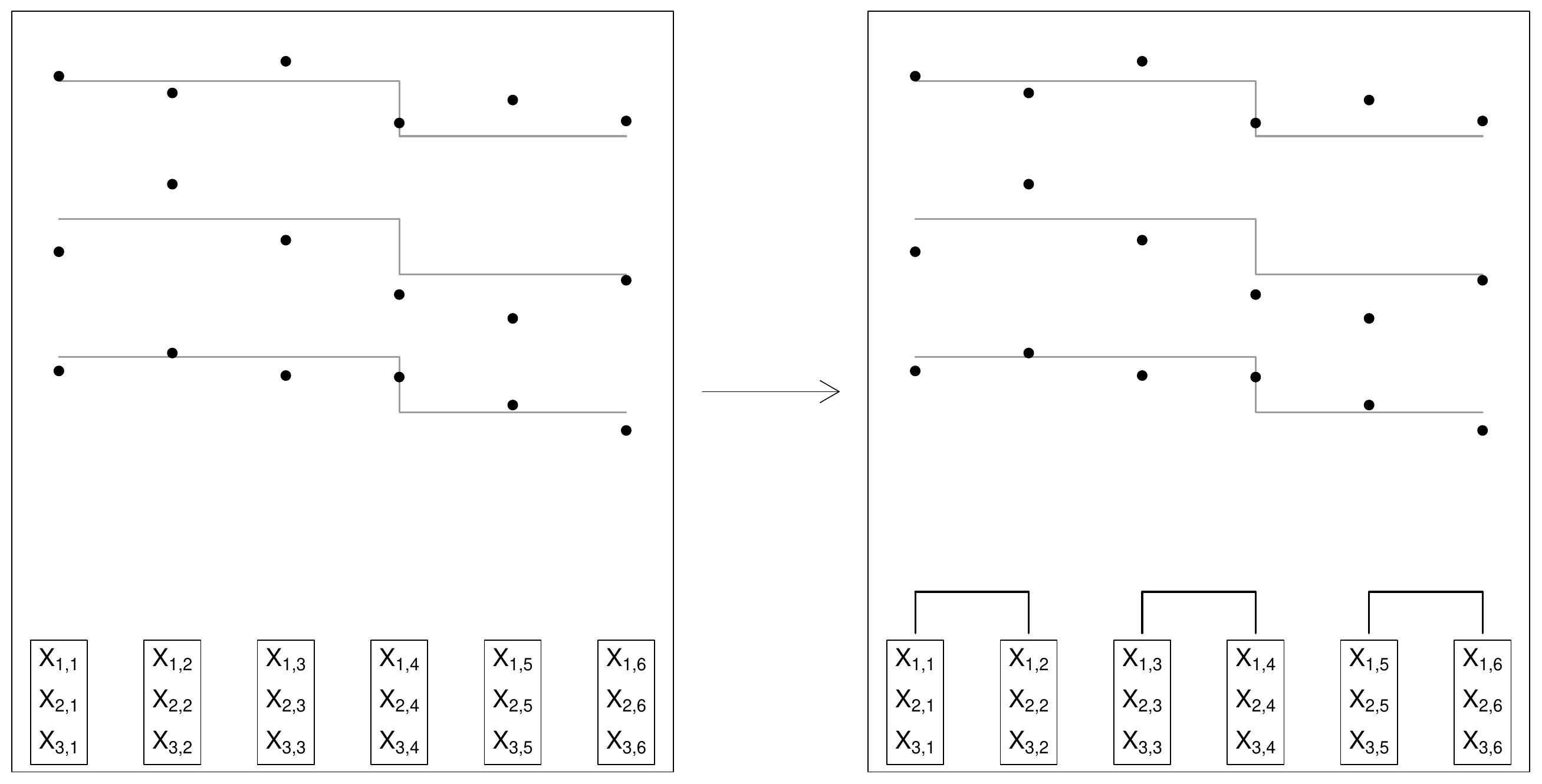}
 \caption{Example of pre-merge pass}
\label{fig:premerge1}
\end{figure}

\begin{algorithm}[h!]
\SetAlgoLined
\DontPrintSemicolon
\Input{
a list of nodes $O= \{(o_0, o_1], (o_1, o_2], \ldots, (o_{L-1}, o_{L}]\}$
}
\Output{
updated list of nodes $O$
}
\SetKwFunction{FMain}{pre-merge}
  \SetKwProg{Fn}{Function}{:}{\KwRet $O$}
  \Fn{\FMain{$O$}}{ 
  We set 
  \[
  O \leftarrow \begin{cases} 
  \{(o_0, o_2],\ldots,(o_{L-2}, o_{L}]\} & \text{if $L$ is even}\\
  \{(o_0, o_2],\ldots, (o_{L-3}, o_{L-1}], (o_{L-1}, o_{L}]\} & \text{if $L$ is odd and $o_1-o_0 < o_{L}-o_{L-1}$}\\
  \{(o_0, o_1], (o_1, o_3],\ldots, (o_{L-2}, o_{L}]\} & \text{if $L$ is odd and $o_1-o_0 \geq o_{L}-o_{L-1}$},
  \end{cases}
  \]
  and $\mathrm{children}((o_j, o_{j+2}]) := \{(o_j, o_{j+1}], (o_{j+1}, o_{j+2}]\}$ for each $(o_j, o_{j+2}]\in O$.
}
\caption{Pseudocode for pre-merge pass}
\label{algo_premergepass}
\end{algorithm}

Before performing any merge passes, we coalesce neighboring segments to create longer segments of length $O(n^{1/c_{\mathrm{pm}}})$. This is achieved by running the pre-merge pass, as specified in Algorithm~\ref{algo_premergepass}, $\lfloor \log_2 n / c_{\mathrm{pm}}\rfloor$ times. 
 The goal of the pre-merge phase is to guarantee that sufficiently long segments are used in the computation of the initial CUSUM statistics. Figure~\ref{fig:premerge1} illustrates the effect of replacing the first merge pass with a pre-merge pass in the example of Section~\ref{btutree}.

The number of pre-merge passes affects the shape of tree and thus the estimated change-points. If there is no pre-merge pass, then the initial stage of tree construction can be affected by outliers. On the other hand, if there are too many pre-merge passes compared to merge passes, the shape of tree becomes less data-adaptive as pre-merge pass does not use the CUSUM statistics. The choice of the parameter $c_{\mathrm{pm}}$, which determines the number of pre-merge pass, will be discussed in Section \ref{sec4.1}.

\begin{figure}[ht] 
\centering
\includegraphics[width=13cm, height=5.5cm]{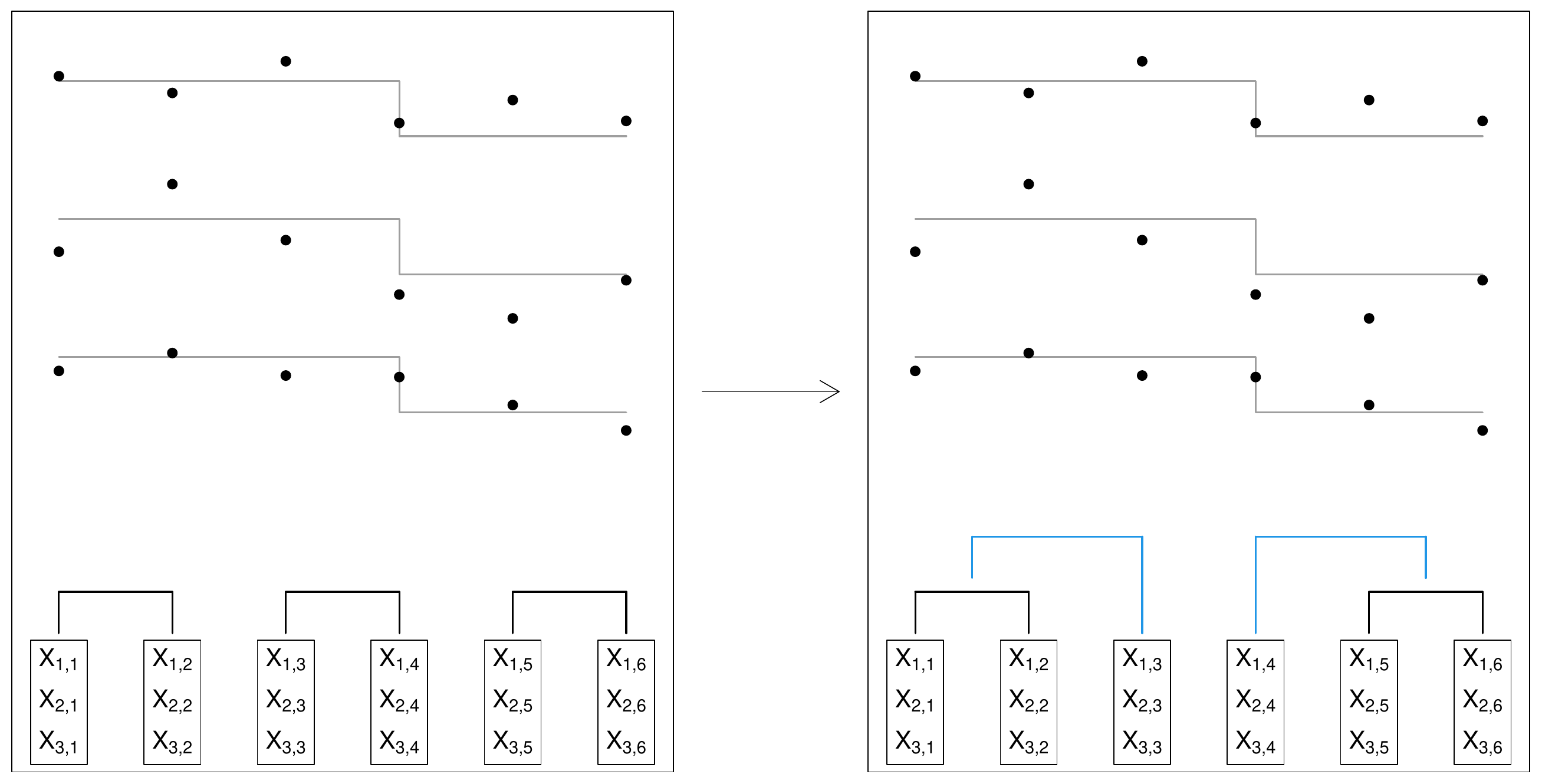}
 \caption{Example of adjust pass}
\label{fig:adjust1}
\end{figure} 

\paragraph{Adjust passes.} \label{sec:adjust_pass}
The adjust pass provides an opportunity to reverse the effects of the pre-merge or merge passes by selectively reassigning the splits where local evidence suggests better segmentation. For this purpose, the adjust pass follows every pre-merge and merge passes and examines possible nodes in the current layer. Adjusting a node is equivalent to splitting a node into two children nodes, which are then respectively combined with its left and right current-layer neighbors. Such an operation decreases the total number of segments by 1. The adjust pass is done in a conservative way in that we only split a current layer node if both its left and right children are respectively more similar to the left and right neighboring segments in the current layer, according to both $L_2$ and $L_\infty$ aggregated CUSUM statistics. As an illustration, Figure \ref{fig:adjust1} shows the example of an adjust pass following a pre-merge pass, which occurs as a result of the following:
\begin{align*}
    \max\big\{C^{L_2}_{0,2,3}, C^{L_2}_{3,4,6}\big\} < C^{L_2}_{2,3,4} \; \text{ and } \; 
    \max\big\{C^{L_\infty}_{0,2,3}, C^{L_\infty}_{3,4,6}\big\} < C^{L_\infty}_{2,3,4}. 
\end{align*}
The formal algorithm for an adjust pass can be found in Algorithm \ref{algo_adjustpass}.

\begin{algorithm}[ht!]
\SetAlgoLined
\DontPrintSemicolon
\Input{
A list of nodes $O= \{(o_0, o_1], (o_1, o_2], \ldots, (o_{L-1}, o_{L}]\}$, \\
}
\Output{
updated list of nodes $O$
}
\SetKwFunction{FMain}{adjust}
  \SetKwProg{Fn}{Function}{:}{\KwRet $O$}
  \Fn{\FMain{$O$}}{
    \For{$k$  in $0, \ldots, L-1$}{
    Let $o'_k$ be defined such that $\text{children}((o_{k}, o_{k+1}]) = \{(o_k, o'_k], (o'_k, o_{k+1}]\}$. \;
    For $q\in\{2,\infty\}$,  
    \begin{align*}
        D_k^{L_q, \text{left}} &:= C^{L_q}_{o_k, o_{k}', o_{k+1}} - C^{L_q}_{o_{k-1}, o_{k}, o_{k}'}\\
        D_k^{L_q, \text{right}} &:= C^{L_q}_{o_k, o_{k}', o_{k+1}} - C^{L_q}_{o_k', o_{k+1}, o_{k+2}},
    \end{align*}    
    where by convention $D_k^{L_q, \text{left}} = \infty$ if $k=0$ and $D_k^{L_q, \text{right}} = \infty$ if $k = L-1$.\;
    }
    Define $\mathcal{K} = \{k:  D^{\mathrm{min}}_k > 0\}$ where
    \begin{align*}
      D^{\text{min}}_k =\min \big\{ p^{-1/2}D_k^{L_2, \text{left}}, p^{-1/2}D_k^{L_2, \text{right}}, D_k^{L_\infty, \text{left}}, D_k^{L_\infty, \text{right}}\big\}.  
    \end{align*}
    \For{$k$ in $\mathcal{K}$ (\text{sorted by $D^{\mathrm{min}}_k$ in decreasing order)}}{
      If $\{(o_{k-1}, o_{k}], (o_{k}, o_{k+1}], (o_{k+1}, o_{k+2}]\} \subset O$, then unmerge $(o_{k}, o_{k+1}]$ into its children nodes and merge them with their left and right neighbor nodes to form a set of new nodes $\{(o_{k-1}, o_{k}'], (o_{k}', o_{k+2}]\}$ and update
      \[    
      O \leftarrow \{(a,b] \in O: \text{$b\leq o_{k-1}$ or $a \geq o_{k+2}$}\} \cup \{(o_{k-1}, o_{k}'], (o_{k}', o_{k+2}]\}.
      \]
      }
  
}
\caption{Pseudocode for adjust pass}
\label{algo_adjustpass}
\end{algorithm}

\subsection{Signal estimation via thresholding} \label{thresholding}

As mentioned above, our procedure constructs a bottom-up tree via Algorithm~\ref{algo_bu}. In this section, we describe how this tree is used in detecting change-points and estimating underlying signal vectors.  
We use thresholding as a way of deciding the significance of the obtained CUSUM statistics and thereby detect change-points based on the bottom-up tree. For each non-leaf node in the tree, the boundary between its two children nodes defines a change-point candidate. For a node $(u, w]$ with children nodes $\{(u, v], (v, w]\}$, we declare a change-point at $v$ if either  $C^{L_2}_{u, v, w}$ or $C^{L_\infty}_{u, v, w}$ exceeds its respective pre-specified threshold, $\lambda_{2}$ and $\lambda_{\infty}$, which classifies the initial estimated change-points into three categories: $L_{2}$, $L_\infty$, and $L_2+L_\infty$. An example can be found in Section \ref{sec4.5}. In addition, we also classify a node as having a change if any of its descendants have a change. We refer to this construction as \emph{connected thresholding}, since this ensures that the set of significant nodes form a connected pruned tree, which ensures consistent signal reconstruction from a theoretical point of view \citep[see, e.g.][for more details]{fryzlewicz2017tail}. 

After pruning, each surviving node corresponds to an estimated change-point, separating the node's children. Sorting these change-point locations in increasing order as $\tilde\eta_1,\ldots,\tilde\eta_{\tilde N}$, we obtain the initial estimators ${\tilde{\b f}}$ for the signal vectors as follows:
\begin{equation} \label{ftilde}
\tilde{f}_{i, t} =  \frac{1}{\tilde{\eta}_\ell-\tilde{\eta}_{\ell-1}} \sum_{s=\tilde{\eta}_{\ell-1}+1}^{\tilde{\eta}_\ell} X_{i,s} \quad \text{for} \quad t \in \big[\tilde{\eta}_{\ell-1}+1, \tilde{\eta}_\ell\big], \; i=1, \ldots, p, \; \ell=1, \ldots, \tilde{N}+1,  
\end{equation}
where $\tilde{\eta}_{0}=0$ and $\tilde{\eta}_{\tilde{N}+1}=n$.

\subsection{Additional considerations} \label{additional}

\subsubsection{Post-processing for consistency of change-point detection}

As will be shown in Theorem \ref{thm1} in Section \ref{sec3}, the piecewise-constant estimator ${\tilde{\b f}}$ in \eqref{ftilde} is consistent in the $L_2$ sense, which implies no underestimation of the number of change-points. On the other hand, empirical evidence suggests it may overestimate the number of change-points. We thus apply a two stage post-processing framework similar to that of \citet{fryzlewicz2017tail} to remove spurious estimated change-points so as to achieve consistency in estimating both the number and location of the change-points. 
More specifically, we post-process the estimated change-points in the following two stages: 

\paragraph{Stage 1.} \label{pp1}
In this stage, we re-run merge passes using the estimated mean ${\tilde{\b f}}$ from \eqref{ftilde} as our input data, but merging only one node in each pass. In other words, in each merge pass, we merge the two neighboring nodes with smallest combined $L_2$ and $L_\infty$ contrasts in the sense described in Section~\ref{L2Linfty}. These merge passes are performed until either of the $L_2$ and $L_\infty$ contrast exceeds their corresponding thresholds ($\lambda_{2}$ and $\lambda_{\infty}$ respectively), at which point, the current layer nodes in the bottom-up tree form a segmentation of the data. Say that there are $\dbtilde{N}+1$ current layer nodes, this give rise to $\dbtilde{N}$ change-points $(\dbtilde{\eta}_1,\ldots,\dbtilde{\eta}_{\dbtilde{N}})$. We can also form a new estimator ${\dbtilde{\b f}}$ for the mean signal as follows,
\begin{equation} \label{fdtilde}
\dbtilde{f}_{i, t} =  \frac{1}{\dbtilde{\eta}_\ell-\dbtilde{\eta}_{\ell-1}} \sum_{s=\dbtilde{\eta}_{\ell-1}+1}^{\dbtilde{\eta}_\ell} X_{i,s} \quad \text{for} \quad t \in \big[\dbtilde{\eta}_{\ell-1}+1, \dbtilde{\eta}_\ell\big], \; i=1, \ldots, p, \; \ell=1, \ldots, \dbtilde{N}+1,  
\end{equation}
where $\dbtilde{\eta}_{0}=0$ and $\dbtilde{\eta}_{\dbtilde{N}+1}=n$.

\paragraph{Stage 2.} \label{pp2}
This stage prunes the change-points $(\dbtilde{\eta}_1, \dbtilde{\eta}_2, \ldots, \dbtilde{\eta}_{\dbtilde{N}})$ in ${\dbtilde{\b f}}$ to obtain final estimators of the change-points and the mean signal. For each $\ell=1, \ldots, \dbtilde{N}$, we compute both $C^{L_2}_{u_\ell, v_\ell, w_\ell}$ and $C^{L_\infty}_{u_\ell, v_\ell, w_\ell}$ aggregated CUSUM statistics as in \eqref{l_2} and \eqref{l_infty}, respectively, setting $u_\ell=\floor{\frac{\dbtilde{\eta}_{\ell-1} + \dbtilde{\eta}_\ell}{2}}$, $v_\ell=\dbtilde{\eta}_\ell$ and $w_\ell=\ceil{\frac{\dbtilde{\eta}_{\ell} + \dbtilde{\eta}_{\ell+1}}{2}}$. We then find the minimizer of the combined rank, $\ell_0 = \argmin_\ell{R^*_{u_\ell, v_\ell, w_\ell}}$. If the following conditions are satisfied,
\begin{equation} \label{cond}
    C^{L_2}_{u_{\ell_0}, v_{\ell_0}, w_{\ell_0}} \leq \lambda_{2} \text{ and } C^{L_\infty}_{u_{\ell_0}, v_{\ell_0}, w_{\ell_0}} \leq \lambda_{\infty},
\end{equation}
we remove $\dbtilde{\eta}_{\ell_0}$ and repeat the above pruning process, until no further change-point can be removed. We write $\hat{N}$ for the number of detected change-points after this final pruning stage and let $\hat\eta_1,\ldots,\hat\eta_{\hat N}$ be the remaining estimated change-points in increasing order (also, by convention, we set $\hat{\eta}_{0}=0$ and $\hat{\eta}_{\hat{N}+1}=n$).

The estimated function $\hat{\b f}$ is obtained as in \eqref{fdtilde}, with $\hat\eta_{\ell}$ replacing $\dbtilde{\eta}_{\ell}$ therein. Through these two stages of post processing, consistency of the estimated number and locations of change-points can be achieved. The corresponding theoretical results can be found in Section~\ref{sec3}.

\subsubsection{Computational complexity}
The bottom-up tree construction in Algorithm \ref{algo_bu} consists of three passes: pre-merge, adjust and merge passes, each of which, applied to a current layer with $m$ nodes, has a computational complexity of order $O(pm+m\log(n))$ to account for both CUSUM calculation and ordering of $L_2$ and $L_\infty$ statistics. There can be at most $J = \lceil \log(n) / \log(1/(1-\rho))\rceil$ merge and pre-merge passes, since each of them reduces the total number of nodes at least by a multiple of $1-\rho$ and the adjust pass that interlaces them also reduces number of nodes by 1. Hence, the overall complexity of the bottom-up tree construction is of order $O(pn + n\log n)$. The post-processing steps have a complexity of $O(\tilde N n)$, where $\tilde N$ is the number of change-points identified in the bottom-up tree. In view of the probabilistic bound on $\tilde N$ from Theorem~\ref{thm1}, the worst-case complexity for the post-processing step is of order $O(Nn\log n)$ with high probability.

\section{Theoretical results}\label{sec3}

The theoretical results stated in this section consider the i.i.d.\ Gaussian noise. The details for dependent, possibly non-Gaussian, noise are in Appendix~\ref{appendixB}. 

We first study the $L_2$ consistency of $\big\{{\tilde{\b f}}_i\big\}_{i=1}^p$ and $\big\{{\dbtilde{\b f}}_i\big\}_{i=1}^p$, and then the change-point estimation consistency of $\big\{\b{\hat{f}}_i\big\}_{i=1}^p$, where the estimators are defined in Section \ref{sec2}. 
The $L_2$ risk of $\big\{{\tilde{\b f}}_i\big\}_{i=1}^p$ is defined as $\big\| {\tilde{\b f}} - \b{f} \big\|_{p, n}^2 = \frac{1}{pn} \sum_{i=1}^p \sum_{t=1}^n \big( \tilde{f}_{i, t}-f_{i, t} \big)^2$, where $\b{f}_i$ is the underlying signal in \eqref{model}.

\begin{thm} \label{thm1}
Suppose that $\{\b{X}_{i, \cdot}\}_{i=1}^p$ follow model \eqref{model} with $\sigma_i=1$ for all $i=1, \ldots, p$. Assume that $p \lesssim n^\alpha$ for some fixed $\alpha \in (0, \infty)$, there exist constants $c_1, c_2 >0$ such that taking $\lambda_\infty = c_1\log^{1/2}(n)$ and $\lambda_{2} = c_2 \sqrt{p + \log n}$, we have on an event with probability approaching 1 as $n\to\infty$ that  \begin{equation} \label{thm1-eq}
\begin{split}
\| {\tilde{\b f}}-\b{f} \|_{p, n}^2 \; \leq \; \frac{1}{n} \Bigg[ & \min\Big(c_1^2 \log(n), c_2^2\frac{p+\log n}{p}\Big) \\
& + 4N \biggl\lceil\frac{\log(n)}{\log (1/(1-\rho))}\biggr \rceil \max_\ell \bigg\{ c_1^2 \frac{\mathcal{S}_\ell\log (n)}{p} \wedge c_2^2\frac{p+\log n}{p} \bigg\} \Bigg].
\end{split}
\end{equation}
On the same event, the piecewise-constant estimator $\{ {\tilde{\b f}}_{i, \cdot} \}_{i=1}^p$ contains $\tilde{N} \leq C N \log (n)$ change-points for an absolute constant $C >0$.
\end{thm}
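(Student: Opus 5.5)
We follow the template of the univariate Tail-Greedy Unbalanced Haar (TGUH) analysis of \citet{fryzlewicz2017tail}, using the $L_2$/$L_\infty$ pair of aggregators in place of a single contrast.

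\textbf{Step 1: a high-probability event.} We first define an event $\mathcal{A}_n$ on which the noise CUSUMs are uniformly controlled over all triples $0\le u<v<w\le n$. Write $C^{\varepsilon}_{i;u,v,w}$ for the CUSUM \eqref{cusum} applied to $\varepsilon$. On $\mathcal{A}_n$ we require
\begin{align*}
\max_{u,v,w}\max_i |C^{\varepsilon}_{i;u,v,w}| &\le c_1'\log^{1/2} n, \\
\max_{u,v,w}\Bigl|\sum_i (C^{\varepsilon}_{i;u,v,w})^2 - p\Bigr| &\le c_2'\bigl(\sqrt{p\log n}+\log n\bigr).
\end{align*}
Each $C^{\varepsilon}_{i;u,v,w}$ is standard normal, so the first bound follows from a Gaussian tail bound. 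The second follows from the Laurent--Massart $\chi^2_p$ inequality. In both cases we take a union bound over at most $n^3 p \lesssim n^{3+\alpha}$ triples, which is where $p\lesssim n^\alpha$ enters. It then suffices to choose $c_1>c_1'$ and $c_2$ large enough that $\lambda_\infty$ and $\lambda_2$ dominate these fluctuations. On $\mathcal{A}_n$ all remaining statements are deterministic.

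\textbf{Step 2: decomposing the error along the tree.} Following TGUH, the thresholded tree yields an unbalanced Haar-type orthonormal decomposition. Each internal node $(u,w]$ with split $v$ contributes, in every coordinate $i$, a Haar-type coefficient equal to $C_{i;u,v,w}$, and the root contributes the overall mean. The estimate $\tilde{\b f}$ is the projection of $\b X$ onto the span of the root vector and the vectors of the retained nodes. Connected thresholding ensures this span is exactly the space of piecewise constants on the tree-induced partition, so \eqref{ftilde} coincides with this projection. By Parseval, $pn\|\tilde{\b f}-\b f\|^2_{p,n}$ splits into two parts: the noise energy along retained directions plus the root term, and the squared signal coefficients of killed nodes.

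We bound the contributions as follows.
\begin{itemize}
\item The root term is $\sum_i(\bar\varepsilon_{i,(0,n]})^2 n/n$. It is at most $\min(c_1^2\log n\cdot p,\ c_2^2(p+\log n))$ on $\mathcal A_n$, after dividing by $pn$ and adjusting constants. This gives the first term of \eqref{thm1-eq}.
\item A killed node has $C^{L_2}\le\lambda_2$ and $C^{L_\infty}\le\lambda_\infty$. If its children are both constant segments, its signal coefficient is $0$, so the node contributes nothing. The same holds for a retained node whose split lies strictly inside a constant piece.
\item The remaining nodes are those whose interval $(u,w]$ contains some $\eta_\ell$ in a non-constant way. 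For each such node the error is at most $\sum_i (C_{i;u,v,w})^2$, restricted to the relevant quantity.
\end{itemize}
For a relevant node, the signal CUSUM is supported on $\bigcup\Omega_\ell$ in the nontrivial coordinates. The error per relevant node is therefore bounded by $\min(\mathcal S_\ell\lambda_\infty^2,\lambda_2^2)$ up to a factor of $4$. This factor comes from separating signal and noise using $(a+b)^2\le 2a^2+2b^2$ and from the coordinates where the statistic is driven by the signal.

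\textbf{Step 3: counting relevant nodes, the main obstacle.} We show that each $\eta_\ell$ lies in the interior of at most one node per layer. Hence at most $N$ relevant nodes appear per pass. The number of passes is bounded by $J=\lceil\log n/\log(1/(1-\rho))\rceil$, because each merge or pre-merge pass reduces the node count by a factor $1-\rho$.

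The subtle point is the adjust passes. They reshape nodes, but they also remove a node, so the "one node per change-point per layer" property must be re-verified after each adjustment. We would argue that the new nodes $(o_{k-1},o_k']$ and $(o_k',o_{k+2}]$ still partition the line, so each $\eta_\ell$ remains inside at most one current-layer node. The merge history then still yields at most $N$ relevant internal nodes per layer. Multiplying gives the bound $4NJ\max_\ell\{\cdot\}$, and dividing by $pn$ gives \eqref{thm1-eq}.

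The same count shows that every retained node other than the relevant ones has a split inside a constant piece. On $\mathcal A_n$ such a node has noise-only statistics below both thresholds, so it cannot survive unless it is an ancestor of a relevant node. We need to check that connected thresholding only adds ancestors, and that these are themselves among the relevant nodes. Given that, $\tilde N$ is at most the number of relevant nodes, which is at most $NJ\le CN\log n$.
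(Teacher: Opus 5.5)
You follow the paper's route (Parseval in the adaptive unbalanced Haar basis, killing of nodes that straddle no $\eta_\ell$, at most $NJ$ straddling nodes). Step~1, the root term and the count $\tilde N\le NJ$ are fine up to constants. The gap is the per-node bound in Step~2.

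For a relevant node you need to separate two cases.
\begin{itemize}
\item \textbf{Killed node.} Then $\hat\mu_i=0$ and the node contributes $\sum_i\mu_i^2$. On the good event, $|\mu_i|\le|C_i|+|\langle\psi_{u,v,w},\bve_i\rangle|\le2\lambda_\infty$ and $\|\mu\|_2\le2\lambda_2$. So this contribution is at most $4\min(|\mathrm{supp}\,\mu|\,\lambda_\infty^2,\lambda_2^2)$, and this is the only place sparsity can enter.
\item \textbf{Retained node.} It contributes $\sum_{i=1}^p\langle\psi_{u,v,w},\bve_i\rangle^2$ over \emph{all} $p$ coordinates, because \eqref{ftilde} re-estimates every coordinate on every estimated segment. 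Nothing confines this to $\Omega_\ell$. The quantity is at most $\min(p\lambda_\infty^2,\lambda_2^2)$. By the lower tail in Laurent--Massart and a union bound over triples, it is also at least $p-O(\sqrt{p\log n})$ for all $(u,v,w)$ simultaneously with probability tending to one. Hence it is not $O(\mathcal S_\ell\log n)$ when $\mathcal S_\ell\log n\ll p$.
\end{itemize}
Your split via $(a+b)^2\le2a^2+2b^2$ leaves the term $2(C_i-\mu_i)^2$ on every coordinate, so it cannot produce $\mathcal S_\ell\lambda_\infty^2$. The paper's own \eqref{ub} makes the same leap: its per-coordinate bound $2\lambda_\infty^2+2c_1^2\log n$ is multiplied by $\mathcal S_\ell$ rather than $p$. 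Mirroring the paper therefore does not close the gap.

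Done carefully, your argument gives \eqref{thm1-eq} with $\max_\ell\{\cdot\}$ replaced by $\min\bigl(c_1^2\log n,\,c_2^2\frac{p+\log n}{p}\bigr)$, i.e.\ with no gain from sparsity. Recovering the $\mathcal S_\ell\log n/p$ branch would require showing that only $O(1)$ straddling nodes are retained, and nothing in the algorithm guarantees this. Connected thresholding keeps every ancestor of a significant node. The CUSUM-blind pre-merge passes can create a two-point node across $\eta_\ell$ whose $L_\infty$ statistic already exceeds $\lambda_\infty$, for example with a jump of order $\sqrt{\log n}$ in one coordinate. If the next adjust pass does not undo this node, then roughly $\log_2(n)/2$ nested ancestors straddling $\eta_\ell$ are retained. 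In the sparse high-dimensional regime the $L_2$ comparisons in that adjust pass are essentially coin flips, so this happens with non-vanishing probability.

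Take $N=1$, $\mathcal S_1=1$ and $p=n^{1/2}$. The construction then gives $\|\tilde{\b f}-\b f\|_{p,n}^2\gtrsim\log n/n$, while the right-hand side of \eqref{thm1-eq} is $O(1/n)$. So \eqref{thm1-eq} as stated appears to fail in that regime, and no argument along these lines can establish it.

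Two smaller, constant-level issues in Step~3:
\begin{itemize}
\item Your partition argument controls current-layer nodes, not the number of \emph{new} straddling nodes created per pass. A merge followed by an adjust can create two nested ones.
\item A merge pass may merge fewer than $\lceil\rho L\rceil$ pairs, because overlapping candidates are skipped.
\end{itemize}
Both $J$ and the factor $N$ per pass therefore need adjusting by constants.
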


Thus, ${\tilde{\b f}}$ is $L_2$ consistent if $\frac{N\log^2n}{n} =o(1)$. 
The second term in~\eqref{thm1-eq} shows how the estimation accuracy adapts to the sparsity level $\mathcal{S}_\ell$, $\ell=1,\ldots,N$, for the $N$ change signals. Specifically, this term is linear in $\max_\ell \mathcal{S}_\ell$ up to $\max_\ell \mathcal{S}_\ell \asymp p/\log(n)$ and does not depend on the maximum sparsity level when the order of the latter exceeds $p/\log(n)$.  

We now look into the property of the estimator ${\dbtilde{\b f}}$ obtained after the first stage of post-processing.

\begin{thm} \label{thm2}
Under the assumptions of Theorem \ref{thm1}, we have $\big\| {\dbtilde{\b f}} - \b{f} \big\|_{p, n}^2 = O(R_{p, n})$ with probability approaching $1$ as $n \rightarrow \infty$, where
\begin{equation}\label{rpn}
R_{p, n} = N\frac{\log n}{n} \max_\ell \bigg\{ \frac{\mathcal{S}_\ell \log n}{p} \wedge \Big(1+\frac{\log n}{p}\Big) \bigg\}.
\end{equation}
Moreover, there exist at most two estimated change-points in $(\eta_\ell, \eta_{\ell+1}]$ for each $\ell=0, \ldots, N$; in particular, $\dbtilde{N} \leq 2(N+1)$.
\end{thm}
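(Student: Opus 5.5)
We will work throughout on the high-probability event $\mathcal{E}$ of Theorem~\ref{thm1}, intersected with the Gaussian concentration events
\[
|C^{L_\infty}_{u,v,w}(\bve)|\le c_1\log^{1/2}(n)/2
\quad\text{and}\quad
C^{L_2}_{u,v,w}(\bve)\le c_2\sqrt{p+\log n}/2
\]
for all triples $0\le u<v<w\le n$. These hold with probability tending to one by a union bound over the $O(n^3)$ triples. The first uses the Gaussian maximal inequality with $p\lesssim n^\alpha$; the second uses a $\chi^2_p$ tail bound. On this event, Theorem~\ref{thm1} gives $\tilde N\le CN\log n$, and every true change-point $\eta_\ell$ lies in a segment of $\tilde{\b f}$ whose induced contrast is controlled. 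Stage~1 merges neighbouring segments of $\tilde{\b f}$ one at a time, always choosing the pair with smallest combined rank, and stops as soon as the minimal-rank pair has either contrast above its threshold.

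\textbf{Step 1: at most two estimated change-points between consecutive true ones.} Suppose, for contradiction, that three estimated change-points $\dbtilde\eta_{j}<\dbtilde\eta_{j+1}<\dbtilde\eta_{j+2}$ remain in some $(\eta_\ell,\eta_{\ell+1}]$ when Stage~1 stops. The two segments adjacent to $\dbtilde\eta_{j+1}$ lie entirely in a constant-signal region. Their CUSUM contrast is therefore pure noise, so both $C^{L_2}$ and $C^{L_\infty}$ fall below their thresholds. At the stopping time the pair with minimal combined rank $R^*$ exceeds a threshold in at least one norm. The difficulty is that rank-minimality under $R^*=\max(R^{L_2},R^{L_\infty})$ does not directly imply that both norms are small.

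This is the main obstacle. I would handle it as follows. A pure-noise pair has both norms below threshold, so any pair exceeding a threshold in some norm has a larger value in that norm than the noise pair. Hence the noise pair's rank in that norm is strictly smaller. The remaining task is to show that $R^*$ of the noise pair is smaller than $R^*$ of the minimiser. I would argue that the stopping rule must be read as ``stop when every remaining pair exceeds a threshold in some norm''. Alternatively, if the rule is taken literally, one can compare ranks through the count of pure-noise pairs, and show that the minimiser of $R^*$ is always a pair with both contrasts below threshold whenever such a pair exists. Either way, the noise pair at $\dbtilde\eta_{j+1}$ would have been merged, which contradicts its survival. This gives $\dbtilde N\le 2(N+1)$.

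\textbf{Step 2: the risk bound.} Every merge performed in Stage~1 has both contrasts below threshold. Merging two segments with contrasts $C^{L_2}\le\lambda_2$ and $C^{L_\infty}\le\lambda_\infty$ increases the residual sum of squares of the signal part by $\|C_{u,v,w}(\b f)\|_2^2$. This quantity is at most $(2\lambda_2)^2$, and at most $\mathcal{S}\,(2\lambda_\infty)^2$ when the signal difference is supported on $\mathcal{S}$ coordinates. The triangle inequality with the noise bounds justifies both estimates.

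I would then decompose $\dbtilde{\b f}-\b f$ segment by segment. Segments of $\dbtilde{\b f}$ containing no true change-point contribute only a noise term of order $\log n+p$ each. Segments containing a true $\eta_\ell$ contribute the accumulated contrasts of the merges that absorbed $\eta_\ell$. As in the proof of Theorem~\ref{thm1}, the number of such merges per change-point is $O(\log n)$, because segment lengths at least double along a chain. This per-change-point bound plays the role of the factor $\lceil\log n/\log(1/(1-\rho))\rceil$ in \eqref{thm1-eq}.

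Dividing by $pn$ yields
\[
\tfrac{N\log n}{n}\max_\ell\{\mathcal S_\ell\log n/p\wedge(1+\log n/p)\},
\]
which is $R_{p,n}$. The noise contributions of the at most $2(N+1)$ segments are dominated by this bound.
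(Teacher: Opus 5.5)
Your overall route is the paper's: every Stage~1 merge is below both thresholds, so it only adds killed-coefficient terms of the kind bounded in \eqref{ub}, and a pure-noise middle split gets removed. Your segment-wise decomposition is in fact exact. With $P$ the projection onto vectors constant on the segments of $\dbtilde{\b f}$, the error $\dbtilde{\b f}_{i,\cdot}-\b f_{i,\cdot}=P\bve_{i,\cdot}-(I-P)\b f_{i,\cdot}$ is an orthogonal sum. The genuine gap is your last sentence. The noise piece $\frac{1}{pn}\sum_i\|P\bve_{i,\cdot}\|^2$ consists of $\dbtilde{N}+1$ terms of size $\chi^2_p/(pn)$, so it is of order up to $\frac{N+1}{n}(1+\frac{\log n}{p})$. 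That is $O(R_{p,n})$ only when $p\lesssim\max_\ell\mathcal{S}_\ell\log^2 n$.

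No argument can repair this. The noise piece is at least $\frac{1}{pn}\sum_{i=1}^p s_{i,(0,\dbtilde{\eta}_1]}^2$, where $s_{i,(0,b]}=b^{-1/2}\sum_{t\le b}\ve_{i,t}$. A $\chi^2_p$ lower-tail bound plus a union over $b$ makes this at least $1/(2n)$ with probability tending to one once $p\ge 32\log n$. But $N=1$, $\mathcal{S}_1=1$, $p=n$ gives $R_{p,n}=\log^2 n/n^2$, so the sparse branch of \eqref{rpn} is false as stated.

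The paper's proof reaches $R_{p,n}$ only through \eqref{ub}, which it reuses for $\dbtilde{\b f}$. There the per-coordinate bound $2\lambda_\infty^2+2c_1^2\log n$ is multiplied by $\mathcal{S}_\ell$ rather than $p$. Yet $(C_i-\mu_i)^2=\langle\psi_{u,v,w},\bve_i\rangle^2$ is nonzero for every $i$. Your decomposition exposes that hole rather than creating it. What your argument honestly yields, using $\|C_{u,v,w}(\b f)\|_2^2\le(2\lambda_2)^2$ per straddling merge, is the non-adaptive rate $O\bigl(\frac{(N+1)\log n}{n}(1+\frac{\log n}{p})\bigr)$.

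Two smaller repairs are needed.

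\textbf{Counting merges.} The claim that segment lengths at least double along a chain is false: a segment containing $\eta_\ell$ may absorb a much shorter neighbour. In Stage~1, with one merge per pass, nothing gives $O(\log n)$ merges per change-point. Count as the paper does. Straddling coefficients of the original tree number at most $N\lceil\log n/\log(1/(1-\rho))\rceil$, one per change-point per pass, as in Theorem~\ref{thm1}. Each Stage~1 merge deletes a change-point of $\tilde{\b f}$, so there are at most $\tilde N\le CN\log n$ of them. Only this total is needed.

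\textbf{Stopping rule in Step~1.} Commit to the reinterpreted rule: merge the $R^*$-minimal pair among those below both thresholds, and stop when none is left. This is what the paper's phrase ``so $\dbtilde{\eta}_{l+1}$ would be removed'' tacitly assumes, and it also keeps your Step~2 claim true that every Stage~1 merge is below both thresholds. Your literal-rule alternative is false. Take $\lambda_2=100$, $\lambda_\infty=3$ and candidate pairs $(C^{L_2},C^{L_\infty})=(50,1),(10,4),(20,5),(30,6)$. The only pair below both thresholds has $R^*=4$, while $(10,4)$ has $R^*=2$. A literal Stage~1 therefore stops with the pure-noise pair still present.
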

We see that ${\dbtilde{\b f}}$ is $L_2$ consistent for estimating the mean signal. While it still possibly overestimates the number of change-points, Theorem~\ref{thm2} shows that we can control the spurious changes to be at most 2 on each stationary segment. This allows us to remove these spurious change-points in another post-processing stage. This second stage of the post-processing is designed to achieve consistency in estimating both the number and location of the change-points.

\begin{thm} \label{thm3}
We assume that the assumptions given in Theorem \ref{thm1} hold with an additional condition $N=O(\log n)$ and let $R_{p, n}$ be as defined in Theorem~\ref{thm2}. For a sufficiently large $C'>0$, if 
\begin{equation}\label{Eq:thm3Assumption}\Bigl( \min_{1\leq \ell\leq N} \min\{\Delta^{\ell}_{p,n}\delta^{\ell-1}_{p,n}, \Delta^{\ell}_{p,n}\delta^{\ell}_{p,n}\} \Bigr) \geq C' pnR_{p,n},
\end{equation}
where $\Delta^\ell_{p, n} = \sum_{i \in \Omega_\ell} \big(f_{i, \eta_\ell+1} - f_{i, \eta_\ell} \big)^2$, for $1\leq \ell\leq N$ and $\delta_{p, n}^\ell = \eta_{\ell+1}-\eta_{\ell}$ for $0\leq \ell \leq N$, then we have 
\begin{equation} 
\label{Eq:Thm3Rate}
\mathbb{P} \; \bigg( \hat{N}=N, \quad \max_{\ell=1, \ldots, N} \Big\{ |\hat{\eta}_\ell-\eta_\ell| \cdot \Delta^\ell_{p, n} \Big\} \leq C pn R_{p, n}  \bigg) \; \rightarrow \; 1,
\end{equation}
as $n \rightarrow \infty$ where $C$ is a constant depending only on $C'$. 
\end{thm}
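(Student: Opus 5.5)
We will work on the intersection of the high-probability event $\mathcal{E}$ on which Theorem~\ref{thm2} holds with the event on which all aggregated CUSUM statistics of pure noise are uniformly controlled. Concretely, for all triples $0\le u<v<w\le n$ we want $\max_i|C^{\varepsilon}_{i;u,v,w}|\le c_1'\log^{1/2}n$ and $\bigl(\sum_i (C^{\varepsilon}_{i;u,v,w})^2\bigr)^{1/2}\le \sqrt p + c_2'\log^{1/2} n$, where $C^{\varepsilon}$ denotes the CUSUM of the noise. These bounds follow from a Gaussian maximal inequality, a $\chi^2_p$ concentration bound, and a union bound over $O(n^3)$ triples, using $p\lesssim n^\alpha$. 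Constants are chosen so that, as in Theorem~\ref{thm1}, the thresholds $\lambda_\infty,\lambda_2$ dominate these noise levels. On this event Theorem~\ref{thm2} gives $\|\dbtilde{\b f}-\b f\|^2_{p,n}=O(R_{p,n})$ and at most two estimated change-points in each $(\eta_\ell,\eta_{\ell+1}]$.

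\textbf{Step 1 (localisation of $\dbtilde{\eta}$).} We first show that for every $\ell$ there is some $\dbtilde\eta_k$ with $|\dbtilde\eta_k-\eta_\ell|\,\Delta^\ell_{p,n}\le C pnR_{p,n}$. Suppose instead that no estimated change-point lies within distance $d=CpnR_{p,n}/\Delta^\ell_{p,n}$ of $\eta_\ell$. Assumption \eqref{Eq:thm3Assumption} guarantees $d<\min(\delta^{\ell-1}_{p,n},\delta^\ell_{p,n})$ once $C'$ is large. Then $\dbtilde{\b f}$ is constant on a window of length at least $d$ on one side of $\eta_\ell$, while $\b f$ jumps by $\Delta^\ell_{p,n}$ in squared norm. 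Hence $pn\|\dbtilde{\b f}-\b f\|^2_{p,n}\ge d\Delta^\ell_{p,n}/4$, up to a constant from the noise term $\bar{\bve}$ over that window, which is controlled on the event. This contradicts $O(R_{p,n})$ for $C$ large. This is the standard argument from \citet{fryzlewicz2017tail}, extended to sum over coordinates.

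\textbf{Step 2 (the pruning loop preserves true changes).} At every iteration of Stage 2 the current set of estimates still contains, for each $\ell$, a point $\check\eta_\ell$ near $\eta_\ell$ in the sense of Step 1. Moreover, since at most two estimated points lie in each stationary segment and points are only removed, the half-way windows $u_\ell=\lfloor(\check\eta_{\ell-1}+\check\eta_\ell)/2\rfloor$ and $w_\ell$ satisfy $v-u,\ w-v\gtrsim \min(\delta^{\ell-1}_{p,n},\delta^\ell_{p,n})$ up to a factor $1/4$. Consequently the signal part of the CUSUM at $\check\eta_\ell$ satisfies $\|C^{f}_{u,v,w}\|_2^2\gtrsim \min(\delta^{\ell-1}_{p,n},\delta^\ell_{p,n})\,\Delta^\ell_{p,n}\ge C'pnR_{p,n}$. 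We then show this exceeds the thresholds after subtracting noise. The key comparison is that $pnR_{p,n}\gtrsim N\log n\cdot\min(\mathcal S\log n,\,p+\log n)$ dominates $\lambda_2^2=c_2^2(p+\log n)$ when the change is dense, and dominates $\lambda_\infty^2\mathcal S_\ell$ when it is sparse. In the sparse case this yields a coordinate with $|C^f_i|^2\ge \|C^f\|_2^2/\mathcal S_\ell\gtrsim\lambda_\infty^2$. Therefore \eqref{cond} fails at every near-true point. It follows that, whatever $\ell_0$ the rank rule selects, if $\ell_0$ is a near-true point the loop stops; the loop removes only points not within $d$ of a true change.

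\textbf{Step 3 (spurious points are removed).} For a spurious point $v$ whose left and right half-windows both lie within a single stationary segment, apart from pieces of length at most $d$ near the neighbouring true change, the signal CUSUM is small. Adding the noise bound, both statistics fall below $\lambda_2$ and $\lambda_\infty$. It remains to rule out the loop terminating while spurious points remain. The difficulty is that the argmin of $R^*$ might sit at a true point while a spurious point is also present, in which case the loop stops early. I expect this to be the main obstacle. I would try to handle it by showing that on the event every spurious point has both $C^{L_2}$ and $C^{L_\infty}$ strictly smaller than those at every near-true point, so that both ranks, and hence $R^*$, favour spurious points first. When several spurious points sit in one segment, say two, removing one merges windows; I would need to check that the remaining one still has a window meeting the above bound. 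Iterating then gives $\hat N=N$, and the localisation rate in \eqref{Eq:Thm3Rate} follows from Step 1 applied to the surviving points.
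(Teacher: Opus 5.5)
Your Step 1 is the paper's argument \eqref{A14}. The noise caveat is unnecessary: $\dbtilde{\b f}$ is constant on both sides of $\eta_\ell$, and for any constant vector $x$ one has $\|x-\b f_{\cdot,\eta_\ell}\|^2+\|x-\b f_{\cdot,\eta_\ell+1}\|^2\ge \Delta^\ell_{p,n}/2$ deterministically.

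The first genuine error is the invariant in Step 2. It is not true that both half-windows at a near-true estimate have length $\gtrsim\min(\delta^{\ell-1}_{p,n},\delta^{\ell}_{p,n})$. Theorem~\ref{thm2} allows, e.g., $\dbtilde{\eta}_k=\eta_\ell$ and $\dbtilde{\eta}_{k+1}=\eta_\ell+2$, one estimate in each adjacent segment. Then $w_k-v_k=1$, and the signal CUSUM at $\eta_\ell$ has squared norm about $\Delta^\ell_{p,n}$, which can be $o(1)$ under \eqref{Eq:thm3Assumption}. That point may satisfy \eqref{cond} and be legitimately removed. So ``the loop removes only points not within $d$ of a true change'' is false, and within-$d$ versus not-within-$d$ is the wrong dichotomy.

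The paper splits instead by \emph{closest} versus \emph{non-closest} estimate. In Case 2, the closest estimate to $\eta_\ell$ can be removed only if another estimate lies within a constant multiple of $r^\ell_{p,n}$ of it. Otherwise both windows have length $\gtrsim r^\ell_{p,n}$, and \eqref{linfty_alt}--\eqref{l2_alt} give a threshold exceedance. Hence some estimate within $O(r^\ell_{p,n})$ of each $\eta_\ell$ survives, and that invariant, not Step 1 alone, gives the rate in \eqref{Eq:Thm3Rate}.

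Step 3 is left open, and both halves need repair. For a non-closest estimate (say $\dbtilde{\eta}_{\ell_0}<\eta_\ell$), Case 1 shows two things. First, $u_{\ell_0}\ge\eta_{\ell-1}$, because an estimate lies within $r^{\ell-1}_{p,n}$ of $\eta_{\ell-1}$ and $\delta^{\ell-1}_{p,n}/2\ge r^{\ell-1}_{p,n}$. Second, $w_{\ell_0}<\eta_\ell$, because a closer estimate exists. So $(u_{\ell_0},w_{\ell_0}]$ contains \emph{no} true change, and \eqref{cond} holds by Lemmas~\ref{lm5.1}--\ref{lm5.2}. Allowing ``pieces of length at most $d$'' across a change, as you do, is not enough: such a piece can contribute a signal CUSUM of order $\sqrt{d\,\Delta^\ell_{p,n}}\asymp\sqrt{CpnR_{p,n}}$, which is above threshold.

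Your proposed fix for the selection issue also fails: it is not true that every spurious estimate has both $C^{L_2}$ and $C^{L_\infty}$ below those of every near-true one. Take all $\mathcal{S}_\ell=1$ and $p\asymp n$. The assumption then only forces $\|C^f\|_2^2\asymp N\log^2 n=o(\sqrt p)$ at a true point, so its $C^{L_2}$ is $\sqrt p+O_P(1)$, exchangeable with that of a spurious point. Now take one spurious and three true estimates. The spurious one has $R^{L_\infty}=1$, but with probability about $1/4$ it has $R^{L_2}=4$, so $R^*=4$. Meanwhile the true estimate with $R^{L_\infty}=2$ has $R^*\le 3$. The argmin then violates \eqref{cond}, and a loop that stops there keeps the spurious point.

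The paper does not address this either. Its Case 1 treats an arbitrary index $\ell_0$ and simply asserts that the point ``gets removed''. That is, it effectively reads Stage 2 as removing, in $R^*$ order, any estimate satisfying \eqref{cond} until none does. Under that reading your obstacle disappears and Cases 1--2 finish the proof. Under the literal argmin-then-stop reading, your rank-dominance route cannot close it.
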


This theorem states that as long as each individual change-point is prominent enough in the sense that the signal size and its spacing away from its neighbors satisfy~\eqref{Eq:thm3Assumption}, then we can consistently estimate both the number and location of all the change-points. Moreover,~\eqref{Eq:thm3Assumption} describes the difficulty of estimating each individual change-point by its local energy statistic, as described in \citet{verzelen2020optimal}, and the rate of convergence for a given change-point in~\eqref{Eq:Thm3Rate} depends on its respective signal size $\Delta_{p,n}^{\ell}$.


\section{Numerical studies} \label{sec4}

\subsection{Parameter choice} \label{sec4.1}

\quad \;  \textbf{\emph{Choice of thresholds $\lambda_\infty$ and $\lambda_2$.}} 
As described in Section \ref{thresholding}, the BUHDA algorithm is built on the $L_\infty$ and $L_2$ aggregations of CUSUM statistics. The corresponding thresholds have the form of $\lambda_\infty = c_1\log^{1/2}(n)$ and $\lambda_2 = c_2\sqrt{p + \log n}$, where $c_1$ and $c_2$ are positive constants. The theoretical derivations of $\lambda_\infty$ and $\lambda_2$ can be found in Appendix \ref{pf}.
These thresholds are valid under the assumption that $\sigma_i=1$ for all $i=1, \ldots, p$, where $\varepsilon_{i,t} \sim N(0,\sigma_i^2)$. However, in practice $\sigma_i$ is often unknown and can vary across data sequences $X_i$. Thus we normalize each data sequence by its estimated standard deviation and use the thresholds described above. For this, we first estimate each $\sigma_i$ using the Median Absolute Deviation (MAD) estimator \citep{hampel1974influence}.
To choose the optimal thresholds $\lambda_2$ and $\lambda_\infty$ in practice, for given $(p, n)$, we first generate 100 datasets from the null model without change-points and compute the maximum $L_2$ and $L_\infty$ aggregated CUSUM statistics over $n$ randomly chosen intervals for each realization. We then take the $95\%$ quantiles of those maximum values as $\lambda_2$ and $\lambda_\infty$.

\quad \;  \textbf{\emph{Choice of $\rho$.}} 
$\rho \in (0, 1)$ is the parameter which decides the proportion of pairs of segments to merge in a single pass over the data. Empirically, the change detection performance is quite robust to the choice of this parameter. We use $\rho=0.3$ as a default in the simulation study and data analyses.

\quad \;  \textbf{\emph{Choice of the pre-merge pass parameter $c_{\mathrm{pm}}$.}} 
As shown in Algorithm \ref{algo_bu}, we set the number of pre-merge rounds to be $\sfloor{(\log_2{n})/ c_\text{pm}}$, where we set the default value of $c_\text{pm}$ to be 2, based on empirical simulations and the observed robustness of this parameter. 

\subsection{Simulation settings} \label{sec4.2}

\begin{table}[ht]
\centering
\resizebox{\textwidth}{!}{
\begin{tabular}{l|lcc}
  \hline
scenario &  sparsity & ($\mathcal{S}_1, \ldots, \mathcal{S}_N$) & ($\theta_1, \ldots, \theta_N$) \\
 \hline
\multirow{5}*{\parbox{4.5cm}{(LD): Low-Dimensional \\$n=120, p=50, N=3$ \\ $\b{\eta}=30 \cdot (1, 2, 3)$}} & sparse & (1, 1, 1) & (2.5, 2.5, 2.5)\\ 
& dense & (35, 35, 35) & (3, 3, 3)\\
& mixed1 & (1, 2, 7) & (2.8, 2.8, 2.8)\\
& mixed2 & (1, 1, 35) & (2.3, 2.3, 2.8)\\
& mixed3 & (1, 7, 35) & (2.3, 2.5, 2.8)\\
\cline{1-4}
\multirow{4}*{\parbox{4.5cm}{(HD): High-Dimensional \\$n=300, p=500, N=5$ \\ $\b{\eta}=50\cdot(1, 2, 3, 4, 5)$}} & sparse & (2, 2, 2, 2, 2) & (2.8, 2.8, 2.8, 2.8, 2.8)\\
& moderate & (22, 22, 22, 22, 22) & (3.8, 3.8, 3.8, 3.8, 3.8)\\
& dense & (350, 350, 350, 350, 350) & (4.3, 4.3, 4.3, 4.3, 4.3)\\
& mixed & (350, 2, 22, 22, 2) & (4, 2.5, 3.5, 3.5, 2.5)\\
\hline 
\end{tabular}}
\caption{Parameter settings with different types of sparsity for scenarios (LD) and (HD).}
\label{scenarios}
\end{table}

In the simulation studies, we focus on relatively frequent change-point settings; less frequent change-point settings can be found in Appendix \ref{extrasim}. We explore two sets of scenarios, low-dimensional (LD) settings when $n > p$ and high-dimensional (HD) settings when $n < p$, where the details can be found in Table \ref{scenarios}. For each set of scenarios, different types of sparsity are examined. \citet{liu2021minimax} study sparse change detection in a high-dimensional mean vector as a minimax testing problem and show that a phase transition occurs in the minimax testing rate when the sparsity level is of order $\sqrt{p \log\log(8n)}$. Following the settings used in \citet{zhang2022adaptive}, we call the sparsity level sparse if $\mathcal{S}_\ell = o(\sqrt{p})$, moderate if $\mathcal{S}_\ell \asymp \sqrt{p}$, dense if $\sqrt{p} = o(\mathcal{S}_\ell)$, and mixed if sparsity varies with change-points. Thus, under either set of scenarios in Table \ref{scenarios}, we respectively set $\mathcal{S}_\ell = \nint{0.1\cdot \sqrt{p}}$, $\mathcal{S}_\ell = \nint{\sqrt{p}}$ and $\mathcal{S}_\ell = \nint{0.7p}$. The last column in Table \ref{scenarios} shows the parameter for each signal strength, where $\theta_\ell=\sqrt{\Delta^{\ell}_{p,n}}$ describes the $\ell_2$ norm of change at each change-point $\eta_\ell$ and all $\mathcal{S}_\ell$ signal coordinates change by the same magnitude with the same sign.

\subsection{Competing methods}\label{sec4.3}
We perform the BUHDA procedure based on the parameter choice in Section \ref{sec4.1} and compare the performance with the following high-dimensional change-point detection methodologies: the sparsified binary segmentation ($\textbf{SBS}$, \citet{cho2015multiple}) and the double cusum algorithm ($\textbf{DC}$, \citet{cho2016change}) implemented in the R package \code{hdbinseg} and the informative sparse projection ($\textbf{INSPECT}$, \citet{wang2018high}) available in the R package \code{InspectChangepoint}, the adaptive self-normalization based approach ($\textbf{SN}$, \citet{zhang2022adaptive}), the scan statistic based algorithm ($\textbf{scanEH}$, \citet{enikeeva2019high}), the $L_\infty$ CUSUM aggregation algorithm ($\textbf{TD}(L_\infty)$, \citet{jirak2015uniform}), the $L_2$ CUSUM aggregation algorithm ($\textbf{TD}(L_2)$, \citet{horvath2012change}). The latter three methods are primarily designed for the testing problem for a single change-point, so we extend these methods to a multiple change-point estimation algorithm using the wild binary segmentation idea of \citet{fryzlewicz2014wild} with randomly chosen intervals. Thresholds for these three methods are chosen in the same way as for our algorithm, as described in Section \ref{sec4.1}. In implementing $\textbf{SN}$ \citep{zhang2022adaptive}, we use the adaptive WBS algorithm presented in Algorithm 2 of their paper that is designed for detecting multiple change-points. 
Whenever tuning parameters are required in running any of the methods introduced above, we follow the suggestions given by the authors in the relevant references. We also compare our BUHDA algorithm with variants of our method, where we only use $L_\infty$ CUSUM aggregation ($\textbf{BU}(L_\infty)$) or $L_2$ CUSUM aggregation ($\textbf{BU}(L_2)$).

\subsection{Simulation results}\label{sec4.4}

Under the simulation settings described in Section \ref{sec4.2}, we report two accuracy measures for the number of change-points: the empirical distribution of $\hat{N}-N$ and the mean squared errors (MSE) of the number of change-points detected over $100$ runs. As an accuracy measure for change-point locations, we report the average adjusted Rand index (ARI) of the estimated segmentation against the true one \citep{rand1971objective, hubert1985comparing}.

Table \ref{table:ld1} summarizes the results for sparse and dense settings considered in the low-dimensional scenarios (LD) where $n > p$. When only sparse changes exist, $L_\infty$ aggregation performs better than $L_2$ for both top-down and bottom-up methods. BUHDA shows comparable performance in terms of the number of change-points estimated. However, the location estimation accuracy, although good, is slightly inferior to some of the other competitors, e.g.\ SN and scanEH. In the dense case, the number of well-performing methods is not as high, and it is unexpected to see INSPECT and SBS performing similarly to the best method due to the dense nature of the change. Among the methods designed for adaptivity, BUHDA outperforms scanEH and remains not far behind SN in terms of correctly estimating all three changes. Remarkably, BUHDA achieves this good level of performance in much shorter computational time compared to SBS, DC, INSPECT, scanEH and SN (running time for SN is longer than 1 minute per repetition in MATLAB, though due to the difference in platform, we have not included its actual running time in the table).

In all three mixed sparsity settings presented in Table \ref{table:ld2}, $L_2$ aggregation performs better than $L_\infty$ aggregation for both top-down and bottom-up methods. While (LD)-mixed1 represents sparse to moderate change-points, the other two include dense change-points as well. With only a fraction of the running time compared to most competitors, BUHDA performs the best in terms of correctly identifying the number of change-points in (LD)-mixed2 and (LD)-mixed3, while showing comparable results to other adaptive methods in (LD)-mixed1. This shows that BUHDA can adapt to unknown sparsity levels of multiple change-points as it uses both norms in a data-adaptive way in building the bottom-up tree.

Table \ref{table:hd} shows the results of the high-dimensional scenario (HD) when $n < p$. 
As expected, TD($L_2$) and BU($L_2$) work well in the dense case, while TD($L_\infty$) and BU($L_\infty$) have the best performance in the sparse case. Excluding these four methods that directly use knowledge of sparsity, 
BUHDA shows competitive performance across sparse, moderate and dense settings, performing on par with the best remaining methods in each category (DC and scanEH in the sparse setting and INSPECT in the moderate and dense settings). More strikingly, in the mixed sparsity setting, BUHDA shows much better performance than all other methods, confirming its superior adaptivity to sparsity levels in the high-dimensional setting. As in the low-dimensional scenarios, BUHDA is computationally more efficient than the competing methods.


\begin{table}[htbp] 
\centering
\resizebox{\textwidth}{!}{
\begin{tabular}{crrrrrrrrrrr}
  \hline
  & & \multicolumn{7}{c}{$\hat{N}-N$} & & & \\
  \cline{3-9}
 Scenario & Method & $\leq$-3 & -2 & -1 & 0 & 1 & 2 & $\geq$3 & MSE & ARI & time\\ 
  \hline
\multirow{10}*{(LD)-sparse} 
  & TD($L_2$)  & 0 & 1 & 11 & 87 & 1 & 0 & 0 & 0.16 & 0.917 & 0.046 \\ 
  & TD($L_\infty$) & 0 & 0 & 0 & 100 & 0 & 0 & 0 & 0.00 & 0.970 & 0.057 \\ 
  & BU($L_2$) & 0 & 1 & 25 & 73 & 1 & 0 & 0 & 0.30 & 0.819 & 0.014 \\ 
  & BU($L_\infty$) & 0 & 0 & 4 & 94 & 2 & 0 & 0 & 0.06 & 0.876 & 0.015 \\ 
  & SBS & 0 & 51 & 45 & 4 & 0 & 0 & 0 & 2.49 & 0.505 & 0.050 \\ 
  & DC & 0 & 0 & 0 & 97 & 1 & 2 & 0 & 0.09 & 0.969 & 0.293 \\ 
  & INSPECT & 0 & 3 & 16 & 80 & 1 & 0 & 0 & 0.29 & 0.911 & 0.104 \\ 
  & scanEH & 0 & 0 & 0 & 99 & 0 & 1 & 0 & 0.04 & 0.969 & 0.283 \\ 
  & SN & 0 & 0 & 1 & 93 & 5 & 1 & 0 & 0.10 & 0.930 & \\ 
  & BUHDA & 0 & 0 & 5 & 93 & 2 & 0 & 0 & 0.07 & 0.874 & 0.019 \\ 
   \hline
  \multirow{10}*{(LD)-dense} 
  & TD($L_2$)  & 0 & 0 & 0 & 99 & 1 & 0 & 0 & 0.01 & 0.980 & 0.046 \\ 
  & TD($L_\infty$) & 0 & 66 & 32 & 2 & 0 & 0 & 0 & 2.96 & 0.466 & 0.056 \\ 
  & BU($L_2$) & 0 & 0 & 0 & 98 & 2 & 0 & 0 & 0.02 & 0.911 & 0.015 \\ 
  & BU($L_\infty$) & 0 & 87 & 12 & 1 & 0 & 0 & 0 & 3.60 & 0.429 & 0.013 \\ 
  & SBS & 0 & 0 & 2 & 97 & 1 & 0 & 0 & 0.03 & 0.938 & 0.052 \\ 
  & DC & 0 & 77 & 19 & 4 & 0 & 0 & 0 & 3.27 & 0.517 & 0.270 \\ 
  & INSPECT & 0 & 0 & 1 & 98 & 1 & 0 & 0 & 0.02 & 0.960 & 0.102 \\ 
  & scanEH & 0 & 1 & 20 & 78 & 0 & 1 & 0 & 0.28 & 0.903 & 0.277 \\ 
  & SN & 0 & 0 & 2 & 93 & 4 & 1 & 0 & 0.10 & 0.918 & \\ 
  & BUHDA & 0 & 0 & 3 & 87 & 10 & 0 & 0 & 0.13 & 0.854 & 0.017 \\ 
  \hline
\end{tabular}}
\caption {Distribution of $\hat{N}-N$ for scenarios (LD)-sparse and (LD)-dense and all methods listed in Section \ref{sec4.3} over 100 simulation runs. Also the average MSE (Mean Squared Error) of the number of change-points detected, the average Adjusted Rand index (ARI) of the estimated segmentation against the true one, the average computational time in seconds using an Intel Core i9 3.6 GHz CPU with 40 GB of RAM, all over 100 simulations.} \label{table:ld1}
\end{table}


\begin{table}[htbp] 
\centering
\resizebox{\textwidth}{!}{
\begin{tabular}{crrrrrrrrrrr}
  \hline
  & & \multicolumn{7}{c}{$\hat{N}-N$} & & & \\
  \cline{3-9}
 Scenario & Method & $\leq$-3 & -2 & -1 & 0 & 1 & 2 & $\geq$3 & MSE & ARI & time\\ 
  \hline
  \multirow{10}*{(LD)-mixed1} 
  & TD($L_2$) & 0 & 0 & 3 & 96 & 1 & 0 & 0 & 0.04 & 0.961 & 0.048 \\ 
  & TD($L_\infty$) & 0 & 0 & 36 & 59 & 5 & 0 & 0 & 0.41 & 0.833 & 0.064 \\ 
  & BU($L_2$) & 0 & 0 & 9 & 89 & 2 & 0 & 0 & 0.11 & 0.872 & 0.015 \\ 
  & BU($L_\infty$) & 0 & 0 & 62 & 38 & 0 & 0 & 0 & 0.62 & 0.739 & 0.016 \\ 
  & SBS & 0 & 1 & 68 & 29 & 1 & 1 & 0 & 0.77 & 0.668 & 0.062 \\ 
  & DC & 0 & 0 & 15 & 78 & 6 & 1 & 0 & 0.25 & 0.875 & 0.337 \\ 
  & INSPECT & 0 & 0 & 2 & 97 & 1 & 0 & 0 & 0.03 & 0.970 & 0.109 \\ 
  & scanEH & 0 & 0 & 0 & 99 & 0 & 1 & 0 & 0.04 & 0.974 & 0.302 \\ 
  & SN & 0 & 0 & 0 & 93 & 6 & 1 & 0 & 0.10 & 0.926 & \\ 
  & BUHDA & 0 & 0 & 2 & 93 & 5 & 0 & 0 & 0.07 & 0.878 & 0.018 \\ 
  \hline
  \multirow{10}*{(LD)-mixed2} 
  & TD($L_2$) & 0 & 1 & 15 & 83 & 1 & 0 & 0 & 0.20 & 0.907 & 0.048 \\ 
  & TD($L_\infty$) & 0 & 1 & 87 & 12 & 0 & 0 & 0 & 0.91 & 0.699 & 0.059 \\ 
  & BU($L_2$) & 0 & 0 & 36 & 63 & 1 & 0 & 0 & 0.37 & 0.803 & 0.015 \\ 
  & BU($L_\infty$) & 0 & 5 & 94 & 1 & 0 & 0 & 0 & 1.14 & 0.622 & 0.016 \\ 
  & SBS & 0 & 16 & 74 & 10 & 0 & 0 & 0 & 1.38 & 0.588 & 0.063 \\ 
  & DC & 0 & 1 & 85 & 11 & 3 & 0 & 0 & 0.92 & 0.689 & 0.311 \\ 
  & INSPECT & 0 & 1 & 25 & 73 & 1 & 0 & 0 & 0.30 & 0.872 & 0.111 \\ 
  & scanEH & 0 & 1 & 21 & 77 & 0 & 1 & 0 & 0.29 & 0.887 & 0.300 \\ 
  & SN & 0 & 0 & 3 & 90 & 6 & 1 & 0 & 0.13 & 0.923 & \\ 
  & BUHDA & 0 & 0 & 6 & 91 & 3 & 0 & 0 & 0.09 & 0.852 & 0.019 \\ 
  \hline
  \multirow{10}*{(LD)-mixed3} 
  & TD($L_2$) & 0 & 0 & 14 & 84 & 2 & 0 & 0 & 0.16 & 0.909 & 0.046 \\ 
  & TD($L_\infty$) & 0 & 1 & 92 & 7 & 0 & 0 & 0 & 0.96 & 0.658 & 0.062 \\ 
  & BU($L_2$) & 0 & 1 & 25 & 73 & 1 & 0 & 0 & 0.30 & 0.824 & 0.015 \\ 
  & BU($L_\infty$) & 0 & 14 & 85 & 1 & 0 & 0 & 0 & 1.41 & 0.593 & 0.014 \\ 
  & SBS & 0 & 1 & 85 & 13 & 1 & 0 & 0 & 0.90 & 0.674 & 0.060 \\ 
  & DC & 0 & 0 & 88 & 11 & 1 & 0 & 0 & 0.89 & 0.680 & 0.337 \\ 
  & INSPECT & 0 & 0 & 25 & 74 & 1 & 0 & 0 & 0.26 & 0.883 & 0.105 \\ 
  & scanEH & 0 & 0 & 23 & 76 & 1 & 0 & 0 & 0.24 & 0.886 & 0.309 \\ 
  & SN & 0 & 2 & 11 & 82 & 4 & 1 & 0 & 0.27 & 0.883 & \\ 
  & BUHDA & 0 & 0 & 12 & 85 & 3 & 0 & 0 & 0.15 & 0.830 & 0.017 \\  
   \hline
\end{tabular}}
\caption {Distribution of $\hat{N}-N$ for scenario (LD)-mixed1--3 and all methods listed in Section \ref{sec4.3} over 100 simulation runs. Also the average MSE (Mean Squared Error) of the number of change-points detected, the average Adjusted Rand index (ARI) of the estimated segmentation against the true one, the average computational time in seconds using an Intel Core i9 3.6 GHz CPU with 40 GB of RAM, all over 100 simulations.} \label{table:ld2}
\end{table}


\begin{table}[htbp] 
\centering
\footnotesize
\renewcommand{\arraystretch}{0.9}
\begin{tabular}{crrrrrrrrrrr}
  \hline
  & & \multicolumn{7}{c}{$\hat{N}-N$} & & &\\
  \cline{3-9}
 Scenario & Method & $\leq$-3 & -2 & -1 & 0 & 1 & 2 & $\geq$3 & MSE & ARI & time\\ 
  \hline
\multirow{10}*{(HD)-sparse} 
  & TD($L_2$) & 0 & 13 & 39 & 48 & 0 & 0 & 0 & 0.91 & 0.848 & 1.448 \\ 
  & TD($L_\infty$) & 0 & 0 & 0 & 86 & 13 & 1 & 0 & 0.17 & 0.953 & 1.568 \\ 
  & BU($L_2$) & 4 & 24 & 54 & 18 & 0 & 0 & 0 & 1.86 & 0.717 & 0.072 \\  
  & BU($L_\infty$) & 0 & 0 & 2 & 96 & 2 & 0 & 0 & 0.04 & 0.863 & 0.074 \\ 
  & SBS & 6 & 52 & 38 & 4 & 0 & 0 & 0 & 3.00 & 0.618 & 1.789 \\ 
  & DC & 0 & 0 & 0 & 93 & 7 & 0 & 0 & 0.07 & 0.969 & 4.859 \\ 
  & INSPECT & 0 & 41 & 36 & 23 & 0 & 0 & 0 & 2.00 & 0.788 & 2.772 \\ 
  & scanEH & 0 & 0 & 0 & 96 & 4 & 0 & 0 & 0.04 & 0.980 & 4.069 \\  
  & SN & 0 & 0 & 0 & 0 & 0 & 88 & 12 & 0.12 & 0.941 & \\ 
  & BUHDA & 0 & 0 & 3 & 94 & 3 & 0 & 0 & 0.06 & 0.858 & 0.084 \\ 
  \hline
\multirow{10}*{(HD)-moderate} 
  & TD($L_2$) & 0 & 0 & 0 & 99 & 1 & 0 & 0 & 0.01 & 0.980 & 1.401 \\ 
  & TD($L_\infty$) & 0 & 45 & 45 & 10 & 0 & 0 & 0 & 2.25 & 0.693 & 1.481 \\ 
  & BU($L_2$) & 0 & 0 & 1 & 98 & 1 & 0 & 0 & 0.02 & 0.880 & 0.069 \\ 
  & BU($L_\infty$) & 15 & 70 & 15 & 0 & 0 & 0 & 0 & 4.30 & 0.617 & 0.068 \\ 
  & SBS & 0 & 0 & 0 & 90 & 10 & 0 & 0 & 0.10 & 0.936 & 1.761 \\ 
  & DC & 0 & 20 & 41 & 39 & 0 & 0 & 0 & 1.21 & 0.832 & 4.734 \\ 
  & INSPECT & 0 & 0 & 0 & 98 & 2 & 0 & 0 & 0.02 & 0.986 & 2.750 \\  
  & scanEH & 0 & 0 & 0 & 99 & 1 & 0 & 0 & 0.01 & 0.986 & 4.008 \\ 
  & SN & 0 & 0 & 0 & 4 & 32 & 58 & 6 & 0.54 & 0.846 & \\ 
  & BUHDA & 0 & 0 & 7 & 92 & 1 & 0 & 0 & 0.08 & 0.853 & 0.080 \\ 
  \hline
\multirow{10}*{(HD)-dense} 
  & TD($L_2$) & 0 & 0 & 0 & 100 & 0 & 0 & 0 & 0.00 & 0.990 & 1.374 \\ 
  & TD($L_\infty$) & 99 & 1 & 0 & 0 & 0 & 0 & 0 & 11.61 & 0.389 & 1.440 \\ 
  & BU($L_2$) & 0 & 0 & 0 & 94 & 6 & 0 & 0 & 0.06 & 0.903 & 0.068 \\ 
  & BU($L_\infty$) & 100 & 0 & 0 & 0 & 0 & 0 & 0 & 14.46 & 0.325 & 0.067 \\  
  & SBS & 0 & 2 & 26 & 71 & 1 & 0 & 0 & 0.35 & 0.877 & 1.734 \\ 
  & DC & 0 & 100 & 0 & 0 & 0 & 0 & 0 & 4.00 & 0.673 & 4.695 \\ 
  & INSPECT & 0 & 0 & 0 & 79 & 19 & 2 & 0 & 0.27 & 0.914 & 2.739 \\ 
  & scanEH & 0 & 32 & 51 & 17 & 0 & 0 & 0 & 1.79 & 0.775 & 3.999 \\ 
  & SN & 0 & 0 & 0 & 0 & 7 & 83 & 10 & 0.17 & 0.911 & \\ 
  & BUHDA & 0 & 0 & 1 & 79 & 20 & 0 & 0 & 0.21 & 0.841 & 0.078 \\ 
  \hline
\multirow{10}*{(HD)-mixed} 
  & TD($L_2$) & 0 & 20 & 41 & 39 & 0 & 0 & 0 & 1.21 & 0.839 & 1.359 \\ 
  & TD($L_\infty$) & 0 & 57 & 39 & 4 & 0 & 0 & 0 & 2.67 & 0.677 & 1.513 \\ 
  & BU($L_2$) & 0 & 38 & 44 & 18 & 0 & 0 & 0 & 1.96 & 0.724 & 0.069 \\ 
  & BU($L_\infty$) & 0 & 99 & 1 & 0 & 0 & 0 & 0 & 3.97 & 0.617 & 0.069 \\  
  & SBS & 0 & 61 & 33 & 6 & 0 & 0 & 0 & 2.77 & 0.665 & 1.738 \\ 
  & DC & 0 & 46 & 51 & 3 & 0 & 0 & 0 & 2.35 & 0.710 & 4.856 \\ 
  & INSPECT & 0 & 10 & 45 & 45 & 0 & 0 & 0 & 0.85 & 0.852 & 2.697 \\ 
  & scanEH & 0 & 0 & 56 & 42 & 2 & 0 & 0 & 0.58 & 0.852 & 4.089 \\ 
  & SN & 0 & 0 & 0 & 3 & 52 & 41 & 4 & 0.68 & 0.797 & \\ 
  & BUHDA & 0 & 0 & 13 & 82 & 5 & 0 & 0 & 0.18 & 0.830 & 0.082 \\  
   \hline
\end{tabular}
\caption {Distribution of $\hat{N}-N$ for scenario (HD) and all methods listed in Section \ref{sec4.3} over 100 simulation runs. Also the average MSE (Mean Squared Error) of the number of change-points detected, the average Adjusted Rand index (ARI) of the estimated segmentation against the true one, the average computational time in seconds using an Intel Core i9 3.6 GHz CPU with 40 GB of RAM, all over 100 simulations.} \label{table:hd}
\end{table}

\clearpage

\subsection{Real data example} \label{sec4.5}

We analyze monthly percentage changes in the UK's average house price across 32 London boroughs, from January 1995 to June 2025. The average price is based on actual transaction data collected from HM Land Registry and includes all property types: detached, semi-detached and terraced houses, as well as flats and maisonettes. The data set is available from \url{https://landregistry.data.gov.uk/app/ukhpi/?lang=en}.

\begin{figure}[ht!] 
\centering
\includegraphics[width=15cm, height=11cm]{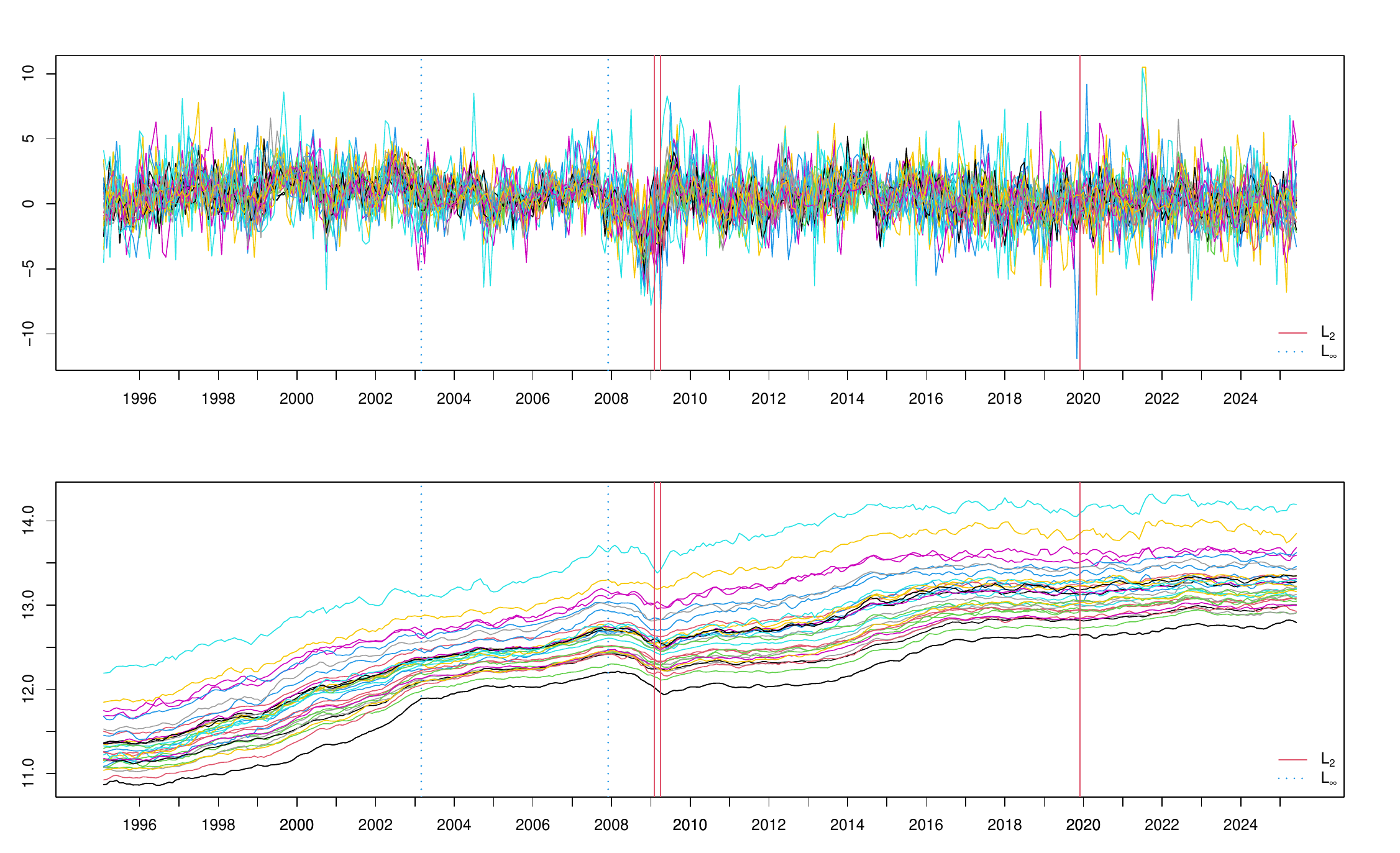}
 \caption{Top: the log-return dataset of London house prices and estimated change-points by BUHDA on this dataset, with colors denoting categories of changes (see Section~\ref{thresholding}); bottom: the same change-points overlaid on the log prices.}
\label{fig:houseprice}
\end{figure}

To address potential cross-sectional dependence, instead of the raw data, we used PCA-transformed data, namely the standardized matrix of principal component scores. Since the data matrix was already standardized, we skipped the normalization step prior to analysis. Applying our BUHDA algorithm using the threshold-selection method described in Section \ref{sec4.1} identified 5 change-points. The other methods, scanEH and INSPECT, detect 6 and 2 change-points, respectively. 

As BUHDA detects change-points via connected thresholding described in Section \ref{thresholding}, the detected change-points can be classified into three cases: surviving only by $\lambda_2$, only by $\lambda_\infty$, or by both $\lambda_2$ and $\lambda_\infty$. Figure \ref{fig:houseprice} shows that those surviving either $\lambda_2$ or $\lambda_\infty$ are located around known crises such as the global financial crisis (2007--2009) and the Covid-19 restrictions. As shown at the bottom of Figure \ref{fig:houseprice}, the change-points identified in the returns domain capture shifts in the log-price domain.

\ifnotblind{
\paragraph{Acknowledgments.}
} \fi

\bibliographystyle{apalike}
\bibliography{ref}

@article{morresi2024high,
  title={High-dimensional detection of Landscape Dynamics: a Landsat time series-based algorithm for forest disturbance mapping and beyond},
  author={Morresi, Donato and Maeng, Hyeyoung and Marzano, Raffaella and Lingua, Emanuele and Motta, Renzo and Garbarino, Matteo},
  journal={GIScience \& Remote Sensing},
  volume={61},
  pages={2365001},
  year={2024},
  publisher={Taylor \& Francis}
}

@article{wang2022inference,
  title={Inference for change points in high-dimensional data via selfnormalization},
  author={Wang, Runmin and Zhu, Changbo and Volgushev, Stanislav and Shao, Xiaofeng},
  journal={The Annals of Statistics},
  volume={50},
  pages={781--806},
  year={2022},
  publisher={Institute of Mathematical Statistics}
}

@article{chen2022inference,
  title={Inference of breakpoints in high-dimensional time series},
  author={Chen, Likai and Wang, Weining and Wu, Wei Biao},
  journal={Journal of the American Statistical Association},
  volume={117},
  pages={1951--1963},
  year={2022},
  publisher={Taylor \& Francis}
}

@article{tickle2021computationally,
  author  = {Tickle, S. O. and Eckley, I. A. and Fearnhead, P.},
  title   = {A Computationally Efficient, High-Dimensional Multiple Changepoint Procedure with Application to Global Terrorism Incidence},
  journal = {Journal of the Royal Statistical Society Series A: Statistics in Society},
  year    = {2021},
  volume  = {184},
  pages   = {1303--1325},
}

@book{bosq1998nonparametric,
  title={Nonparametric statistics for stochastic processes},
  author={Bosq, Denis},
  year={1998},
  publisher={Springer, New York}
}

@article{maeng2023detecting,
  title={Detecting linear trend changes in data sequences},
  author={Maeng, Hyeyoung and Fryzlewicz, Piotr},
  journal={Statistical Papers},
  volume={65},
  pages={1645--1675},
  year={2024},
  publisher={Springer}
}

@article{yu2021finite,
  title={Finite sample change point inference and identification for high-dimensional mean vectors},
  author={Yu, Mengjia and Chen, Xiaohui},
  journal={Journal of the Royal Statistical Society Series B: Statistical Methodology},
  volume={83},
  pages={247--270},
  year={2021},
  publisher={Oxford University Press}
}

@article{wang2023computationally,
  title={Computationally efficient and data-adaptive changepoint inference in high dimension},
  author={Wang, Guanghui and Feng, Long},
  journal={Journal of the Royal Statistical Society Series B: Statistical Methodology},
  volume={85},
  pages={936--958},
  year={2023},
  publisher={Oxford University Press US}
}

@article{liu2020unified,
  title={A unified data-adaptive framework for high dimensional change point detection},
  author={Liu, Bin and Zhou, Cheng and Zhang, Xinsheng and Liu, Yufeng},
  journal={Journal of the Royal Statistical Society Series B: Statistical Methodology},
  volume={82},
  pages={933--963},
  year={2020},
  publisher={Oxford University Press}
}

@article{rand1971objective,
  title={Objective criteria for the evaluation of clustering methods},
  author={Rand, William M},
  journal={Journal of the American Statistical Association},
  volume={66},
  pages={846--850},
  year={1971},
  publisher={Taylor \& Francis}
}

@article{hubert1985comparing,
  title={Comparing partitions},
  author={Hubert, Lawrence and Arabie, Phipps},
  journal={Journal of classification},
  volume={2},
  pages={193--218},
  year={1985},
  publisher={Springer}
}

@article{zhang2022adaptive,
  title={Adaptive inference for change points in high-dimensional data},
  author={Zhang, Yangfan and Wang, Runmin and Shao, Xiaofeng},
  journal={Journal of the American Statistical Association},
  volume={117},
  pages={1751--1762},
  year={2022},
  publisher={Taylor \& Francis}
}

@article{liu2021minimax,
  title={Minimax rates in sparse, high-dimensional change point detection},
  author={Liu, Haoyang and Gao, Chao and Samworth, Richard J},
  journal={The Annals of Statistics},
  volume={49},
  pages={1081--1112},
  year={2021},
}

@article{verzelen2020optimal,
  title={Optimal change-point detection and localization},
  author={Verzelen, Nicolas and Fromont, Magalie and Lerasle, Matthieu and Reynaud-Bouret, Patricia},
  journal={The Annals of Statistics},
  volume={51},
  pages={1586--1610},
  year={2023},
  publisher={Institute of Mathematical Statistics}
}

@article{laurent2000adaptive,
  title={Adaptive estimation of a quadratic functional by model selection},
  author={Laurent, Beatrice and Massart, Pascal},
  journal={The Annals of Statistics},
  volume={28},
  pages={1302--1338},
  year={2000},
  publisher={JSTOR}
}

@article{cribben2017estimating,
  title={Estimating whole-brain dynamics by using spectral clustering},
  author={Cribben, Ivor and Yu, Yi},
  journal={Journal of the Royal Statistical Society Series C: Applied Statistics},
  volume={66},
  pages={607--627},
  year={2017},
  publisher={Wiley Online Library}
}

@article{zhang2010detecting,
  title={Detecting simultaneous changepoints in multiple sequences},
  author={Zhang, Nancy R and Siegmund, David O and Ji, Hanlee and Li, Jun Z},
  journal={Biometrika},
  volume={97},
  pages={631--645},
  year={2010},
  publisher={Oxford University Press}
}

@article{bai2010common,
  title={Common breaks in means and variances for panel data},
  author={Bai, Jushan},
  journal={Journal of Econometrics},
  volume={157},
  pages={78--92},
  year={2010},
  publisher={Elsevier}
}

@article{li2019change,
  title={Change Point Detection in the Mean of High-Dimensional Time Series Data under Dependence},
  author={Li, Jun and Xu, Minya and Zhong, Ping-Shou and Li, Lingjun},
  journal={arXiv preprint arXiv:1903.07006},
  year={2019}
}

@article{groen2013multivariate,
  title={Multivariate methods for monitoring structural change},
  author={Groen, Jan JJ and Kapetanios, George and Price, Simon},
  journal={Journal of Applied Econometrics},
  volume={28},
  pages={250--274},
  year={2013},
  publisher={Wiley Online Library}
}

@article{horvath2012change,
  title={Change-point detection in panel data},
  author={Horv{\'a}th, Lajos and Hu{\v{s}}kov{\'a}, Marie},
  journal={Journal of Time Series Analysis},
  volume={33},
  pages={631--648},
  year={2012},
  publisher={Wiley Online Library}
}

@article{enikeeva2019high,
  title={High-dimensional change-point detection with sparse alternatives},
  author={Enikeeva, Farida and Harchaoui, Zaid},
  journal={The Annals of Statistics},
  volume={47},
  pages={2051--2079},
  year={2019}
}

@article{bardwell2019most,
  title={Most recent changepoint detection in panel data},
  author={Bardwell, Lawrence and Fearnhead, Paul and Eckley, Idris A and Smith, Simon and Spott, Martin},
  journal={Technometrics},
  volume={61},
  pages={88--98},
  year={2019},
  publisher={Taylor \& Francis}
}

@article{jirak2015uniform,
  title={Uniform change point tests in high dimension},
  author={Jirak, Moritz},
  journal={The Annals of Statistics},
  volume={43},
  pages={2451--2483},
  year={2015},
  publisher={Institute of Mathematical Statistics}
}

@article{wang2018high,
  title={High dimensional change point estimation via sparse projection},
  author={Wang, Tengyao and Samworth, Richard J},
  journal={Journal of the Royal Statistical Society Series B: Statistical Methodology},
  volume={80},
  pages={57--83},
  year={2018},
  publisher={Wiley Online Library}
}

@article{cho2015multiple,
  title={Multiple-change-point detection for high dimensional time series via sparsified binary segmentation},
  author={Cho, Haeran and Fryzlewicz, Piotr},
  journal={Journal of the Royal Statistical Society Series B: Statistical Methodology},
  volume={77},
  pages={475--507},
  year={2015},
  publisher={Wiley Online Library}
}

@article{cho2016change,
  title={Change-point detection in panel data via double CUSUM statistic},
  author={Cho, Haeran and others},
  journal={Electronic Journal of Statistics},
  volume={10},
  pages={2000--2038},
  year={2016},
  publisher={The Institute of Mathematical Statistics and the Bernoulli Society}
}

@article{hampel1974influence,
  title={The influence curve and its role in robust estimation},
  author={Hampel, Frank R},
  journal={Journal of the american statistical association},
  volume={69},
  pages={383--393},
  year={1974},
  publisher={Taylor \& Francis}
}

@article{fisch2022linear,
  title={A linear time method for the detection of collective and point anomalies},
  author={Fisch, Alexander TM and Eckley, Idris A and Fearnhead, Paul},
  journal={Statistical Analysis and Data Mining: The ASA Data Science Journal},
  volume={15},
  pages={494--508},
  year={2022},
  publisher={Wiley Online Library}
}

@article{fryzlewicz2017tail,
  title={Tail-greedy bottom-up data decompositions and fast mulitple change-point detection},
  author={Fryzlewicz, Piotr},
  journal={The Annals of Statistics},
  volume={46},
  pages={3390--3421},
  year={2018},
  publisher={Institute of Mathematical Statistics}
}

@article{fryzlewicz2014wild,
  title={Wild binary segmentation for multiple change-point detection},
  author={Fryzlewicz, Piotr},
  journal={The Annals of Statistics},
  volume={42},
  pages={2243--2281},
  year={2014},
  publisher={Institute of Mathematical Statistics}
}

\clearpage

\appendix

\numberwithin{equation}{section}
\numberwithin{figure}{section}
\numberwithin{table}{section}

\section{Proofs} \label{pf}
The proofs of Theorems \ref{thm1}-\ref{thm3} are given below. 

\subsection{Some useful lemma}\label{hits.lemmas}
We first present a preparatory lemma.



\begin{lem} \label{lm5.1}
Suppose that $\{\b{X}_{i, \cdot}\}_{i=1}^p$ follow model \eqref{model} with $\sigma_i=1$ for all $i=1, \ldots, p$. Assume $\lambda_\infty = c_1\log^{1/2} (n)$ for a constant $c_1 \geq \sqrt{2(3+\alpha)}$ and the dimension $p$ satisfies $p \lesssim n^\alpha$ for some fixed $\alpha \in (0, \infty)$, then we have $P(A_{p, n}) \rightarrow 1$ as $n \rightarrow \infty$ where 
\begin{equation} \label{e.ap.5.1}
A_{p, n}=\bigg\{ \forall 0 \leq u < v < w \leq n \quad \max_{i} \; |\langle \psi_{u, v, w}, \bve_i \rangle | \leq \lambda_{\infty} \bigg\},
\end{equation} 
and $\psi_{u, v, w}$ is a $n$-dimensional vector, where each component $\psi_{u, v, w, t}$ is 
\begin{equation}
  \psi_{u, v, w, t} =
    \begin{cases}
      \sqrt{\frac{w-v}{(v-u)(w-u)}} & \text{if $u<t\leq v$},\\
      -\sqrt{\frac{v-u}{(w-v)(w-u)}} & \text{if $v<t\leq w$},\\
      0 & \text{otherwise}.
    \end{cases}       
\end{equation}

\end{lem}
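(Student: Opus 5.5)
The plan is a union bound over all triples $(u,v,w)$ and all coordinates $i$, combined with a standard Gaussian tail bound. First I check that $\psi_{u,v,w}$ has unit Euclidean norm:
\[
\|\psi_{u,v,w}\|_2^2=(v-u)\frac{w-v}{(v-u)(w-u)}+(w-v)\frac{v-u}{(w-v)(w-u)}=\frac{(w-v)+(v-u)}{w-u}=1 .
\]
Since the noise $\bve_i=(\varepsilon_{i,1},\ldots,\varepsilon_{i,n})^\top$ has i.i.d.\ $N(0,1)$ entries when $\sigma_i=1$, each inner product $\langle \psi_{u,v,w},\bve_i\rangle$ is exactly $N(0,1)$. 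No independence across different triples or coordinates is needed, because only the marginal distributions enter the union bound.

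Next I apply the Mills-ratio bound $P(|Z|>x)\le 2\phi(x)/x\le e^{-x^2/2}$, valid for $x\ge 1$, say. With $x=\lambda_\infty=c_1\log^{1/2}(n)$ this gives
\[
P\bigl(|\langle \psi_{u,v,w},\bve_i\rangle|>\lambda_\infty\bigr)\le n^{-c_1^2/2}\le n^{-(3+\alpha)},
\]
using $c_1^2\ge 2(3+\alpha)$.

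Then I count the events in the union. There are at most $n^3$ triples $0\le u<v<w\le n$ (more precisely $\binom{n+1}{3}\le n^3$ for $n\ge 2$) and $p\le Cn^\alpha$ coordinates, where $C$ is the constant implicit in $p\lesssim n^\alpha$. Hence
\[
P(A_{p,n}^c)\le n^3\cdot Cn^\alpha\cdot n^{-(3+\alpha)}=O(1)\cdot\frac{1}{\lambda_\infty},
\]
once the factor $2\phi(x)/x$ is kept rather than dropped. The right-hand side tends to $0$ as $n\to\infty$, so $P(A_{p,n})\to1$.

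The only delicate point is the constant bookkeeping. The crude bound $e^{-x^2/2}$ gives exactly $n^{-(3+\alpha)}$, which only balances the count $n^{3+\alpha}$ and leaves $P(A_{p,n}^c)=O(1)$ rather than $o(1)$. I will therefore keep the factor $1/x$ in the Gaussian tail, $P(|Z|>x)\le \sqrt{2/\pi}\,x^{-1}e^{-x^2/2}$. This yields $P(A_{p,n}^c)\lesssim (\log n)^{-1/2}\to 0$, which is exactly what the threshold $c_1\ge\sqrt{2(3+\alpha)}$ allows. If a polynomial rate is wanted instead, any strictly larger $c_1$ gives it.
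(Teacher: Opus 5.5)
Your proof is correct and follows the paper's route: a Bonferroni bound over the at most $n^3$ triples and $p\lesssim n^\alpha$ coordinates, combined with the Mills-ratio bound $P(|Z|>\lambda_\infty)\le 2\phi(\lambda_\infty)/\lambda_\infty$. As you correctly observe, keeping the $1/\lambda_\infty$ factor is what gives $P(A_{p,n}^c)\lesssim(\log n)^{-1/2}\to 0$ at the boundary value $c_1=\sqrt{2(3+\alpha)}$. Your check that $\|\psi_{u,v,w}\|_2=1$ makes explicit the paper's implicit claim that each inner product is standard normal. One small fix: the intermediate display equating $n^3\cdot Cn^\alpha\cdot n^{-(3+\alpha)}$ with $O(1)/\lambda_\infty$ is misstated as written and should be replaced by the bound in your final paragraph.
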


\textbf{Proof.} 
Using a simple Bonferroni inequality, we have
\begin{align*} 
1-P(A_{p, n}) & \leq \sum_{i=1}^p \sum_{(u, v, w)} P(|Z|>\lambda_{\infty}) \\
& \leq 2pn^3 \frac{\phi_Z(\lambda_{\infty})}{\lambda_{\infty}} \\
& = 2pn^3 \frac{e^{-\lambda_{\infty}^2/2} / \sqrt{2\pi}}{\lambda_{\infty}} \\ \numberthis \label{a11}
& = \frac{2}{\sqrt{2\pi}}\cdot \frac{1}{c_1\sqrt{\log(n)}} \cdot \frac{n^{(3+\alpha)}}{n^{c_1^2/2}} \\
& \rightarrow 0,
\end{align*}
as $n \rightarrow \infty$, where $\phi_Z$ is the p.d.f. of a standard normal $Z$. The second inequality in \eqref{a11} holds because
\begin{equation} \label{a12}
P(|Z|>\lambda_{\infty}) = 2\frac{1}{\sqrt{2\pi}} \int_{\lambda_{\infty}}^\infty e^{-x^2/2} dx \leq 2\frac{1}{\sqrt{2\pi}} \int_{\lambda_{\infty}}^\infty \frac{x}{\lambda_{\infty}} e^{-x^2/2} dx = 2 \frac{\phi_Z(\lambda_{\infty})}{\lambda_{\infty}},
\end{equation}
and the remaining parts because of the definition of $\phi_Z$ and $\lambda_\infty$ and the condition given on $c_1$. This completes the proof.


\vspace{10pt}


\begin{lem} \label{lm5.2}
Suppose that $\{\b{X}_{i, \cdot}\}_{i=1}^p$ follow model \eqref{model} with $\sigma_i=1$ for all $i=1, \ldots, p$. If we set the threshold $\lambda_{2} = c_2 \sqrt{p + \log n}$ with a sufficiently large constant $c_2 > 0$ and the dimension satisfies $p\lesssim n^\alpha$ for some fixed $\alpha \in (0, \infty)$, then we have $P(B_{p, n}) \rightarrow 1$ as $n \rightarrow \infty$ where 
\begin{equation} \label{e.ap.5.2}
B_{p, n}=\bigg\{ \forall 0 \leq u < v < w \leq n \quad \bigg\{ \sum_{i=1}^p \; |\langle \psi_{u, v, w}, \bve_i \rangle |^2 \bigg\}^{1/2}  \leq \lambda_{2} \bigg\}. 
\end{equation}
\end{lem}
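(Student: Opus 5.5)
For a fixed triple $(u,v,w)$ with $0\le u<v<w\le n$, I would first pin down the null distribution of the statistic. By construction $\|\psi_{u,v,w}\|_2=1$, because
\[
(v-u)\frac{w-v}{(v-u)(w-u)}+(w-v)\frac{v-u}{(w-v)(w-u)}=1 .
\]
Since the $\bve_i$ are independent $N(0,I_n)$ vectors, the inner products $Z_i:=\langle\psi_{u,v,w},\bve_i\rangle$, $i=1,\ldots,p$, are i.i.d.\ standard normal. Hence
\[
S_{u,v,w}:=\sum_{i=1}^p\langle\psi_{u,v,w},\bve_i\rangle^2\sim\chi^2_p .
\]
This reduces the problem to a chi-square tail bound combined with a union bound over triples, in the same spirit as the Bonferroni argument in the proof of Lemma~\ref{lm5.1}.

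For the tail bound I would use the Laurent--Massart inequality \citep{laurent2000adaptive}: for every $x>0$,
\[
P\bigl(S_{u,v,w}\ge p+2\sqrt{px}+2x\bigr)\le e^{-x}.
\]
I take $x=(3+\alpha+1)\log n$, or any $x=K\log n$ with $K>3$. The triple constraint $0\le u<v<w\le n$ allows at most $(n+1)^3\le 8n^3$ triples. The union bound then gives
\[
1-P(B_{p,n})\le 8n^3 e^{-K\log n}=8n^{3-K}\to 0 .
\]
Unlike Lemma~\ref{lm5.1}, there is no union over $i$ here, because the sum over coordinates is handled by the single chi-square variable. Consequently the assumption $p\lesssim n^\alpha$ is not needed for this step; it only affects the constants through the comparison between $p$ and $\log n$.

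The remaining step is to show that $\lambda_2^2=c_2^2(p+\log n)$ exceeds $p+2\sqrt{pK\log n}+2K\log n$. By AM--GM, $2\sqrt{pK\log n}\le p+K\log n$, so
\[
p+2\sqrt{pK\log n}+2K\log n\le 2p+3K\log n\le \max(2,3K)\,(p+\log n).
\]
It therefore suffices to take $c_2\ge\sqrt{\max(2,3K)}$, for instance $c_2\ge 2\sqrt{3}$ when $K=4$. This gives the event $B_{p,n}$, which bounds the square root of $S_{u,v,w}$ by $\lambda_2$, with the stated probability.

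The only delicate point is checking that $\|\psi_{u,v,w}\|_2=1$, so that each $Z_i$ is exactly standard normal, and handling the degenerate endpoint cases of the triples. Both are routine, so I do not expect a real obstacle.
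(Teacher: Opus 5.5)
Your proposal is correct and follows the paper's proof essentially step for step: a union bound over the $O(n^3)$ triples, the Laurent--Massart $\chi^2_p$ tail bound with $x=K\log n$ for $K>3$, and the inequality $2\sqrt{px}\le p+x$, which gives $p+2\sqrt{px}+2x\le 2p+3x\le c_2^2(p+\log n)$. You also make two points explicit that the paper leaves implicit, and both are correct: $\|\psi_{u,v,w}\|_2=1$, so the statistic is exactly $\chi^2_p$, and the assumption $p\lesssim n^\alpha$ is not needed for this lemma.
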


\textbf{Proof.} 
Using a simple Bonferroni inequality, we have
\begin{equation} 
1-P(B_{p, n}) \leq  \sum_{(u, v, w)} P(\lVert Z \rVert_2>\lambda_{2}) = \sum_{(u, v, w)} P(\lVert Z \rVert^2_2>\lambda_{2}^2) \leq n^3 \, e^{-C\log n} = n^{3-C}
\end{equation}
where $Z \sim N(\boldsymbol{0}, \textbf{I}_p)$. The last inequality holds by Lemma 1 of \citet{laurent2000adaptive}, which states that $P(\lVert Z \rVert_2^2> p + 2\sqrt{px} + 2x) \leq e^{-x}$. Taking $x = C\log n$ and using $2\sqrt{px} \leq p + x$, we have $p + 2\sqrt{px} + 2x \leq 2p + 3x \leq c_2^2(p + \log n) = \lambda_2^2$ for a sufficiently large constant $c_2$, so that $P(\lVert Z \rVert_2^2 > \lambda_2^2) \leq e^{-C\log n}$. Choosing $C > 3$ gives $n^{3-C} \to 0$ as $n \to \infty$. This completes the proof.


\vspace{10pt}

\subsection{Proof of Theorems \ref{thm1} - \ref{thm3}}\label{hits.pfthm1}

\noindent {\bf Proof of Theorem \ref{thm1}.}

From the thresholding in Section \ref{thresholding}, the estimator of ${\mu}_{i; u, v, w} = \langle \b f_{i, \cdot}, \psi_{u, v, w} \rangle$ for $i = 1, \ldots, p$ can be obtained as
\begin{equation} \label{muhat0}
\begin{split}
\hat{\mu}_{i; u, v, w} = C_{i; u, v, w} \cdot \mathbb{I} \; \Big\{ \, & \exists \text{ a child segment $(u', w']$ of $(u, w]$} \\
& \text{ such that } C^{L_\infty}_{u' ,v' ,w'} > \lambda_\infty \text{ or } C^{L_2}_{u' ,v' ,w'} > \lambda_2  \, \Big\}, 
\end{split}
\end{equation}
where $\mathbb{I}$ is the indicator function. Here we introduce a different notation of $\hat{\mu}_{i, [u, v, w]}$ in \eqref{muhat0} as follows:
\begin{equation} \label{muhat}
\begin{split}
\hat{\mu}_i^{(j, k)} = C^{(j, k)}_{i} \cdot \mathbb{I} \; \Bigg\{ \, & \exists (j', k') \in \mathcal{O}_{j, k} \quad \max_i\big|C_i^{(j', k')}\big| > \lambda_\infty \\
& \text{ or } \bigg\{\sum_{i=1}^p \big(C_i^{(j', k')}\big)^2 \bigg\}^{1/2} > \lambda_2 \, \Bigg\}, \; i=1,\ldots, p,
\end{split}
\end{equation}
where 
\begin{equation} \label{e5.2.11}
\begin{split}
\mathcal{O}_{j, k} = \{(j', k'), \, & j' = 1, \ldots, j, \; k' = 1, \ldots, K(j'): \\
& C^{(j', k')}_{\cdot; u' ,v' ,w'}  \text{ is such that } (u', w']  \text{ is a child segment of } (u, w] \}.
\end{split}
\end{equation}
Note that $\hat{\mu}_{i; u, v, w}$ and $\hat{\mu}_i^{(j, k)}$ have one-to-one correspondence and the set $\mathcal{O}_{j, k}$ includes all pairs of $(j', k')$ whose corresponding CUSUM statistic $C^{(j', k')}_{\cdot; u' ,v' ,w'}$ is a child of $C^{(j, k)}_{\cdot; u ,v ,w}$.

Let $\mathcal{R}^1 = \big\{(j, k),\; j = 1, \ldots, J, \; k = 1, \ldots, K(j)  : C_{\cdot; u ,v ,w}^{(j, k)}$ is such that $u < \eta_\ell + 1/2 < w$ for some $\ell=1, \ldots, N \big\}$, and $\mathcal{R}^0 = \{(j, k),\; j = 1, \ldots, J, \; k = 1, \ldots, K(j) \} \setminus \mathcal{R}^1$. Due to the orthonormality of the wavelet transform, on the event $A_{p, n} \cap B_{p, n}$, which has probability $1-o(1)$, defined in Lemmas \ref{lm5.1} and \ref{lm5.2}, we have
{\small 
\begin{align*} 
& \| {\tilde{\b f}}-\b{f} \|_{p, n}^2 \;  \\
& = \;  \frac{1}{p} \frac{1}{n} \sum_{i=1}^p  \sum_{j=1}^J \sum_{k=1}^{K(j)} \bigg(C_i^{(j, k)} \cdot \mathbb{I} \Big\{ \, \exists (j', k') \in \mathcal{O}_{j, k} \quad C^{L_\infty(j', k')} > \lambda_\infty \text{ or } C^{L_2(j', k')} > \lambda_2 \, \Big\} - \mu_i^{(j, k)} \bigg)^2 \\
& \quad  + \; \frac{1}{pn} \sum_{i=1}^p (s_{i, [1, n]} - \mu_i^{(0, 1)})^2 \\
& \; \leq \; \frac{1}{p} \frac{1}{n} \sum_{i=1}^p \bigg( \sum_{(j, k) \in \mathcal{R}^0} + \sum_{(j, k) \in \mathcal{R}^1} \Bigg) \\
& \qquad \times \Bigg( C_i^{(j, k)} \cdot \mathbb{I} \Big\{ \, \exists (j', k') \in \mathcal{O}_{j, k} \quad C^{L_\infty(j', k')} > \lambda_\infty \text{ or } C^{L_2(j', k')} > \lambda_2 \, \Big\} - \mu_i^{(j, k)} \bigg)^2 \\
& \quad + \; c_1^2n^{-1} \log (n) \\
& \; =: \; \mathit{I} + \mathit{II} + c_1^2n^{-1} \log (n) \numberthis ,
\end{align*}
}
where 
\begin{align*}
   &C^{L_\infty(j', k')} = \max_i\big|C_i^{(j', k')}\big|, \\
   &C^{L_2(j', k')} = \bigg\{\sum_{i=1}^p \big(C_i^{(j', k')}\big)^2 \bigg\}^{1/2}
\end{align*}

Since $(j, k) \in \mathcal{R}^0$, on the set $A_{p, n} \cap B_{p, n}$, by Lemmas \ref{lm5.1} and \ref{lm5.2}, we have $\max_i\big|C_i^{(j', k')}\big| \leq \lambda_\infty$ and $\big\{\sum_{i=1}^p \big(C_i^{(j', k')}\big)^2 \big\}^{1/2} \leq \lambda_2$ for all $(j', k') \in \mathcal{O}_{j, k}$. Also, by the fact that $\mu_i^{(j, k)} = 0$ for $(j,k) \in \mathcal{R}^0$ and $i=1, \ldots, p$, we have $\mathit{I} = 0$.

For $\mathit{II}$, we denote 
\begin{align*}
\mathcal{B} =  \bigg\{ \, \exists (j', k') \in \mathcal{O}_{j, k} \; \max_i \big|C_i^{(j', k')} \big| > \lambda_\infty \text{ or } \bigg\{\sum_{i=1}^p \big(C_i^{(j', k')}\big)^2 \bigg\}^{1/2} > \lambda_2 \, \bigg\}
\end{align*}
and have
\begin{align*} 
&  \frac{1}{pn} \sum_{i=1}^p \sum_{(j, k) \in \mathcal{R}^1}\Big( C_i^{(j, k)} \cdot \mathbb{I} \big\{ \mathcal{B} \big\} - \mu_i^{(j, k)} \Big)^2  \\
& =  \frac{1}{pn} \sum_{i=1}^p \sum_{(j, k) \in \mathcal{R}^1}\Big( C_i^{(j, k)} \cdot \mathbb{I} \big\{ \mathcal{B} \big\} - C_i^{(j, k)} + C_i^{(j, k)} - \mu_i^{(j, k)} \Big)^2  \\
&  \leq  \frac{1}{pn} \sum_{(j, k) \in \mathcal{R}^1} \sum_{i=1}^p \Bigg[ 2\Big( C_i^{(j, k)} \Big)^2 \cdot \mathbb{I}\bigg( \max_i \big|C_i^{(j, k)} \big| \leq \lambda_\infty \text{ and }\bigg\{\sum_{i=1}^p \big(C_i^{(j', k')}\big)^2 \bigg\}^{1/2} \leq \lambda_2 \bigg) \\
& \qquad\qquad\qquad\quad + 2\Big( C_i^{(j, k)} - \mu_i^{(j, k)}  \Big)^2 \Bigg]\\
&  \leq  \frac{1}{pn} \sum_{(j, k) \in \mathcal{R}^1}  \max_\ell \Big\{\mathcal{S}_\ell \cdot 4 c_1^2 \log (n) \wedge 4 c_2^2(p + \log n) \Big\}. \numberthis \label{ub} 
\end{align*}
The last inequality is obtained from the facts that
\begin{align*}
&  2\Big( C_i^{(j, k)} \Big)^2 \cdot \mathbb{I}\Big( \max_i \big|C_i^{(j, k)} \big| \leq \lambda_\infty \Big) +  2\Big( C_i^{(j, k)} - \mu_i^{(j, k)}  \Big)^2 \leq 2\lambda_\infty^2 + 2 c_1^2 \log (n) 
\end{align*}
and
\begin{align*} 
& 2\Bigg\{ \sum_{i=1}^p \bigg( C_i^{(j, k)} \cdot \mathbb{I}\Big( \Big\{ \sum_{i=1}^p (C_i^{(j', k')})^2 \Big\}^{1/2} \leq \lambda_2 \Big) \bigg)^2 +  \sum_{i=1}^p \Big( C_i^{(j, k)} - \mu_i^{(j, k)}  \Big)^2  \Bigg\} \leq 2\lambda_2^2 + 2c_2^2p 
\end{align*}

Combining with the upper bound of $J$, we have $|\mathcal{R}^1| \leq N \lceil \log (n) / \log (1/(1-\rho))\rceil$. Therefore, from \eqref{ub} we have 
\begin{equation} \label{l2_infty}
\| {\tilde{\b f}}-\b{f} \|_{p, n}^2 \; \leq \; \frac{1}{n} \Bigg[\min\big(c_1^2 \log(n), c_2^2\frac{p+\log n}{p}\big) + 4N \biggl\lceil\frac{\log (n)}{\log (1/(1-\rho))}\biggr \rceil \max_\ell \bigg\{ c_1^2 \frac{\mathcal{S}_\ell\log (n)}{p} \wedge c_2^2\frac{p+\log n}{p} \bigg\} \Bigg]
\end{equation}
Also, at each scale, the estimated change-points are obtained up to size $N$, combining it with the largest scale $J$, the number of change-points in ${\tilde{\b f}}$ is up to $O(N\log(n))$.


\vspace{10pt}

\noindent {\bf Proof of Theorem \ref{thm2}.}
The estimator ${\dbtilde{\b f}}$ is obtained by repeating the first three steps of the BUHDA procedure but applying only merge passes in tree construction in a greedy way, which allows only one merge at each scale. Thus the change-points in ${\dbtilde{\b f}}$ are a subset of those in ${\tilde{\b f}}$. Let $\tilde{B}$ and $\dbtilde{B}$ be the wavelet bases corresponding to ${\tilde{\b f}}$ and ${\dbtilde{\b f}}$, respectively. Then $\dbtilde{B}$ is classified into two categories: 1) all basis vectors $\psi^{(j, k)} \in \tilde{B}$ such that $\psi^{(j, k)}$ is not associated with the change-points in ${\tilde{\b f}}$ and the conditions, $\max_i \big| C_i^{(j, k)} \big|  < \lambda_\infty$ and $\big\{\sum_{i=1}^p \big(C_i^{(j', k')}\big)^2 \big\}^{1/2} < \lambda_2$, are satisfied and 2) all basis vectors $\tilde{\psi}^{(j, 1)}$ produced in Stage 1 of post-processing.

As the number of scales used plays an important role in controlling the $l_2$ behavior of $\| {\tilde{\b f}}-\b{f} \|_{p, n}^2$, we now investigate how many scales are used in each category defined above. In the first category, $C^{(j, k)}$ corresponding to the basis vectors $\psi^{(j, k)} \in \tilde{B}$ live on no more than $J$ scales and thus we have $|\mathcal{R}^1| \leq N \lceil \log (n) / \log (1/(1-\rho))\rceil$ by the same argument used in the proof of Theorem \ref{thm1}. Considering the second category, the basis vectors $\psi^{(j, 1)}$ correspond to different change-points in $\tilde{f}$ and there exist at most $\tilde{N}=O(N\log n)$ change-points in $\tilde{\b f}$  which we examine one at once, thus at most $\tilde{N}$ scales are required for $C^{(j, 1)}$. 
Combining these results with \eqref{ub}, the equivalent quantity of $\mathit{II}$ for $\dbtilde{\b f}$ is kept the same as $\tilde{\b f}$, and this gives us the following $l_2$ result
\begin{equation}
    \big\| {\dbtilde{\b f}} - \b{f} \big\|_{p, n}^2 \; =  \; O\Bigg( N\frac{\log n}{n} \max_\ell \bigg\{ \frac{\mathcal{S}_\ell\log n}{p} \wedge \Big(1+\frac{\log n}{p}\Big) \bigg\} \Bigg)
\end{equation}
under the same assumptions given in Theorem \ref{thm1}.

Finally, we show that there exist at most two change-points in $\dbtilde{\b f}$ between two consecutive change-points $(\eta_\ell, \eta_{\ell+1})$ for $\ell=0, \ldots, N$, where $\eta_0=0$ and $\eta_{N+1}=n$. If three estimated change-points, ($\dbtilde{\eta}_l, \dbtilde{\eta}_{l+1}, \dbtilde{\eta}_{l+2}$), lie between a pair of true change-points, $(\eta_\ell, \eta_{\ell+1})$, then $\max_i \Big|C_{i; \dbtilde{\eta}_l, \dbtilde{\eta}_{l+1}, \dbtilde{\eta}_{l+2}}\Big| < \lambda_\infty$ and $\big\{\sum_{i=1}^p \big(C_{i; \dbtilde{\eta}_l, \dbtilde{\eta}_{l+1}, \dbtilde{\eta}_{l+2}}\big)^2 \big\}^{1/2} < \lambda_2$ by Lemmas \ref{lm5.1} and \ref{lm5.2}. In other words, both $L_\infty$- and $L_2$-aggregated CUSUM statistics computed from the adjacent intervals, $(\dbtilde{\eta}_l, \dbtilde{\eta}_{l+1}]$ and $(\dbtilde{\eta}_{l+1}, \dbtilde{\eta}_{l+2}]$, are less than $\lambda_\infty$ and $\lambda_2$, respectively, so $ \dbtilde{\eta}_{l+1}$ would be removed from the set of estimated change-points. This satisfies $\dbtilde{N} \leq 2(N+1)$.


\vspace{10pt}

\noindent {\bf Proof of Theorem \ref{thm3}.}
We work on the event, whose probability approaches 1, on which the conclusion of Theorem~\ref{thm2} holds. In particular, we assume that for some sufficiently large $C>0$, we have $\big\| {\dbtilde{\b f}} - \b{f} \big\|_{p, n}^2 \leq \frac{C}{4} R_{p, n}$. Furthermore, from~\eqref{Eq:thm3Assumption}, we have for some large $C'>0$ that $C'pnR_{p, n} \big(\Delta^\ell_{p, n}\big)^{-1}  < \min\{\delta_{p, n}^{\ell-1},\delta_{p, n}^\ell\}$ for all $\ell=1, \ldots, N$. We write $r^\ell_{p, n}:=\lfloor CpnR_{p, n} \big(\Delta^\ell_{p, n}\big)^{-1}  \rfloor$ for $\ell \in \{1,\ldots,N\}$.

Suppose for at least one $\ell \in \{1,\ldots, N\}$, there is no estimated change-point in ${\dbtilde{\b f}}$ within distance of $r^\ell_{p, n}$ of $\eta_\ell$. This implies that  ${\dbtilde{f}_{i, j}}$ is constant over the entire segment $\{\eta_\ell-r^\ell_{p, n}, \ldots, \eta_\ell+r^\ell_{p, n}\}$ for all $i=1, \ldots, p$. Hence 
\begin{equation} \label{A14}
\big\| {\dbtilde{\b f}} - \b{f} \big\|_{p, n}^2 \geq \frac{1}{pn} \sum_{i \in \Omega_\ell} \sum_{j=\eta_\ell-r^\ell_{p, n}}^{\eta_\ell+r^\ell_{p, n}} (\dbtilde{f}_{i, j} - f_{i, j})^2  \geq \frac{r^\ell_{p, n}}{2pn} \sum_{i \in \Omega_\ell} \big(f_{i, \eta_\ell+1} - f_{i, \eta_\ell} \big)^2  > \frac{C}{4} R_{p, n},
\end{equation} 

From here the same argument used in the proof of Theorem 3.3 in \citet{fryzlewicz2017tail} is applied. Throughout Stage 2 of our post-processing in Section \ref{pp2}, there exist two possible scenarios for each index $\ell_0$ of an estimated change-point; $\dbtilde{\eta}_{\ell_0}$ is either the closest estimated change-point of any $\eta_\ell$ or not.
\begin{enumerate}
    \item Suppose that $\dbtilde{\eta}_{\ell_0}$ is not the closest estimated change-point to the nearest true change-point. Without loss of generality, assume that $\eta_\ell$ is the nearest true change-point to $\dbtilde{\eta}_{\ell_0}$ and that $\eta_\ell \geq \dbtilde{\eta}_{\ell_0}$ (the case $\eta_\ell \leq \dbtilde{\eta}_{\ell_0}$ can be handled symmetrically). Since there exists at least one estimated change-point within the distance of $r_{p,n}^{\ell-1}$ to the true change-point $\eta_{\ell-1}$, we must have $\dbtilde{\eta}_{\ell_0-1} \geq \eta_{\ell-1} - r_{p,n}^{\ell-1}$ and $\dbtilde{\eta}_{\ell_0} \geq \eta_{\ell-1} + (\eta_{\ell}-\eta_{\ell-1})/2$. By condition~\eqref{Eq:thm3Assumption}, for a sufficiently large $C'$, we have $(\eta_{\ell} - \eta_{\ell-1})/2 = \delta^{\ell-1}_{p,n} / 2 \geq \frac{C'pn R_{p,n}}{2\Delta_{p,n}^{\ell-1}} \geq r_{p,n}^{\ell-1}$. Consequently, $u_{\ell_0} = \lfloor (\dbtilde{\eta}_{\ell_0-1} + \dbtilde{\eta}_{\ell_0} )/2\rfloor$ as defined in Stage 2 of the refine satisfies that $u_{\ell_0} \geq \eta_{\ell-1}$. On the other hand, since $\dbtilde{\eta}_{\ell_0}$ is not the closest estimated change for $\eta_{\ell}$, we must have $w_{\ell_0} = \lfloor (\dbtilde{\eta}_{\ell_0} + \dbtilde{\eta}_{\ell_0+1} )/2\rfloor < \eta_\ell$. Therefore, the interval $(u_{\ell_0}, w_{\ell_0}]$ contains no true change-point and Lemmas \ref{lm5.1} and \ref{lm5.2} guarantee that the corresponding CUSUM statistic satisfies $C^{L_2}_{u_{\ell_0}, v_{\ell_0}, w_{\ell_0}} \leq \lambda_{2} \text{ and } C^{L_\infty}_{u_{\ell_0}, v_{\ell_0}, w_{\ell_0}} \leq \lambda_{\infty}$ and thus $\dbtilde{\eta}_{\ell_0}$ gets removed.
    \item Suppose $\dbtilde{\eta}_{\ell_0}$ is the closest estimated change-point of a true one, $\eta_\ell$, and therefore it is within the distance of $r^\ell_{p, n}$ from $\eta_\ell$. If $\dbtilde{\eta}_{\ell_0}$ gets removed as it so happens that $C^{L_2}_{u_{\ell_0}, v_{\ell_0}, w_{\ell_0}} \leq \lambda_{2}$ and $C^{L_\infty}_{u_{\ell_0}, v_{\ell_0}, w_{\ell_0}} \leq \lambda_{\infty}$, there must be another $\dbtilde{\eta}_{j}$ within the distance of $C_{\ell_0}\cdot r^\ell_{p, n}$ from $\dbtilde{\eta}_{\ell_0}$ (and thus from $\eta_\ell$), where $C_{\ell_0}$ is a constant. 
    If there were no such $\dbtilde{\eta}_{j}$ on either side of $\dbtilde{\eta}_{\ell_0}$, as the segment $(u_{\ell_0}, v_{\ell_0}]$ includes the true change-point $\eta_\ell$, either $C^{L_2}_{u_{\ell_0}, v_{\ell_0}, w_{\ell_0}} > \lambda_{2}$ or $C^{L_\infty}_{u_{\ell_0}, v_{\ell_0}, w_{\ell_0}} > \lambda_{\infty}$ would occur by the construction and $\dbtilde{\eta}_{\ell_0}$ would not get removed. More in detail, depending on the sparsity level, the following holds.
    \begin{align}
C^{L_\infty}_{u_{\ell_0}, v_{\ell_0}, w_{\ell_0}} \geq \sqrt{\frac{r_{p, n}^\ell \Delta_{p, n}^\ell}{2\mathcal{S}}} \geq \sqrt{\frac{C_1pnR_{p, n}}{2\mathcal{S}}} \asymp \log^{3/2} n  > \lambda_\infty, \quad \text{if } \; \mathcal{S} \leq \frac{p}{\log n},\label{linfty_alt} \\ 
C^{L_2}_{u_{\ell_0}, v_{\ell_0}, w_{\ell_0}} \geq \sqrt{\frac{r_{p, n}^\ell \Delta_{p, n}^\ell}{2}} \geq \sqrt{\frac{C_2pnR_{p, n}}{2}} \asymp \sqrt{p}\log n > \lambda_2, \quad \text{if } \; \mathcal{S} > \frac{p}{\log n}, \label{l2_alt}
    \end{align}
where $\mathcal{S} = \max_\ell \mathcal{S}_\ell$. The first inequalities in \eqref{linfty_alt} and \eqref{l2_alt} hold by the definition of CUSUM statistics and the second hold because we already set $r^\ell_{p, n}:=\lfloor CpnR_{p, n} \big(\Delta^\ell_{p, n}\big)^{-1} \rfloor$ for $\ell \in \{1,\ldots,N\}$. The asymptotic equivalences hold because of the definition of $R_{p, n}$ given in \eqref{rpn} depending on the size of $\mathcal{S}$ and the last inequalities by $\lambda_\infty = c_1\log^{1/2}(n)$ and $\lambda_{2} = c_2 \sqrt{p + \log n}$.
\end{enumerate}
Due to the assumption given on the distance between two consecutive change-points, by case 2 above, once the algorithm is terminated, each true change-point $\eta_\ell$ must have an estimated change-point within the distance of $C'r^\ell_{p, n}$ where $C'$ is a constant. If there were two, the more remote one gets removed by case 1 as it is not the closest one. Consequently, each true change-point $\eta_\ell$ must have an estimated change-point within the distance of $C'r_{p,n}^\ell$, which contradicts the initial assumption that there is at least one $\ell \in \{1, \dots, N\}$ without an estimated change-point in $\tilde{\boldsymbol{f}}$ within distance of $r_{p,n}^\ell$ of $\eta_\ell$. This contradiction with the upper bound $\|\dbtilde{\b f} - \boldsymbol{f}\|_{p,n}^2 \leq \frac{C}{4} R_{p,n}$ via \eqref{A14} completes the proof.

\section{Extension to dependent non-Gaussian noise} \label{appendixB}
In this section, we extend our bottom-up methodology to more realistic settings where the noise $\{\ve_{i, t}\}_{i=1, \ldots, p, \; t=1, \ldots, n}$ in \eqref{model} has temporal dependence over $t$ for all $i$ and/or $\b{\ve}_{i, \cdot} = (\ve_{i, 1}, \ldots, \ve_{i, n})^\top$ has non-Gaussianity for all $i$. The specified settings are precisely defined later in this section. We will define larger thresholds to control the behavior of those sums $\sum_{t=t_1}^{t_2}\ve_{i, t}$ for dependent non-Gaussian data. The estimator of $\mu_i^{(j, k)}$ for $j \geq 1$ is obtained by applying the same pruning rule with different thresholds as follows.
\begin{equation} \label{muhat_alphamixing}
\begin{split}
\hat{\mu}_i^{(j, k)} = C^{(j, k)}_{i} \cdot \mathbb{I} \; \Bigg\{ \, & \exists (j', k') \in \mathcal{O}_{j, k} \quad \max_i\big|C_i^{(j', k')}\big| > \lambda_\infty^* \\
& \text{ or } \bigg\{\sum_{i=1}^p \big(C_i^{(j', k')}\big)^2 \bigg\}^{1/2} > \lambda_2^* \, \Bigg\}, \; i=1,\ldots, p,
\end{split}
\end{equation}
where $\mathbb{I}$ is an indicator function and $\mathcal{O}_{j, k}$ is defined as in \eqref{e5.2.11}.


\subsection{Theoretical behavior of the estimators} \label{nonG_error}

\begin{Thm} \label{thm1.dep}
\normalfont  Let the distribution of $\varepsilon_{i, t}$ in model~(1) of the main article be as follows:
\begin{enumerate}[label=(\alph*)]
    \item For all $i=1, \ldots, p$, $\{\varepsilon_{i, t}\}_t$ has mean zero and satisfies Cramer's conditions that
\begin{equation*}
    E|\varepsilon_{i, t}|^k \leq c^{k-2} k! E(\varepsilon_{i, t}^2) < \infty, \quad t=1, \ldots, n, \quad k=3, 4, \ldots,
\end{equation*}
where $c>0$.
\item For all $i=1, \ldots, p$, $\{\varepsilon_{i, t}\}_t$ is $\alpha$-mixing with $\alpha(k) \leq c \nu^k$, $c>0$, $0<\nu<1$.
\end{enumerate}

Let the thresholds satisfy $\lambda^*_\infty = c_1 \log^2(n)$ and $\lambda^*_2 = c_2 \sqrt{p}\log^2(n)$ with sufficiently large constants $c_1 > 0$ and $c_2 > 0$. Let the dimension $p$ satisfy $p \sim n^\alpha$ for some fixed $\alpha \in (0, \infty)$. On the set $D_{p, n}$ defined by
\begin{align*} 
    D_{p, n}=\bigg\{ \forall i = 1, \ldots, p \; \text{ and } \; \forall 1 \leq u \leq v \leq w \leq n,\quad \frac{1}{\sqrt{w-u+1}} \; \Bigg| \sum_{t=u}^w \varepsilon_{i, t} \Bigg| \leq c_4 \log^2(n) \bigg\} \numberthis \label{atl}
\end{align*}
which satisfies $P(D_{p, n}) \rightarrow 1$ as $n \rightarrow \infty$ for a sufficiently large constant $c_4>0$, then we have
\begin{equation} \label{ethm1}
\| {\tilde{\b f}}-\b{f} \|_{p, n}^2 \; \leq \; \frac{1}{n} \Bigg[\log^4(n)\min(c_1^2, c_2^2) + 4N \biggl\lceil\frac{\log n}{\log (1/(1-\rho))}\biggr \rceil 
\log^4(n) \max_\ell \bigg\{c_1^2 \frac{\mathcal{S}_\ell}{p} \wedge c_2^2 \bigg\} \Bigg].
\end{equation}
\end{Thm}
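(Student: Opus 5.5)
The plan has two parts. First, show that $P(D_{p,n})\to 1$. Second, rerun the argument in the proof of Theorem~\ref{thm1} on $D_{p,n}$ with the enlarged thresholds $\lambda_\infty^*,\lambda_2^*$.

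\textbf{Probability of $D_{p,n}$.} Fix $i$ and $1\le u\le w\le n$. Under the Cram\'er condition (a) and the geometric $\alpha$-mixing condition (b), a Bernstein-type inequality for strongly mixing sequences applies (e.g.\ Theorem~1.4 of \citet{bosq1998nonparametric}). Splitting the sum into blocks of length $q\asymp\log n$, it gives, for $m=w-u+1$ and $x=c_4\log^2(n)$,
\[
P\Bigl(m^{-1/2}\Bigl|\sum_{t=u}^{w}\varepsilon_{i,t}\Bigr|>x\Bigr)\;\le\; C_a\exp\bigl(-c_b\, x^{2/3}\bigr)+C_c\, m\,\nu^{q}.
\]
The rate $x^{2/3}$, or the Bernstein form $x^2/(\sigma^2+x\,q/\sqrt m)$, is the reason the threshold scales like $\log^2 n$ rather than $\sqrt{\log n}$. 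With $x=c_4\log^2 n$ the first term is at most $\exp(-c_b c_4^{2/3}\log^{4/3}n)$, which beats any polynomial in $n$. Choosing $q=\lceil K\log n\rceil$ with $K$ large makes the mixing term $O(n^{-K'})$ for any prescribed $K'$.

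A union bound over the $pn^2\lesssim n^{2+\alpha}$ choices of $(i,u,w)$ then gives $1-P(D_{p,n})\to0$, since $p\sim n^\alpha$ only costs a polynomial factor. I expect this step to be the main obstacle. One has to pick a version of the mixing Bernstein inequality whose constants are uniform in $i$ and in the segment length. Short segments, with $m<q$, also need care: there I would instead bound $|\sum\varepsilon_{i,t}|\le m\max_t|\varepsilon_{i,t}|$ and use the Cram\'er tail of the individual $\varepsilon_{i,t}$, which gives $\max_{i,t}|\varepsilon_{i,t}|=O(\log n)$ with high probability.

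\textbf{Transfer to CUSUM statistics.} The noise part of $C_{i;u,v,w}$ is $\langle\psi_{u,v,w},\bve_i\rangle$. This equals a weighted difference of the two normalized partial sums over $(u,v]$ and $(v,w]$, and both weights are at most $1$. So on $D_{p,n}$ we have $\max_i|\langle\psi_{u,v,w},\bve_i\rangle|\le 2c_4\log^2 n\le\lambda_\infty^*$ and $\{\sum_i\langle\psi_{u,v,w},\bve_i\rangle^2\}^{1/2}\le 2c_4\sqrt p\log^2 n\le\lambda_2^*$ once $c_1,c_2\ge 2c_4$. These two bounds replace the events $A_{p,n}$ and $B_{p,n}$ of Lemmas~\ref{lm5.1} and~\ref{lm5.2}.

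\textbf{Repeating the proof of Theorem~\ref{thm1}.} The orthonormal Haar-type decomposition of $\|\tilde{\b f}-\b f\|_{p,n}^2$ into the coefficient sums over $\mathcal R^0$ and $\mathcal R^1$, plus the root coefficient, is unchanged.
\begin{itemize}
\item \emph{Terms in $\mathcal R^0$.} These coefficients have $\mu_i^{(j,k)}=0$. None of their descendants exceeds either threshold, so they are all killed and the term $\mathit I$ vanishes.
\item \emph{Root term.} It contributes at most $\frac1{pn}\sum_i(\lambda_\infty^*)^2$, or respectively $(\lambda_2^*)^2/(pn)$. This is bounded by $n^{-1}\log^4(n)\min(c_1^2,c_2^2)$.
\item \emph{Terms in $\mathcal R^1$.} As in \eqref{ub}, each coefficient contributes at most $2(\text{killed value})^2+2(\text{noise})^2$. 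The killed value is bounded either coordinatewise by $\lambda_\infty^*$ on the at most $\mathcal S_\ell$ active coordinates (the noise is handled by the same $\log^2 n$ bound), or in aggregate by $\lambda_2^*$. This gives $4\log^4(n)\{c_1^2\mathcal S_\ell\wedge c_2^2 p\}$ per coefficient.
\end{itemize}
Dividing by $pn$ and using $|\mathcal R^1|\le N\lceil\log n/\log(1/(1-\rho))\rceil$ yields \eqref{ethm1}.
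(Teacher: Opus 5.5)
Your route is the paper's. You use a Bernstein-type inequality for geometrically $\alpha$-mixing sequences under Cram\'er's condition from \citet{bosq1998nonparametric}, plus a union bound over the $pn^2$ choices of $(i,u,w)$, to get $P(D_{p,n})\to 1$. The paper invokes Theorem~1.6 there; your separate treatment of segments shorter than the block length is, if anything, more careful. You then use the bound $|\langle\psi_{u,v,w},\bve_i\rangle|\le\sqrt{2}\,c_4\log^2 n$ in place of Lemmas~\ref{lm5.1}--\ref{lm5.2} and rerun the proof of Theorem~\ref{thm1}. Up to that point, including the $\mathcal R^0$ and root terms, the argument is fine.

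The step that fails is your per-coefficient bound $4\log^4(n)\{c_1^2\mathcal S_\ell\wedge c_2^2p\}$ on $\mathcal R^1$. The keep/kill decision is made for the whole vector $(C_i^{(j,k)})_{i=1}^p$ at once. So when a coefficient in $\mathcal R^1$ survives, $\hat\mu_i^{(j,k)}-\mu_i^{(j,k)}=\langle\psi^{(j,k)},\bve_i\rangle$ for \emph{every} $i=1,\ldots,p$, not only for the changing coordinates. Its contribution $\sum_{i=1}^p\langle\psi^{(j,k)},\bve_i\rangle^2$ is bounded on $D_{p,n}$ only by $2c_4^2p\log^4 n$, and it is genuinely of order $p$ (a $\chi^2_p$ variable for i.i.d.\ $N(0,1)$ noise). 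Your phrase ``the noise is handled by the same $\log^2 n$ bound'' applies that bound to $p$ coordinates, not to $\mathcal S_\ell$ of them. Sparsity helps only the bias of \emph{killed} coefficients, where the error $\sum_i(\mu_i^{(j,k)})^2$ lives on the changing coordinates.

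Done correctly, the argument gives $4p\log^4(n)\min(c_1^2,c_2^2)$ per coefficient. That is, it yields \eqref{ethm1} with $\max_\ell\{c_1^2\mathcal S_\ell/p\wedge c_2^2\}$ replaced by $\min(c_1^2,c_2^2)$.

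The sparse branch cannot be rescued. For any piecewise-constant fit on a segmentation with $K$ pieces, $\|\tilde{\boldsymbol f}-\boldsymbol f\|_{p,n}^2\ge(pn)^{-1}\sum_{i}\|P\bve_i\|^2$, where $P$ is the projection onto that segmentation. For Gaussian noise this is, with high probability, at least $K/(2n)$ uniformly over all segmentations once $p\ge 32\log n$. Now take $\mathcal S_\ell=1$ with strong, well-separated changes, so that any fit with fewer than $N/2$ pieces pays a larger bias. Take also $N\gg\log^4 n$ and $NJ\ll p$, with $J=\lceil\log n/\log(1/(1-\rho))\rceil$. Then the error is at least of order $N/n$, while the right-hand side of \eqref{ethm1} is only of order $\log^4(n)/n$.

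Be aware that the paper's own proof has exactly this hole. It defers to the proof of Theorem~\ref{thm1}, whose display \eqref{ub} sums the per-coordinate bound over only $\mathcal S_\ell$ coordinates. Following the paper more closely would therefore not have closed this step.
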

The consistency rate in \eqref{ethm1} differs only by a logarithmic factor from the one obtained under iid Gaussian noise in the main article.

\textbf{Proof.} 
We first show that $P(D_{p, n}) \rightarrow 1$ as $n \rightarrow \infty$. For this, we consider the single sum $\sum_{t=1}^a \varepsilon_{i, t}$ for any $i \in \{1, 2, \ldots, p\}$. From Theorem 1.6 in \citet{bosq1998nonparametric}, we have
\begin{align*}
    P\Bigg(\frac{1}{\sqrt{a}}\Bigg|\sum_{t=1}^a \varepsilon_{i, t} \Bigg| > c\log^2(n) \Bigg) = & O\Big[\log^2(n) \exp{\Big
    \{-c'\log^2(n)\Big\}}\Big] \\ 
    & + O\Big[n \exp{\big \{-c''\log^2(n)\big\}}\Big] \numberthis \label{bosq2},    
\end{align*}
by setting $\varepsilon = \frac{\log_2(n)\log(n)}{\sqrt{a}}$ and $q=\bigg[\frac{a}{\log_2(n)\log(n)} + 1 \bigg]$ as in Theorem 1.6-(2) in \citet{bosq1998nonparametric}. Since there exist up to $n^2$ forms of $\sum_{t=1}^a \varepsilon_{i, t}$ for each $i\in \{1, 2, \ldots, p\}$, by multiplying $pn^2$ to the right-hand side of \eqref{bosq2} via Bonferroni correction, we have $P(D_{p, n}) \rightarrow 1$ as $n \rightarrow \infty$.

Under the null, on $D_{p, n}$, we have
\begin{align*}
    P\Big\{ \forall 0 \leq u < v < w \leq n \quad \max_{i} \; |C_{i; u, v, w}| \leq \lambda^*_{\infty} \Big\} \rightarrow 1,
\end{align*}
as $n \rightarrow \infty$, where $\lambda^*_\infty = c_1 \log^2(n)$ and $C_{i; u, v, w} = \langle \psi_{u, v, w}, \bve_{i, \cdot} \rangle$. This can be obtained directly from the set $D_{p, n}$ using the fact that
\begin{align*}
    |C_{i; u, v, w}| &= \Bigg| \Big\{\frac{w-v}{w-u}\Big\}^{1/2} \frac{\sum_{t=u+1}^v \varepsilon_{i, t}}{\sqrt{v-u} } - \Big\{\frac{v-u}{w-u}\Big\}^{1/2} \frac{\sum_{t=v+1}^w \varepsilon_{i, t}}{\sqrt{w-v} }\Bigg|\\
    &\leq c_4 \log^2(n) \frac{\sqrt{w-v} + \sqrt{v-u}}{\sqrt{w-u}} \numberthis \label{cbound},
\end{align*}
for all $i\in\{1, 2, \ldots, p\}$. This completes $L_\infty$ control of the CUSUM statistics under the null. 

Similarly, under the null on $D_{p, n}$, we have the $L_2$ control of the CUSUM statistics as follows
\begin{align*}
P\bigg\{ \forall 0 \leq u < v < w \leq n \quad \bigg\{ \sum_{i=1}^p \; |C_{i; u, v, w}|^2 \bigg\}^{1/2}  \leq \lambda_{2}^* \bigg\} \rightarrow 1, 
\end{align*}
as $n \rightarrow \infty$, where $\lambda^*_2 = c_2 \sqrt{p}\log^2(n)$.
This can be obtained from Bonferroni correction and the fact that
\begin{align*}
    \Bigg(\sum_{i=1}^p|C_{i; u, v, w}|^2\Bigg)^{1/2} \leq \Big[(c'_4)^2p \log^4(n) \Big]^{1/2},
\end{align*}
obtained from \eqref{cbound}, where $c'_4>0$ is a constant.

The consistency rate in \eqref{ethm1} can be obtained by following the same arguments used in the proof of Theorem \ref{thm1} with different thresholds. We can also obtain results equivalent to Theorems \ref{thm2} and \ref{thm3} by following exactly the same logic with the updated consistency rate in Theorem \ref{thm1.dep}. In conclusion, when the noise satisfies Cramer's conditions with exponential $\alpha$-mixing, larger thresholds are required, which affects the consistency rate in the theorems.

\section{Additional simulations} \label{extrasim}

\begin{table}[ht]
\centering
\resizebox{\textwidth}{!}{
\begin{tabular}{l|lcc}
  \hline
scenario &  sparsity & ($\mathcal{S}_1, \ldots, \mathcal{S}_N$) & ($\theta_1, \ldots, \theta_N$) \\
 \hline
\multirow{5}*{\parbox{4.5cm}{ (Freq) Frequent \\$n=150, p=50, N=4$ \\ $\b{\eta}=30 \cdot (1, 2, 3, 4)$}} & sparse & (1, 1, 1, 1) & (3, 3, 3, 3)\\ 
& dense & (35, 35, 35, 35) & (3.5, 3.5, 3.5, 3.5)\\
& mixed1 & (1, 7, 1, 7) & (3, 3, 3, 3)\\
& mixed2 & (1, 35, 35, 1) & (2.3, 2.8, 2.8, 2.3)\\
& mixed3 & (35, 1, 7, 35) & (2.8, 2.3, 2.5, 2.8)\\
\cline{1-4}
\multirow{3}*{\parbox{4.5cm}{(L.Freq) Less frequent \\$n=300, p=50, N=2$ \\ $\b{\eta}=(100, 200)$}} & sparse & (1, 1) & (1.5, 1.5)\\
& dense & (35, 35) & (1.5, 1.5)\\
& mixed & (1, 35) & (1.5, 1.5)\\
\hline 
\end{tabular}}
\caption{Parameter settings with different types of sparsity for scenarios (Freq) and (L.Freq).}
\label{scenarios_1}
\end{table}


\begin{table}[htbp] 
\centering
\resizebox{\textwidth}{!}{
\begin{tabular}{crrrrrrrrrrr}
  \hline
  & & \multicolumn{7}{c}{$\hat{N}-N$} & & & \\
  \cline{3-9}
 Scenario & Method & $\leq$-3 & -2 & -1 & 0 & 1 & 2 & $\geq$3 & MSE & ARI & time\\ 
  \hline
\multirow{9}*{(Freq)-sparse} 
& TD($L_2$) & 0 & 0 & 6 & 91 & 3 & 0 & 0 & 0.09 & 0.944 & 0.053 \\ 
   & TD($L_\infty$) & 0 & 0 & 0 & 96 & 4 & 0 & 0 & 0.04 & 0.965 & 0.072 \\ 
   & BU($L_2$) & 0 & 0 & 8 & 92 & 0 & 0 & 0 & 0.08 & 0.893 & 0.019 \\ 
   & BU($L_\infty$) & 0 & 0 & 0 & 99 & 1 & 0 & 0 & 0.01 & 0.917 & 0.018 \\ 
   & SBS & 2 & 59 & 32 & 6 & 1 & 0 & 0 & 2.87 & 0.572 & 0.056 \\ 
   & DC & 0 & 0 & 0 & 95 & 5 & 0 & 0 & 0.05 & 0.965 & 0.382 \\ 
   & INSPECT & 0 & 0 & 2 & 95 & 3 & 0 & 0 & 0.05 & 0.959 & 0.118 \\ 
   & scanEH & 0 & 0 & 0 & 95 & 5 & 0 & 0 & 0.05 & 0.966 & 0.380 \\ 
   & BUHDA & 0 & 0 & 0 & 98 & 2 & 0 & 0 & 0.02 & 0.919 & 0.021 \\ 
   \hline
  \multirow{9}*{(Freq)-dense} 
  & TD($L_2$) & 0 & 0 & 0 & 98 & 2 & 0 & 0 & 0.02 & 0.975 & 0.053 \\ 
   & TD($L_\infty$) & 14 & 80 & 6 & 0 & 0 & 0 & 0 & 4.52 & 0.504 & 0.067 \\ 
   & BU($L_2$) & 0 & 0 & 0 & 99 & 1 & 0 & 0 & 0.01 & 0.934 & 0.019 \\ 
   & BU($L_\infty$) & 43 & 57 & 0 & 0 & 0 & 0 & 0 & 6.15 & 0.464 & 0.017 \\ 
   & SBS & 0 & 0 & 2 & 95 & 3 & 0 & 0 & 0.05 & 0.937 & 0.059 \\ 
   & DC & 0 & 98 & 2 & 0 & 0 & 0 & 0 & 3.94 & 0.599 & 0.364 \\ 
   & INSPECT & 0 & 0 & 0 & 99 & 1 & 0 & 0 & 0.01 & 0.960 & 0.120 \\ 
   & scanEH & 0 & 0 & 4 & 96 & 0 & 0 & 0 & 0.04 & 0.959 & 0.376 \\ 
   & BUHDA & 0 & 0 & 1 & 91 & 8 & 0 & 0 & 0.09 & 0.894 & 0.021 \\ 
  \hline
\end{tabular}}
\caption {Distribution of $\hat{N}-N$ for scenarios (Freq)-sparse and (Freq)-dense and all methods over 100 simulation runs. Also the average MSE (Mean Squared Error) of the number of change-points detected, the average Adjusted Rand index (ARI) of the estimated segmentation against the true one, the average computational time in seconds using an Intel Core i9 3.6 GHz CPU with 40 GB of RAM, all over 100 simulations.} \label{table:Freq1}
\end{table}

In the additional simulations, we explore two sets of scenarios, frequent (Freq) change-point settings and less frequent (L.Freq) change-point settings, where the details can be found in Table \ref{scenarios_1}. SN is not included due to its difference in platform. As done in the main paper, for each set of scenarios, different types of sparsity are examined. 

Tables \ref{table:Freq1} and \ref{table:Freq2} summarize the results for frequent (Freq) change-point settings and have similar interpretations to those for Tables \ref{table:ld1} and \ref{table:ld2}. Table \ref{table:LFreq} shows the results of less frequent (L.Freq) change-point settings. BUHDA, scanEH and INSPECT show competitive performance across all settings. As expected, $L_2$ aggregation performs better than $L_\infty$ for both top-down and bottom-up methods in the dense and mixed cases.


\begin{table}[htbp] 
\centering
\resizebox{\textwidth}{!}{
\begin{tabular}{crrrrrrrrrrr}
  \hline
  & & \multicolumn{7}{c}{$\hat{N}-N$} & & & \\
  \cline{3-9}
 Scenario & Method & $\leq$-3 & -2 & -1 & 0 & 1 & 2 & $\geq$3 & MSE & ARI & time\\ 
  \hline
  \multirow{9}*{(Freq)-mixed1}
  & TD($L_2$) & 0 & 0 & 3 & 93 & 4 & 0 & 0 & 0.07 & 0.951 & 0.053 \\ 
   & TD($L_\infty$) & 0 & 39 & 41 & 20 & 0 & 0 & 0 & 1.97 & 0.724 & 0.071 \\ 
   & BU($L_2$) & 0 & 0 & 4 & 96 & 0 & 0 & 0 & 0.04 & 0.900 & 0.018 \\ 
   & BU($L_\infty$) & 0 & 59 & 32 & 9 & 0 & 0 & 0 & 2.68 & 0.656 & 0.018 \\ 
   & SBS & 0 & 57 & 39 & 4 & 0 & 0 & 0 & 2.67 & 0.625 & 0.056 \\ 
   & DC & 0 & 28 & 43 & 29 & 0 & 0 & 0 & 1.55 & 0.762 & 0.367 \\ 
   & INSPECT & 0 & 0 & 0 & 97 & 3 & 0 & 0 & 0.03 & 0.959 & 0.122 \\ 
   & scanEH & 0 & 0 & 0 & 99 & 1 & 0 & 0 & 0.01 & 0.975 & 0.378 \\ 
   & BUHDA & 0 & 0 & 3 & 96 & 1 & 0 & 0 & 0.04 & 0.900 & 0.021 \\ 
  \hline
  \multirow{9}*{(Freq)-mixed2} 
  & TD($L_2$) & 0 & 5 & 29 & 65 & 1 & 0 & 0 & 0.50 & 0.861 & 0.051 \\ 
   & TD($L_\infty$) & 0 & 35 & 64 & 1 & 0 & 0 & 0 & 2.04 & 0.622 & 0.082 \\ 
   & BU($L_2$) & 0 & 13 & 34 & 53 & 0 & 0 & 0 & 0.86 & 0.787 & 0.019 \\ 
   & BU($L_\infty$) & 0 & 75 & 25 & 0 & 0 & 0 & 0 & 3.25 & 0.512 & 0.018 \\ 
   & SBS & 0 & 75 & 24 & 1 & 0 & 0 & 0 & 3.24 & 0.587 & 0.057 \\ 
   & DC & 0 & 44 & 56 & 0 & 0 & 0 & 0 & 2.32 & 0.616 & 0.458 \\ 
   & INSPECT & 0 & 5 & 21 & 71 & 2 & 1 & 0 & 0.47 & 0.864 & 0.116 \\ 
   & scanEH & 0 & 1 & 44 & 55 & 0 & 0 & 0 & 0.48 & 0.845 & 0.403 \\ 
   & BUHDA & 0 & 0 & 12 & 84 & 4 & 0 & 0 & 0.16 & 0.838 & 0.021 \\ 
  \hline
  \multirow{9}*{(Freq)-mixed3} 
  & TD($L_2$) & 0 & 0 & 27 & 72 & 1 & 0 & 0 & 0.28 & 0.901 & 0.051 \\ 
   & TD($L_\infty$) & 10 & 88 & 2 & 0 & 0 & 0 & 0 & 4.44 & 0.551 & 0.064 \\ 
   & BU($L_2$) & 0 & 1 & 34 & 65 & 0 & 0 & 0 & 0.38 & 0.828 & 0.018 \\ 
   & BU($L_\infty$) & 31 & 68 & 1 & 0 & 0 & 0 & 0 & 5.52 & 0.494 & 0.016 \\ 
   & SBS & 0 & 7 & 80 & 13 & 0 & 0 & 0 & 1.08 & 0.720 & 0.068 \\ 
   & DC & 2 & 95 & 3 & 0 & 0 & 0 & 0 & 4.01 & 0.579 & 0.356 \\ 
   & INSPECT & 0 & 0 & 21 & 79 & 0 & 0 & 0 & 0.21 & 0.881 & 0.119 \\ 
   & scanEH & 0 & 12 & 45 & 43 & 0 & 0 & 0 & 0.93 & 0.817 & 0.362 \\ 
   & BUHDA & 0 & 1 & 20 & 76 & 3 & 0 & 0 & 0.27 & 0.830 & 0.021 \\ 
   \hline
\end{tabular}}
\caption {Distribution of $\hat{N}-N$ for scenario (Freq)-mixed1--3 and all methods over 100 simulation runs. Also the average MSE (Mean Squared Error) of the number of change-points detected, the average Adjusted Rand index (ARI) of the estimated segmentation against the true one, the average computational time in seconds using an Intel Core i9 3.6 GHz CPU with 40 GB of RAM, all over 100 simulations.} \label{table:Freq2}
\end{table}


\begin{table}[h!] 
\centering
\resizebox{\textwidth}{!}{
\begin{tabular}{crrrrrrrrrrr}
  \hline
  & & \multicolumn{7}{c}{$\hat{N}-N$} & & &\\
  \cline{3-9}
 Scenario & Method & $\leq$-3 & -2 & -1 & 0 & 1 & 2 & $\geq$3 & MSE & ARI & time\\ 
  \hline
\multirow{9}*{(L.Freq)-sparse} 
  & TD($L_2$) & 0 & 0 & 0 & 98 & 2 & 0 & 0 & 0.02 & 0.985 & 0.182 \\ 
   & TD($L_\infty$) & 0 & 0 & 0 & 84 & 16 & 0 & 0 & 0.16 & 0.954 & 0.245 \\ 
   & BU($L_2$) & 0 & 0 & 0 & 98 & 2 & 0 & 0 & 0.02 & 0.958 & 0.030 \\ 
   & BU($L_\infty$) & 0 & 0 & 0 & 100 & 0 & 0 & 0 & 0.00 & 0.947 & 0.032 \\ 
   & SBS & 0 & 0 & 0 & 74 & 26 & 0 & 0 & 0.26 & 0.922 & 0.220 \\ 
   & DC & 0 & 0 & 0 & 97 & 3 & 0 & 0 & 0.03 & 0.980 & 1.649 \\ 
   & INSPECT & 0 & 0 & 0 & 98 & 2 & 0 & 0 & 0.02 & 0.986 & 0.395 \\ 
   & scanEH & 0 & 0 & 0 & 98 & 2 & 0 & 0 & 0.02 & 0.984 & 1.665 \\ 
   & BUHDA & 0 & 0 & 0 & 99 & 1 & 0 & 0 & 0.01 & 0.953 & 0.034 \\ 
  \hline
\multirow{9}*{(L.Freq)-dense} 
  & TD($L_2$) & 0 & 0 & 0 & 99 & 1 & 0 & 0 & 0.01 & 0.987 & 0.176 \\ 
   & TD($L_\infty$) & 0 & 0 & 60 & 40 & 0 & 0 & 0 & 0.60 & 0.646 & 0.225 \\ 
   & BU($L_2$) & 0 & 0 & 0 & 99 & 1 & 0 & 0 & 0.01 & 0.955 & 0.028 \\ 
   & BU($L_\infty$) & 0 & 0 & 95 & 5 & 0 & 0 & 0 & 0.95 & 0.557 & 0.029 \\ 
   & SBS & 0 & 0 & 0 & 96 & 4 & 0 & 0 & 0.04 & 0.974 & 0.220 \\ 
   & DC & 0 & 0 & 66 & 33 & 1 & 0 & 0 & 0.67 & 0.694 & 1.582 \\ 
   & INSPECT & 0 & 0 & 0 & 100 & 0 & 0 & 0 & 0.00 & 0.981 & 0.394 \\ 
   & scanEH & 0 & 0 & 0 & 99 & 1 & 0 & 0 & 0.01 & 0.983 & 1.654 \\ 
   & BUHDA & 0 & 0 & 0 & 95 & 5 & 0 & 0 & 0.05 & 0.936 & 0.036 \\ 
  \hline
\multirow{9}*{(L.Freq)-mixed} 
& TD($L_2$) & 0 & 0 & 0 & 99 & 1 & 0 & 0 & 0.01 & 0.990 & 0.176 \\ 
   & TD($L_\infty$) & 0 & 0 & 38 & 62 & 0 & 0 & 0 & 0.38 & 0.768 & 0.237 \\ 
   & BU($L_2$) & 0 & 0 & 0 & 99 & 1 & 0 & 0 & 0.01 & 0.958 & 0.029 \\ 
   & BU($L_\infty$) & 0 & 0 & 79 & 21 & 0 & 0 & 0 & 0.79 & 0.622 & 0.030 \\ 
   & SBS & 0 & 0 & 0 & 90 & 10 & 0 & 0 & 0.10 & 0.954 & 0.219 \\ 
   & DC & 0 & 0 & 39 & 60 & 1 & 0 & 0 & 0.40 & 0.813 & 1.613 \\ 
   & INSPECT & 0 & 0 & 0 & 99 & 1 & 0 & 0 & 0.01 & 0.986 & 0.396 \\ 
   & scanEH & 0 & 0 & 0 & 99 & 1 & 0 & 0 & 0.01 & 0.990 & 1.668 \\ 
   & BUHDA & 0 & 0 & 0 & 96 & 4 & 0 & 0 & 0.04 & 0.949 & 0.036 \\ 
   \hline
\end{tabular}}
\caption {Distribution of $\hat{N}-N$ for scenario (L.Freq) and all methods over 100 simulation runs. Also the average MSE (Mean Squared Error) of the number of change-points detected, the average Adjusted Rand index (ARI) of the estimated segmentation against the true one, the average computational time in seconds using an Intel Core i9 3.6 GHz CPU with 40 GB of RAM, all over 100 simulations.} \label{table:LFreq}
\end{table}

\section{Connections between bottom-up tree construction and wavelet transform}\label{wt}


Constructing a bottom-up tree via Algorithm \ref{algo_bu} is the same as applying a conditionally orthonormal wavelet transform to the data matrix. \citet{fryzlewicz2017tail} studies this connection in the univariate setting where $p=1$, and here we extend this to the high-dimensional setting. 
We first define the smooth-type coefficient matrix $\b{S}$. At the initial stage of merging, we assign the initial $\b{S}$ to be the data matrix as below:
\begin{align} \label{initial}
\renewcommand{\arraystretch}{1}
\b{S} =
\begin{pmatrix} 
  s_{1, (0, 1]} &  s_{1, (1, 2]} &  \ldots & s_{1, (n-1, n]} \\
 s_{2, (0, 1]} & s_{2, (1, 2]} &  \ldots & s_{2, (n-1, n]} \\
 \vdots & \vdots & \vdots  & \vdots  \\
 s_{p, (0, 1]} &  s_{p, (1, 2]} &  \ldots & s_{p, (n-1, n]} \\
\end{pmatrix}_{p \times n} 
= \hskip 1em \begin{pmatrix} 
X_{1, 1} & X_{1, 2} & \ldots & X_{1, n} \\ X_{2, 1} & X_{2, 2} & \ldots & X_{2, n} \\ \vdots & \vdots & \vdots & \vdots \\ X_{p, 1} & X_{p, 2} & \ldots & X_{p, n}\\
\end{pmatrix}_{p \times n}.
\end{align}
Each column of $\b{S}$ shows the initial nodes in the current layer, $(0, 1], (1, 2], \ldots, (n-1, n]$. As merges are performed, the set of current layer nodes and the corresponding columns of matrix $\b{S}$ in \eqref{initial} are updated. 


From a wavelet-transform viewpoint, merging a pair of neighboring nodes e.g. $(u, v], (v, w]$ is equivalent to applying a local orthonormal transformation to the scaled mean vector called smooth coefficients as follows:
\begin{align} \label{ss}
\begin{pmatrix} 
s_{i, (u, w]}  \\ 
C_{i; u, v, w} 
\end{pmatrix} = 
\begin{pmatrix} 
a_{u, v, w} & b_{u, v, w}  \\ 
-b_{u, v, w} & a_{u, v, w} \\
\end{pmatrix}
\begin{pmatrix} 
s_{i, (u, v]} \\
s_{i, (v, w]}
\end{pmatrix}, \quad i=1, \ldots, p,
\end{align}
where 
\begin{equation}\label{s}
s_{i, (u, w]} = (w-u)^{-1/2} \sum_{t=u+1}^w X_{i, t}, \quad i=1, \ldots, p,
\end{equation}
and 
\begin{equation}\label{ab}
a_{u, v, w} = \sqrt{(w-v)/(w-u)}, \quad b_{u, v, w} = \sqrt{(v-u)/(w-u)}.
\end{equation}
The transform in \eqref{ss} shows that applying an orthonormal transform updates two smooth-type constant vectors, ($s_{i, (u, v]}, s_{i, (v, w]}$), in $\b{S}$ to one smooth-type vector ($s_{i, (u, w]}$) and one CUSUM statistic vector ($C_{i; u, v, w}$), where $s_{i, (u, w]}$ corresponds to the scaled mean vector of merged segments while $C_{i; u, v, w}$ represents the scaled difference between two segments. In a different view, after performing a merge, a pair of current layer nodes, $(u, v], (v, w]$, are replaced with one top-layer node, $(u, w]$, which reduces the total number of current layer nodes by one. Note that in the wavelet literature, a CUSUM statistic is often called a detail-type coefficient.

Constructing a bottom-up tree in Algorithm \ref{algo_bu} is equivalent to recursively applying the local orthonormal transform in \eqref{ss} to a chosen pair of smooth-type coefficient vectors in the data matrix until only one smooth-type coefficient vector is left (i.e. until only one current layer node is left). 
These transforms produce a data-adaptive multiscale decomposition of the data matrix and eventually convert the input data matrix $\b X$ of dimension $p \times n$ into the matrix containing one column of smooth coefficients and $n-1$ columns of CUSUM statistics as follows: 
\begin{align} \label{transform}
\renewcommand{\arraystretch}{1}
\begin{pmatrix} 
 s_{1, (0, n]} &  {C}_{1; \cdot, \cdot, \cdot} &  \ldots & {C}_{1; \cdot, \cdot, \cdot} \\
 s_{2, (0, n]} & {C}_{2; \cdot, \cdot, \cdot} &  \ldots & {C}_{2; \cdot, \cdot, \cdot} \\
 \vdots & \vdots & \vdots  & \vdots  \\
 s_{p, (0, n]} &  {C}_{p; \cdot, \cdot, \cdot} &  \ldots & {C}_{p; \cdot, \cdot, \cdot} \\
\end{pmatrix}_{p \times n} 
= \hskip 1em \begin{pmatrix} 
X_{1, 1} & X_{1, 2} & \ldots & X_{1, n} \\ X_{2, 1} & X_{2, 2} & \ldots & X_{2, n} \\ \vdots & \vdots & \vdots & \vdots \\ X_{p, 1} & X_{p, 2} & \ldots & X_{p, n}\\
\end{pmatrix}_{p \times n}
 \b\Psi_{n \times n},
\end{align}
where $\b\Psi$ is an orthonormal unbalanced wavelet basis for $\mathbb{R}^n$ obtained in an adaptive way. Note that ${C}_{i; \cdot, \cdot, \cdot}$ in \eqref{transform} is given without exact notation indicating the endpoints of the merged neighboring segments, because they are decided in an adaptive way. The columns of $\b\Psi$ correspond to the wavelet basis used in computing the smooth coefficients and CUSUM statistics as follows:
\begin{align*}
    & s_{i, (0, n]} = \langle (X_{i, 1}, \ldots, X_{i, n})^\top, \psi^{(0)} \rangle, \\
    & {C}_{i; u, v, w} = \langle (X_{i, 1}, \ldots, X_{i, n})^\top, \psi_{u, v, w} \rangle,
\end{align*}
where $\psi^{(0)}$ is the first column vector of $\b\Psi$ and $\psi_{u, v, w}$ is a general form of the remaining column vectors of $\b\Psi$. 

The orthonormality of the unbalanced wavelet basis, $\b\Psi$, implies Parseval's identity:
\begin{equation} \label{pi}
\sum_{t=1}^n (X_{i, t})^2 = \sum_{j=1}^J \sum_{k=1}^{K(j)} \big({C}_{i; \cdot, \cdot, \cdot}^{(j, k)}\big)^2 + (s_{i, (0, n]})^2, \quad \text{ for } i=1, \ldots, p,
\end{equation}
thus
\begin{equation} \label{pi2}
\sum_{t=1}^n (X_{i, t}-\bar{X}_{i})^2 = \sum_{j=1}^J \sum_{k=1}^{K(j)} \big({C}_{i; \cdot, \cdot, \cdot}^{(j, k)}\big)^2, \text{ for } i=1, \ldots, p,
\end{equation}
where $\bar{X}_{i} = 1/n \sum_{t=1}^n X_{i, t}$.
By construction, the CUSUM statistics obtained during early merges tend to be small in magnitude. Thus, Parseval's identity in \eqref{pi2} implies that a large portion of $\sum_{i=1}^p\sum_{t=1}^n (X_{i, t}-\bar{X}_{i})^2$ is explained by only a few large CUSUM statistics arising at coarse levels in a bottom-up tree. In other words, the resulting orthonormal transform of each data sequence tends to encode most of the energy of the signals in only a few CUSUM statistics arising in the later stages of the transform. In this sense, the orthonormal wavelet transform provides a sparse signal representation, which motivates thresholding as the next stage of the algorithm.

\end{document}

\section{Extension to dependent non-Gaussian noise} \label{appendixB}
In this section, we extend our bottom-up methodology to more realistic settings where the noise $\{\ve_{i, t}\}_{i=1, \ldots, n, \; t=1, \ldots, T}$ in \eqref{model} has temporal dependence over $t$ for all $i$ and/or $\b{\ve}_i = (\ve_{i, 1}, \ldots, \ve_{i, T})^\top$ has non-Gaussianity for all $i$. The specified settings are precisely defined later in this section. 
The key is keeping the same rate of the threshold, $O(\log^{1/2}(nT))$, that is used in the iid Gaussian noise $\b{\ve}_i \sim N_T(0, \sigma_i^2 I_T)$. For this, we will alter the estimators $\tilde{\b f}, \dbtilde{\b f}$ and $\hat{\b f}$ in a way that a strong asymptotic normality can be applied to the behavior of those sums $\sum_{t=t_1}^{t_2}\ve_{i, t}$ when $t_2-t_1$ is large as it enables us to use the Gaussian magnitude threshold for dependent non-Gaussian data.


We define the sets of location indices for the CUSUM statistics obtained from short-segment and long-segment coefficients at each scale $j$ as follows:
\begin{align*}
    &\mathcal{W}^S(a) = \{(j, k), j = 1, \ldots, J,\; k = 1, \ldots, K(j) : C^{(j, k)}_{\cdot, [p, q, r]} \text{ is such that } \; r-p \leq a \},\\ \numberthis \label{wl}
    &\mathcal{W}^L(a) = \{(j, k), j = 1, \ldots, J,\; k = 1, \ldots, K(j)\} \;\CS \;\mathcal{W}^S(a),
\end{align*}
where $a$ will be specified later this section. Those CUSUM statistics obtained from short segments are set to zero in the construction of the new estimators $\tilde{\b f}^L, \dbtilde{\b f}^L$ and $\hat{\b f}^L$, where $L$ in ${\b f}^L$ stands for ``Long-segment''.

The estimator of $\mu_i^{(j, k)}$ for $j \geq 1$ is obtained by applying the modified pruning rule with the added condition that the minimum segment length is longer than $a$ as follows.

\begin{equation} \label{newmuhat}
\hat{\mu}_i^{(j, k)} = C^{(j, k)}_{i} \cdot \mathbb{I} \; \bigg[ \, \exists (j', k') \in \mathcal{O}_{j, k} \cap \mathcal{W}^L(a),\quad \max_i\big|C_i^{(j', k')}\big| > \lambda_\infty \text{ or }\Big\{\sum_{i=1}^n \big(C_i^{(j', k')}\big)^2\Big\}^{1/2} > \lambda_2   \bigg],
\end{equation}
for $i=1,\ldots, n$ where $\mathbb{I}$ is an indicator function and $\mathcal{O}_{j, k}$ is defined as in \eqref{e5.2.11}. 

\subsection{Theoretical behavior of the length-lowerbounded-basis estimators} \label{nonG_error}
We now describe the behavior of the initial estimator $\tilde{f}^L$.

\begin{Thm} \label{thm1.dep}
\normalfont  Let the distribution of $\varepsilon_{i, t}$ in model~(1) of the main article be as follows:
\begin{enumerate}[label=(\alph*)]
    \item For all $i=1, \ldots, n$, $\{\varepsilon_{i, t}\}_t$ has mean zero and satisfies Cramer's conditions that
\begin{equation*}
    E|\varepsilon_{i, t}|^k \leq c^{k-2} k! E(\varepsilon_{i, t}^2) < \infty, \quad t=1, \ldots, T, \quad k=3, 4, \ldots,
\end{equation*}
where $c>0$.
\item For all $i=1, \ldots, n$, $\{\varepsilon_{i, t}\}_t$ is the stationary sequence and $m$-dependent i.e. $\sigma(\varepsilon_s, s \leq t)$ and $\sigma(\varepsilon_s, s \geq t+k)$ are independent for $k > m$. 
\end{enumerate}
Let $\bar{\b f} = \max_i(\max_t f_{i, t} - \min_t f_{i, t})$ be bounded and let the estimator $\tilde{\b f}^L$ is obtained from the estimator $\hat{\mu}^{(j, k)}$ in \eqref{newmuhat}, with $a=C \log(T)$ for large enough absolute constant $C$. Let the threshold satisfy $\lambda_\infty = C_1 \{2 \log (nT) \}^{1/2}$ with a constant $C_1 \geq \sqrt{(3+\alpha)/(1+\alpha)}$ and let the dimension $n$ satisfy $n \sim T^\alpha$ for some fixed $\alpha \in (0, \infty)$. On the set $A_{n, T}^L$ defined by
\begin{equation} \label{atl}
    A_{n, T}^L = \Bigg\{ \forall 1 \leq p \leq q \leq r \leq T \quad s.t. \quad r-p \geq C_2 \log T \quad \max_i |\langle \psi_{[p, q, r]}, \bve_i \rangle | \leq \lambda_{\infty} \Bigg \},
\end{equation}
which satisfies $P(A_{n, T}^L) \rightarrow 1$ as $T \rightarrow \infty$, then we have
\begin{equation} \label{ethm1}
\| {\tilde{\b f}^L}-\b{f} \|_{n, T}^2 \; \leq \; 2C_1^2\frac{1}{T}\log(nT) + \{C_3 \log T + 8C_1^2\log(nT)\} \; \Big(\frac{\max_\ell \mathcal{S}_\ell}{n}\Big) \; \frac{N}{T} \; \biggl\lceil\frac{\log (T)}{-\log (1-\rho)}\biggr \rceil,
\end{equation}
where $C_3>0$ is an absolute constant.
\end{Thm}

\textbf{Proof.} 
We first examine the behavior of the set $A_{n, T}^L$ under the distribution of $\varepsilon_{i, t}$ given in Theorem \ref{thm1.dep}. Since $\varepsilon_t$ is $m$-dependent, we have $\alpha(l)=0$ for $l > m$, where $\alpha(\cdot)$ is the $\alpha$-mixing coefficients of $\varepsilon_t$. The results below can in principle be applied to an interval with different ends given that the length of the interval is at least $a$. From Theorem 1.4 in \citet{bosq1998nonparametric}, if $m_2^2 < \infty$, for each $\epsilon > 0$ and for a constant $c > 0$, we obtain for any $i=1, \ldots, n$ that
\begin{equation}\label{bosq}
    P\Bigg(\Bigg|\sum_{t=1}^a \varepsilon_{i, t} \Bigg| > a\epsilon \Bigg) \leq a_1 \exp{\Bigg(-\frac{q\epsilon^2}{25m_2^2 + 5c\epsilon}\Bigg)} + a_2(k)\alpha\Bigg(\Bigg[\frac{a}{q+1}\Bigg]\Bigg)^{\frac{2k}{2k+1}},    
\end{equation}
where 
\begin{align*}
    & a_1 = 2\frac{a}{q} + 2\Bigg(1+\frac{\epsilon^2}{25m_2^2 + 5c\epsilon }\Bigg), \quad \text{with} \quad m_2^2 = \max_{1 \leq t \leq a} E\big(\varepsilon_{i, t}^2\big), \\
    & a_2(k) = 11n\Bigg(1+\Bigg(\frac{5m_k}{\epsilon}\Bigg)^{\frac{2k}{2k+1}}\Bigg), \quad \text{with} \quad m_k = \max_{1 \leq t \leq a} \big\|\varepsilon_{i, t} \big\|_k.
\end{align*}
By setting $\epsilon=\lambda/\sqrt{a}$, $a=C \log(T)$ and $\lambda=C_1 \log^{1/2} T$ for  large enough absolute constants $C>0$ and $C_1>0$, and setting $q=[c_1a]$ with a small $c_1$ (which gives $\Big[\frac{a}{q+1}\Big] \geq m+1$), we have that $a_1$ is bounded by a constant and $\alpha([a/(q+1)])=0$, thus \eqref{bosq} can be bounded as
\begin{equation*}
    P\Bigg(\Bigg|\sum_{t=1}^a \varepsilon_{i, t}\Bigg| > \sqrt{a}\lambda \Bigg) \leq  \exp{\Bigg(-\frac{q\frac{\lambda^2}{a}}{25m_2^2 + 5c\frac{\lambda}{\sqrt{a}}}\Bigg)} \leq \exp \{-C_2 \log T \} = T^{-C_2},
\end{equation*}
where an absolute constant $C_2>0$ is suitably large. 

Following the same argument used in the proof of Lemma \ref{lm5.1}, since there exist at most $T^3$ sub-intervals $[p, r]$, applying a simple Bonferroni inequality, we have $P(A_{n, T}^L) \rightarrow 1$ as $T \rightarrow \infty$ where $A_{n, T}^L$ is in \eqref{atl}.

We now turn to the behavior of the estimator $\tilde{\b f}^L$. On the set $A_{n, T}^L$, we have
\begin{align*} 
& \| {\tilde{\b f}^L}-\b{f} \|_{n, T}^2 \;  \\
& = \;  \frac{1}{n} \frac{1}{T} \sum_{i=1}^n  \sum_{j=1}^J \sum_{k=1}^{K(j)} \bigg( C_i^{(j, k)} \cdot \mathbb{I} \Big\{ \, \exists (j', k') \in \mathcal{O}_{j, k} \quad \max_i\big|C_i^{(j', k')}\big| > \lambda_\infty \quad \text{and} \quad  k' \in \mathcal{W}_{j'}^L(a) \, \Big\} - \mu_i^{(j, k)} \bigg)^2 \\
& \quad  + \; \frac{1}{nT} \sum_{i=1}^n (s_{i, [1, T]} - \mu_i^{(0, 1)})^2 \\
& \; \leq \; \frac{1}{n} \frac{1}{T} \sum_{j=1}^J \Bigg( \sum_{(i, k) \in \mathcal{R}^0_{j}} + \sum_{(i, k) \in \mathcal{R}^1_{j} \cap \mathcal{W}_{j}^S(a)} + \sum_{(i, k) \in \mathcal{R}^1_{j} \cap \mathcal{W}_{j}^L(a)} \Bigg) \\
& \quad \bigg( C_i^{(j, k)} \cdot \mathbb{I} \Big\{ \, \exists (j', k') \in \mathcal{O}_{j, k} \quad \max_i\big|C_i^{(j', k')}\big| > \lambda_\infty \quad \text{and} \quad k' \in \mathcal{W}_{j'}^L(a) \, \Big\} - \mu_i^{(j, k)} \bigg)^2 + \; 2C_1^2T^{-1} \log (nT) \\
& \; =: \; \mathit{I} + \mathit{II} + \mathit{III} + 2c_1^2T^{-1} \log (nT) \numberthis \label{e92}.
\end{align*}

Term $\mathit{I}$. Since $(i, k) \in \mathcal{R}^0_j$, on the set $A_{n, T}^L$, for all $(j', k') \in \mathcal{O}_{j, k}$, if $k' \in \mathcal{W}_{j'}^L(a)$, then $\max_i\big|C_i^{(j', k')}\big| \leq \lambda_\infty$. Also, by the fact that $\mu_i^{(j, k)} = 0$ for $(i,k) \in \mathcal{R}^0_j$ and $j=1, \ldots, J$, we have $\mathit{I} = 0$.

Term $\mathit{II}$. Due to the principle of bottom-up merging, there is no short-segment parent coefficients whose children is from long-segment. Therefore, the indicator function returns zero and the term $\mathit{II}$ is simplified to $ \frac{1}{n} \frac{1}{T} \sum_{j=1}^J \sum_{(i, k) \in \mathcal{R}^1_{j} \cap \mathcal{W}_{j}^S(a)} \big(\mu_i^{(j, k)}\big)^2$. From formula (3) in \citet{fryzlewicz2014wild}, we have
\begin{align*}
    \Big(\mu_{i, [p, q, r]}^{(j, k)}\Big)^2 \leq \frac{(q-p+1)(r-q)}{(r-p+1)} \Big(\max_t f_{i, t} - \min_t f_{i, t}\Big)^2.
\end{align*}
Since $(i, k) \in \mathcal{R}^1_{j} \cap \mathcal{W}_{j}^S(a)$, from the condition on the short segment we have $r-p+1 \leq a$, thus 
\begin{align*}
    \max_{i} \Big(\mu_{i, [p, q, r]}^{(j, k)}\Big)^2 \leq a (\bar{\b f})^2,
\end{align*}
where $\bar{\b f} = \max_i(\max_t f_{i, t} - \min_t f_{i, t})$. The number of terms $\mu_i^{(j, k)}$ is no more than the overall number of coefficients overlapping with true change-points $N$ multiplied my the maximum sparsity $(\max_\ell \mathcal{S}_\ell)$. Combining this with the upper bound of $J$, $\lceil \log (T) / \log (1-\rho)^{-1} \rceil$, we have 
\begin{align*}
    \mathit{II} \leq a (\bar{\b f})^2 \Big(\frac{\max_\ell \mathcal{S}_\ell}{n}\Big) \; \frac{1}{T} N \lceil \log (T) / \log (1-\rho)^{-1} \rceil.
\end{align*}
Combining this with $a=O(\log T)$ and the fact that $\bar{\b f}$ is bounded, we have
\begin{align*}
    \mathit{II} \leq C_3 \Big(\frac{\max_\ell \mathcal{S}_\ell}{n}\Big) \; \frac{1}{T} N \lceil \log (T) / \log (1-\rho)^{-1} \rceil \log(T),
\end{align*}
with an absolute constant $C_3>0$.

Term $\mathit{III}$. Denote $\mathcal{B} =  \mathbb{I} \Big\{ \, \exists (j', k') \in \mathcal{O}_{j, k} \quad \max_i\big|C_i^{(j', k')}\big| > \lambda_\infty \quad \text{and} \quad k' \in \mathcal{W}_{j'}^L(a) \, \Big\}$ and compute
\begin{align*} 
  \Big( C_i^{(j, k)} \cdot \mathbb{I} \big\{ \mathcal{B} \big\} - \mu_i^{(j, k)} \Big)^2  & =   \Big( C_i^{(j, k)} \cdot \mathbb{I} \big\{ \mathcal{B} \big\} - C_i^{(j, k)} + C_i^{(j, k)} - \mu_i^{(j, k)} \Big)^2  \\
&  \leq  2\Big( C_i^{(j, k)} \Big)^2 \cdot \mathbb{I}\Big( \max_i \big|C_i^{(j, k)} \big| \leq \lambda_\infty \quad \text{or} \quad k \in \mathcal{W}_{j}^S(a) \Big) +  2\Big( C_i^{(j, k)} - \mu_i^{(j, k)}  \Big)^2 \\ 
&  =  2\Big( C_i^{(j, k)} \Big)^2 \cdot \mathbb{I}\Big( \max_i \big|C_i^{(j, k)} \big| \leq \lambda_\infty \Big) +  2\Big( C_i^{(j, k)} - \mu_i^{(j, k)}  \Big)^2 \\ 
& \leq 2\lambda_\infty^2 + 2\lambda_\infty^2 \\
& \leq 8 C_1^2 \log (nT).
\end{align*}
Combining with the upper bound of $J$, $\lceil \log (T) / \log (1-\rho)^{-1} \rceil$, and the fact that $|\mathcal{R}^1_j| \leq N$, we have 
\begin{align*}
    \mathit{III} \leq 8 C_1^2 \log (nT) \Big(\frac{\max_\ell \mathcal{S}_\ell}{n}\Big) \; \frac{1}{T} N \lceil \log (T) / \log (1-\rho)^{-1} \rceil.
\end{align*}

Finally, combining all terms, we have
\begin{equation}
\| {\tilde{\b f}^L}-\b{f} \|_{n, T}^2 \; \leq \; 2C_1^2\frac{1}{T}\log(nT) + \{C_3 \log T + 8C_1^2\log(nT)\} \; \Big(\frac{\max_\ell \mathcal{S}_\ell}{n}\Big) \; \frac{N}{T} \; \biggl\lceil\frac{\log (T)}{-\log (1-\rho)}\biggr \rceil,
\end{equation}
Also, at each scale, the estimated change-points are obtained up to size $N$, combining it with the largest scale $J$, the number of change-points in ${\tilde{\b f}}$ is up to $O(N\log(T))$.

\begin{Thm} \label{thm2.dep}
Let the distribution of $\varepsilon_{i, t}$ in model (1) of the main article be as in Theorem \ref{thm1.dep}. Also, let $\bar{\b f} = \max_i(\max_t f_{i, t} - \min_t f_{i, t})$ be bounded. Let the estimator $\dbtilde{\b f}^L$ be constructed from $\tilde{\b f}^L$ of Theorem \ref{thm1.dep} by using Stage 1 of the post-processing with threshold $\lambda_\infty$ of Theorem \ref{thm1.dep}. Denote the number of change-points in $\dbtilde{\b f}^L$ by $\dbtilde{N}$. Let the true number of change-points has the order $N=O(\log T)$. On the set $A_{n, T}^L$, we have $\big\| {\dbtilde{\b f}^L} - \b{f} \big\|_{n, T}^2 = O(R_{p,n})$ where
\begin{equation}
    R_{p,n} = \max \bigg\{ \frac{(\alpha+1) \log T}{T} ,  \frac{(\alpha+1)\log^3 T}{T} \Big(\frac{\max_\ell \mathcal{S}_\ell}{n}\Big) \bigg\}.
\end{equation}
Furthermore, there exist at most two estimated change-points between each pair of true change-points $(\eta_\ell, \eta_{\ell+1})$ for $\ell=0, \ldots, N$; in particular, $\dbtilde{N} \leq 2(N+1)$.
\end{Thm}

\textbf{Proof.} 
The proof procees exactly the same as the proof of Theorem 2 of the main article.

\begin{Thm} \label{thm3.dep}
Let the distribution of $\varepsilon_{i, t}$ in model (1) of the main article be as in Theorem \ref{thm1.dep}. Also, let $\bar{\b f} = \max_i(\max_t f_{i, t} - \min_t f_{i, t})$ be bounded. Let the estimator $\hat{\b f}^L$ be constructed from $\dbtilde{\b f}^L$ of Theorem \ref{thm2.dep} by following Stage 2 of the post-processing with threshold $\lambda_\infty$ of Theorem \ref{thm1.dep}. Denote the number of change-points in $\hat{\b f}^L$ by $\hat{N}$. Let the true number of change-points has the order $N=O(\log T)$. Let $R_{p,n}$ as in Theorem \ref{thm2.dep}. On the set $A_{n, T}^L$, if $nT R_{p,n} = o\bigg( \min_\ell \bigg\{ \Big( \frac{1}{\Delta^\ell_{n, T}} + \frac{1}{\Delta^{\ell+1}_{n, T}} \Big)^{-1} \cdot \delta_{n, T}^\ell \bigg\} \bigg)$ where $\Delta^\ell_{n, T} = \sum_{i \in \Omega_\ell} \big(f_{i, \eta_\ell+1} - f_{i, \eta_\ell} \big)^2$,  $\delta_{n, T}^\ell = \eta_{\ell+1}-\eta_{\ell}$, then we have 
\begin{equation} 
\mathbb{P} \; \bigg( \hat{N}=N, \quad \max_{\ell=1, \ldots, N} \Big\{ |\hat{\eta}_\ell-\eta_\ell| \cdot \Delta^\ell_{n, T} \Big\} \leq C nT R_{p,n}  \bigg) \; \rightarrow \; 1,
\end{equation}
as $T \rightarrow \infty$ where $C$ is an absolute constant. 
\end{Thm}

\textbf{Proof.} 
The proof proceeds exactly the same as the proof of Theorem 3 of the main article.